\newcommand\numberthis{\addtocounter{equation}{1}\tag{\theequation}}
\newcommand{\fdfrac}[2]{\mbox{\footnotesize$\displaystyle\frac{#1}{#2}$}}
\newcommand{\vast}{\bBigg@{3.5}}
\newtheorem{definition}{Definition}[section]
\newtheorem{theorem}{Theorem}
\newtheorem{corollary}{Corollary}[theorem]
\newtheorem{lemma}[theorem]{Lemma}
\newcommand{\veps}{\varepsilon}
\newcommand{\X}{\mathcal{X}}
\newcommand{\st}[1][m]{ \sqrt{ #1 \left( #1 - 1 \right) } }
\newcommand{\del}[1][F]{\left( \sqrt{2}+1 \right) \delta(#1)}
\newcommand{\vc}[1][Vertex Cover ]{{\fontfamily{cmss}\selectfont
#1}}
\newtheorem{fact}{Fact}
\definecolor{mycolor}{rgb}{1, 0.0, 0.0}
\newcommand{\sinq}[2]{\stackrel{\mbox{\tiny{#1}}}{#2}}
\newcommand{\R}{\mathbb{R}}
\newcommand{\etal}{{\it{et al.}}}
\begin{document}
\title{\textbf{Hardness of Approximation of Euclidean $k$-Median}}

\author[1]{Anup Bhattacharya}
\author[2]{Dishant Goyal}
\author[2]{Ragesh Jaiswal }
\affil[1]{
{\small
Indian Statistical Institute Kolkata,
{
\tt bhattacharya.anup@gmail.com}
}}

\affil[2]{
{\small Indian Institute of Technology Delhi,
{\tt \{dishant.goyal, rjaiswal\}@cse.iitd.ac.in}
}
}


\date{}

\maketitle

\begin{abstract}
    The Euclidean $k$-median problem is defined in the following manner: given a set $\X$ of $n$ points in $\mathbb{R}^{d}$, and an integer $k$, find a set $C \subset \mathbb{R}^{d}$ of $k$ points (called centers) such that the cost function $\Phi(C,\X) \equiv \sum_{x \in \X} \min_{c \in C}  \|x-c\|_{2}$ is minimized.
    The Euclidean $k$-means problem is defined similarly by replacing the distance with squared distance in the cost function. Various hardness of approximation results are known for the Euclidean $k$-means problem \cite{hardness:acks15,hardness:2017_lee_schimdt,hardness:addad19}. 
    However, no hardness of approximation results were known for the Euclidean $k$-median problem. In this work, assuming the unique games conjecture (UGC), we provide the first hardness of approximation result for the Euclidean $k$-median problem. This solves an open question posed explicitly in the work of Awasthi \etal~\cite{hardness:acks15}.
    
    Furthermore, we study the hardness of approximation for the Euclidean $k$-means/$k$-median problems in the bi-criteria setting where an algorithm is allowed to choose more than $k$ centers. That is, bi-criteria approximation algorithms are allowed to output $\beta k$ centers (for constant $\beta>1$) and the approximation ratio is computed with respect to the optimal $k$-means/$k$-median cost.
    In this setting, we show the first hardness of approximation result for the Euclidean $k$-median problem for any $\beta < 1.015$, assuming UGC. We also show a similar bi-criteria hardness of approximation result for the Euclidean $k$-means problem with a stronger bound of $\beta < 1.28$, again assuming UGC. 
\end{abstract}

\newpage
\section{Introduction} \label{section:introduction}

We start by giving the definition of the Euclidean $k$-median problem.
\begin{definition}[$k$-median]
Given a set $\X$ of $n$ points in $\mathbb{R}^{d}$, and a positive integer $k$, find a set of centers $C \subset \mathbb{R}^{d}$ of size $k$ such that the cost function $\Phi(C,\X) \equiv \sum_{x \in \X} \min_{c \in C} \| x - c\|$ is minimized.
\end{definition}

\noindent 
The Euclidean $k$-means problem is defined similarly by replacing the distance with squared distance in the cost function (i.e., replacing $\| x - c\|$ with $\| x - c\|^{2}$).
These problems are also studied in the discrete setting where the centers are restricted to be chosen from a specific set $L \subset \mathbb{R}^{d}$, also given as input.
This is known as the \emph{discrete} version whereas the former version (with $L = \mathbb{R}^{d}$) is known as the \emph{continuous} version. 
\footnote{In the approximation setting, the continuous version is not harder than its discrete counterpart since it is known(e.g., \cite{fixed_k:feldman07,fixed_k:mat2000}) that an $\alpha$-approximation for the discrete problem gives an $\alpha + \veps$ approximation for the continuous version, for arbitrary small constant $\veps>0$. 
}
In this work, we discuss only the continuous version of the problem.
Henceforth, we will refer to the continuous Euclidean $k$-median/$k$-means problem as the Euclidean $k$-median/$k$-means problem or simply the $k$-median/$k$-means problem.

The relevance of the $k$-means and $k$-median problems in various computational domains such as resource allocation, big data analysis, pattern mining, and data compression is well known.
A significant amount of work has been done to understand the computational aspects of the $k$-means/median problems.
The $k$-means problem is known to be $\mathsf{NP}$-hard even for fixed $k$ or $d$~\cite{hardness:aloise09,hardness:das08,hardness:MNV09,hardness:V09}. 
Similar $\mathsf{NP}$ hardness result is also known for the $k$-median problem~\cite{Megiddo}.
Even the $1$-median problem, popularly known as the \emph{Fermat-Weber} problem~\cite{survey:facility_location_2001_drezner}, is a hard problem and designing efficient algorithms for this problem is a separate line of research in itself -- see for e.g.~\cite{fermat:algorithm_1659_torrcelli,fermat:algorithm_1937_Weiszfeld,fermat:algorithm_1989_Chandrasekaran,fermat:algorithm_2002_Badoiu_Har_Peled,fermat:algorithm_2016_Cohen_STOC}.
These hardness barriers motivates approximation algorithms for these problems and a lot of advancement has been made in this area.
For example, there are various polynomial time approximation schemes (PTASs) known for $k$-means and $k$-median when $k$ is fixed (or constant)~\cite{fixed_k:mat2000,fixed_k:kss02,fixed_k:feldman07,fixed_k:chen09,fixed_k:jks14}.
Similarly, various PTASs are known for fixed $d$ ~\cite{fixed_d:2016_addad_FOCS,fixed_d:2016_friggstad_FOCS,fixed_d:2018_vincent_SODA}. 
Various constant factor approximation algorithms are known for $k$-means and $k$-median even considering $k$ and $d$ as part of the input instead of fixed constants.
For the $k$-means problem, constant approximation algorithms have been given~\cite{kmeans:kanungo02,kmeans:Svensson17}, the best being a $6.357$ approximation algorithm by Ahmadian {\it et al.}~\cite{kmeans:Svensson17}.
On the negative side, the $k$-means problem is $\mathsf{NP}$-hard to approximate within any factor smaller than a particular constant greater than one~\cite{hardness:acks15,hardness:2017_lee_schimdt,hardness:addad19}. 
In other words, there exist a constant $\veps>0$ such that there does not exist an efficient $(1+\veps)$-approximation algorithm for the $k$-means problem, assuming $\mathsf{P} \neq \mathsf{NP}$. 
The best-known hardness of approximation result for the problem is $1.07$ due to Addad and Karthik~\cite{hardness:addad19}.
Constant factor approximation algorithms for the $k$-median problem are known~\cite{kmedian:1999_charikar,kmedian:2002_naveen,kmedian:2013_Svensson,kmedian:2015_byrka,kmeans:Svensson17}. 
The best known approximation guarantee for $k$-median is $2.633$ due to Ahmadian~\emph{et al.}~\cite{kmeans:Svensson17}. 
However, unlike the $k$-means problem, no hardness of approximation result was known for $k$-median problem. 
In fact, hardness of approximation for the $k$-median problem was left as an open problem in the work of Awasthi {\it et al.}~\cite{hardness:acks15} who proved the hardness of approximation for the $k$-means problem.
In this work, we solve this open problem by obtaining hardness of approximation result for the Euclidean $k$-median problem assuming that the Unique Games Conjecture holds.
Following is one of the main results of this work.

\begin{theorem}[Main Theorem] \label{theorem:main_theorem}
There exist a constant $\veps > 0$ such that the
Euclidean $k$-median problem cannot be approximated to a factor better than $(1 + \veps)$ assuming the Unique Games Conjecture. 
\end{theorem}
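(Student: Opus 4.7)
The plan is to adapt the reduction of Awasthi \etal~\cite{hardness:acks15} that established UGC-hardness for Euclidean $k$-means, replacing the squared-distance analysis with estimates tailored to the ordinary Euclidean distance. The starting point is the UGC-hardness of Vertex Cover within factor $2-\veps_0$ due to Khot and Regev. Given such an instance $G=(V,E)$ with $n=|V|$ and $m=|E|$, I embed each edge $e=\{u,v\}$ as the point $p_e:=\chi_u+\chi_v\in\R^n$, where $\chi_w$ denotes the $w$-th standard basis vector, and take $\X:=\{p_e:e\in E\}$. The number of centers $k$ is set as a function of $n$ and the targeted cover size so that the $2-\veps_0$ combinatorial gap translates into a constant multiplicative gap in the $k$-median objective.

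For the completeness direction, if $G$ has a vertex cover $S$ with $|S|=k$, I place a center at $\chi_v$ for each $v\in S$; since every edge has an endpoint in $S$, each $p_e$ lies at Euclidean distance exactly $1$ from some center, hence $\Phi(C,\X)\le m$. For the soundness direction, given any set $C^*$ of $k$ centers I would partition $\X$ into clusters by nearest-center assignment and lower bound the $1$-median (Fermat--Weber) cost of each cluster in terms of the vertex set $V'\subseteq V$ touched by the cluster's edges. Summing these cluster bounds and invoking the assumption that every vertex cover of $G$ has size at least $(2-\veps_0)k$ should force $\Phi(C^*,\X)\ge(1+\veps)m$ for a constant $\veps>0$, giving the desired inapproximability gap.

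The main obstacle is precisely the soundness estimate. In the $k$-means setting one can invoke the identity $\sum_{x\in S}\|x-\mu\|^2=\sum_{x\in S}\|x\|^2-|S|\|\mu\|^2$, with $\mu$ the centroid, which produces a clean combinatorial formula for the optimal cluster cost; the $1$-median admits no analogous closed form, and the Fermat--Weber problem itself has no general algebraic solution (as noted in the introduction). I plan to bypass this by a symmetrization argument: since permuting the coordinates indexed by vertices of equal in-cluster degree leaves the cluster invariant, an optimal $1$-median can be chosen of the form $\sum_{v\in V'}\lambda_v\chi_v$ with weights $\lambda_v$ depending only on the combinatorial type of $v$. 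This reduces the Fermat--Weber optimization over $\R^n$ to a low-dimensional convex program whose value can be computed, or tightly lower-bounded, in closed form, yielding the per-cluster estimate needed. Combining this estimate with the Vertex Cover gap and optimizing the free parameters will then produce the explicit $\veps>0$ asserted by the theorem.
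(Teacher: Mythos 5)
There is a genuine gap, and it is in the completeness direction, which you treat as the easy half. Placing a center at $\chi_v$ for each cover vertex $v$ gives every point distance exactly $1$, hence cost $m$ — but $m$ is too weak an upper bound for this reduction to produce any gap. The entire multiplicative gap here lives in a lower-order term proportional to $k$: a star cluster with $r$ edges is a regular simplex of side $\sqrt{2}$, and its optimal $1$-median (the centroid, by a symmetrization/uniqueness argument) has cost $\sqrt{r(r-1)} \le r - 1/2$, so a vertex cover of size $k$ yields cost at most $m - k/2$, not $m$. Correspondingly, the soundness bound one can actually prove for NO instances is $m - k/2 + \delta k$, which is \emph{below} $m$ (e.g., clustering around a near-maximum matching gives cost roughly $m - \Omega(k)$ even when no small vertex cover exists). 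So your target inequality $\Phi(C^*,\X)\ge(1+\veps)m$ is false, and with completeness bound $m$ the YES and NO costs do not separate at all. Relatedly, to turn the additive gap $\delta k$ into a multiplicative $1+\Omega(\veps)$ you must work with \vc on \emph{bounded-degree} graphs (so that $k \ge m/2\Delta$), and the soundness case analysis requires the instances to be \emph{triangle-free}; both reductions are inherited from \cite{hardness:acks15} and neither appears in your plan.

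Your symmetrization idea for the soundness estimate is also too optimistic as stated. Swapping coordinates of an optimal median and invoking uniqueness of the $1$-median for non-collinear points works only for clusters with a large symmetry group (regular simplices, and the graphs $A_n$ where all but one point are interchangeable); for a general non-star cluster the number of distinct combinatorial vertex types is unbounded and the resulting optimization has no closed form — this is exactly the Fermat--Weber obstruction you identify. The missing idea is a \emph{decomposition} (superadditivity) lemma: for any partition of a cluster's edges into subgraphs, the optimal $1$-median cost of the cluster is at least the sum of the optimal costs of the parts. One then repeatedly strips off pairs of vertex-disjoint edges (whose $1$-median cost is exactly $2$) until a member of a short list of ``fundamental'' non-star graphs remains, each of which \emph{is} symmetric enough for your symmetrization to apply. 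This yields per-cluster lower bounds of the form $\sqrt{r(r-1)} + \delta(F)$ with $\delta(F)$ bounded below by an absolute constant, and it is these extra costs — charged against the $\delta k$ slack — that bound the vertex cover of the non-star clusters and close the soundness argument.
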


\noindent
\underline{\bf Important note}: We would like to note that similar hardness of approximation result for the Euclidean $k$-median problem using different techniques has been obtained independently by {\it Vincent Cohen-Addad, Karthik C. S., and Euiwoong Lee}.
We came to know about their results through personal communication with the authors. Since their manuscript has not been published online yet, we are not able to add a citation to their work.

Now having hardness of approximation for $k$-means and $k$-median, the next natural step in the {\em beyond worst-case} discussion is to allow more flexibility to the algorithm.
One possible relaxation is to allow an approximation algorithm to choose more than $k$ centers. 
In other words, allow the algorithm to choose $\beta k$ centers (for some constant $\beta > 1$) and produce a solution that is close to the optimal solution with respect to $k$ centers.
This is known as bi-criteria approximation and the following definition formalizes this notion.

\begin{definition}[$(\alpha,\beta)$-approximation algorithm]
An algorithm $\mathcal{A}$ is called an $(\alpha,\beta)$-approximation algorithm for the Euclidean $k$-means/median problem if given any instance $\mathcal{I} = (\X,k)$ with $\X \subset \mathbb{R}^{d}$,  $\mathcal{A}$ outputs a center set $F \subset \mathbb{R}^{d}$ of size $\beta k$ that has the cost at most $\alpha$ times the optimal cost with $k$ centers. That is,
$$
\sum_{x \in \X} \min_{f \in F} \{ D(x, f) \} \leq \alpha \cdot \min\limits_{\substack{
C \subseteq \mathbb{R}^{d} \\
|C| = k
}
} \ \ \left\{ \sum_{x \in \X} \min_{c \in C} \{D(x, c) \} \right\}
$$
For the Euclidean $k$-means problem, $D(p, q) \equiv \|p - q\|^2$ and for the $k$-median problem $D(p, q) \equiv \|p-q\|$.
\end{definition}

One expects that as $\beta$ grows, there exists efficient $(\alpha, \beta)$-approximation algorithms with smaller value of $\alpha$.
This is indeed observed in the work of Makarychev~\emph{et al.}~\cite{bicriteria:2015_Makarychev_Approx_random}. 
For example, their algorithm gives a $(9+\veps)$ approximation for $\beta = 1$; $2.59$ approximation for $\beta = 2$; $1.4$ approximation for $\beta = 3$.
The approximation factor of their algorithm decreases as the value of $\beta$ increases. 
Furthermore, their algorithm gives a $(1+\veps)$-approximation guarantee with $O(k\log(1/\veps))$ centers.
Bandyapadhyay and Varadarajan~\cite{bicriteria:2016_Bandyapadhyay} gave a $(1+\veps)$ approximation algorithm that outputs
$(1+\veps)k$ centers in constant dimension. 
There are various other bi-criteria approximation algorithms that use distance-based sampling techniques and achieve better approximation guarantees than their non-bi-criteria counterparts~\cite{streaming:2009_kmeanspp_ragesh,bicriteria:2009_amit_deshpande,bicriteria:kmeanspp_NIPS_Dennis_Wei}. 
Unfortunately in these bi-criteria algorithms, at least one of $\alpha, \beta$ is large.
Ideally, we would like to obtain a PTAS with a small violation of the number of output centers. 
More specifically, we would like to address the following question:

\begin{quote}
    \emph{Does the $k$-median or $k$-means problem admit an efficient $(1+\veps, 1+\veps)$-approximation algorithm?}
\end{quote}
\noindent Note that such type of bi-criteria approximation algorithms that outputs $(1+\veps)k$ centers have been extremely useful in obtaining a constant approximation for the {\em capacitated} $k$-median problem~\cite{capacitated:kmedian_2017_Li,capacitated:kmedian_2017_Li_uniform} for which no true constant approximation is known yet~\footnote{In the capacitated $k$-median/$k$-means problem there is an additional constraint on each center that it cannot serve more than a specified number of clients (or points).}. Therefore, the above question is worth exploring.
Note that here we are specifically aiming for a PTAS since the $k$-means and $k$-median problems already admit a constant factor approximation algorithm. 
In this work, we give a negative answer to the above question by showing that there exists a constant $\veps> 0$ such that an  efficient $(1+\veps,1+\veps)$-approximation algorithm for the $k$-means and $k$-median problems does not exist assuming the Unique Games Conjecture.
The following two theorems state this result more formally.

\begin{theorem}[$k$-median]\label{theorem:bicriteria_kmedian}
For any constant $1 < \beta < 1.015$, there exists a constant $\veps > 0 $ such that there is no $(1+\veps,\beta)$-approximation algorithm for 
the $k$-median problem assuming the Unique Games Conjecture. 
\end{theorem}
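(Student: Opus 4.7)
The plan is to reduce from a UGC-hard form of Vertex Cover to the Euclidean $k$-median problem and to perform a refined soundness analysis that quantifies what a bi-criteria algorithm can do with its extra $(\beta-1)k$ centers. The natural starting point is the Khot--Regev theorem that, assuming UGC, Vertex Cover on bounded-degree, triangle-free graphs cannot be approximated within any factor strictly better than $2$. More precisely, I would use the gap formulation: for every $\delta > 0$ it is UG-hard, given a graph $G = (V,E)$ from a suitable hard family, to distinguish the case in which $G$ has a vertex cover of size at most $\left(\tfrac{1}{2} + \delta\right)|V|$ from the case in which every vertex cover of $G$ has size at least $(1 - \delta)|V|$.

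Given such a $G$, I construct the Euclidean $k$-median instance $\X = \{x_e : e \in E\} \subset \R^{|V|}$ by setting $x_e = e_u + e_v$ for each edge $e = \{u,v\}$, with $k = \lceil (\tfrac{1}{2} + \delta)|V| \rceil$. For completeness, if $G$ has a vertex cover $C$ of size $k$ then $F = \{e_v : v \in C\}$ has size $k \leq \beta k$ and every edge point $x_e$ lies at distance exactly $1$ from its nearest center, giving total cost exactly $|E|$. The heart of the theorem is the soundness direction: show that when $G$ admits no vertex cover of size $(1 - \delta)|V|$, every $F \subset \R^{|V|}$ with $|F| = \beta k$ pays cost at least $(1 + \veps)|E|$.

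For the bi-criteria soundness I analyse $F$ by first ``snapping'' each center $c \in F$ either to its closest vertex indicator $e_v$ or to its closest edge point $x_e$, whichever causes the smaller increase in the total assignment cost. This partitions $F$ into a set of vertex-like centers inducing some $V' \subseteq V$ and a set of edge-like centers inducing some $E' \subseteq E$ with $|V'| + |E'| \leq \beta k$. An edge-like center reduces the cost of the one edge it covers from $1$ down to $0$, saving at most $1$; a vertex-like center at $e_v$ pays $1$ on every incident edge and at least $\sqrt{3}$ on any edge disjoint from $v$, by the $2$-sparsity of the $x_e$'s. Interpreting $V' \cup \bigcup_{e \in E'} e$ as a vertex-cover candidate of size $\leq |V'| + 2|E'| \leq \beta k + |E'|$ on $G$, the UG-soundness forces either that this candidate fails to cover a linear fraction of edges (each of which then pays at least $\sqrt{3}$ instead of $1$), or that $|V'| + 2|E'|$ meets $(1-\delta)|V|$, which for $\beta$ close to $1$ conflicts with the budget $|E'| \leq (\beta - 1)k$ coming from $|V'| + |E'| \leq \beta k$. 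Balancing the per-center saving budget $|E'| \leq (\beta - 1)k$ against the cost blow-up $\sqrt{3} - 1$ on uncovered edges, and optimising over $\delta$, yields the stated threshold $\beta < 1.015$.

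The main technical obstacle will be making the snapping step and the resulting quantitative soundness lemma precise. A center $c$ placed at an arbitrary point of $\R^{|V|}$---for instance, midway between several $x_e$'s, or near the all-zero vector---need not be close to any single vertex indicator or edge point, and its contribution to the total cost must nonetheless be bounded below by what the best ``canonical'' substitute would contribute. Establishing a tight snapping inequality will rely on the $2$-sparsity of the data together with the bounded degree and triangle-freeness of $G$, so that the spatial neighbourhoods of distinct canonical objects overlap only mildly and no unstructured center can simultaneously help many edges. A second subtlety is turning the qualitative statement ``$V'$ is not a vertex cover'' into a quantitative lower bound on the number of uncovered edges; this will likely require either the stronger independent-set form of the Khot--Regev gap, in which the maximum independent set in the NO case has size at most $\delta|V|$, or an expansion-type property of the hard graph family, and is the most delicate source of the final constant $1.015$.
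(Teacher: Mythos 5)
Your construction of the point set is the same as the paper's, but your soundness strategy---snapping each center to the nearest vertex indicator $e_v$ or edge point $x_e$ and then reading off a vertex-cover candidate---cannot work at the precision this reduction requires, and the problem is visible already in your completeness step. The YES-instance cost is not $|E|$: the optimal $1$-median of a star cluster with $r$ edges is the centroid of the regular simplex $\{x_e\}$, with cost $\sqrt{r(r-1)} \leq r - 1/2$, not the vertex indicator with cost $r$. So the correct completeness bound is $m - k/2$, and the soundness direction must show that with $\beta k$ centers the cost is at least $m - k/2 + \delta k$; since $k = \Theta(m/\Delta)$ for the hard instances, the multiplicative gap is only $1 + \Theta(\delta/\Delta)$. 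Any argument that replaces a true Fermat--Weber center by a ``canonical'' vertex indicator forfeits roughly $1/2$ per cluster, i.e.\ $\Theta(k)$ in total---the same order as the entire gap you are trying to detect---so the snapping inequality you flag as the ``main technical obstacle'' is not merely delicate, it is fatal. Relatedly, your target ``every $F$ with $|F| = \beta k$ pays at least $(1+\veps)|E|$'' is unattainable, since even NO instances admit solutions of cost below $|E|$.

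What the paper actually does in the soundness step is quite different: it takes the \emph{optimal} $\beta k$-clustering, classifies each cluster as a star or non-star subgraph, and lower-bounds the true $1$-median cost of every non-star cluster $F$ as $\sqrt{|F|(|F|-1)} + \delta(F)$ by decomposing $F$ into ``fundamental non-star graphs'' ($3$-$P_2$, $A_n$, $L_n$) plus vertex-disjoint edge pairs, for which the Fermat--Weber cost can be computed or bounded exactly. It then proves that each non-star cluster has a vertex cover of size at most $1.8 + (\sqrt{2}+1)\delta(F)$ (with a separate argument for matching size two and for single-edge clusters), and sums these bounds to contradict the $(2-\veps)$-inapproximability of Vertex Cover; the threshold $\beta < 1.015$ falls out of the coefficients $1.95\beta + (\sqrt{2}+1)(\beta-1)/2 < 2$ in that final accounting, not from a $\sqrt{3}-1$ versus $(\beta-1)k$ trade-off. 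If you want to salvage your outline, you would need to replace the snapping step with a genuine lower bound on the continuous $1$-median cost of arbitrary clusters of edge points---which is precisely the decomposition machinery of Section~4 of the paper.
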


\begin{theorem}[$k$-means]\label{theorem:bicriteria_kmeans}
For any constant $1 < \beta < 1.28$, there exists a constant $\veps > 0 $ such that there is no $(1+\veps,\beta)$-approximation algorithm for the $k$-means problem assuming the Unique Games Conjecture. Moreover, the same result holds for any $1 < \beta < 1.1$ under the assumption that $\mathsf{P} \neq \mathsf{NP}$. 
\end{theorem}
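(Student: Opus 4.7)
The plan is to combine a gapped \vc instance with a carefully scaled geometric embedding, obtained by extending the reduction that proves Theorem~\ref{theorem:main_theorem} and then re-analysing it for the squared distance (which is what amplifies the $k$-median bound $1.015$ into the $k$-means bound $1.28$). For the UGC part, I start from Khot--Regev UGC-hardness of \vc within any factor strictly below $2$; for the $\mathsf{P}\neq\mathsf{NP}$ part, I substitute the best unconditional \vc hardness (the Dinur--Safra $10\sqrt{5}-21$ bound), which gives a smaller absolute gap and hence a smaller allowed $\beta$. The $\veps$ guaranteed by the theorem will come out of the cost ratio in the two cases and shrinks as $\beta$ approaches $1.28$ (resp.\ $1.1$).

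Given a graph $G=(V,E)$ from the gapped \vc problem, the construction assigns each vertex $v\in V$ a representative point $p_v\in\mathbb{R}^d$ (for instance a scaled basis vector, so that all pairwise distances $\|p_u-p_v\|$ are equal), and each edge $\{u,v\}\in E$ contributes a bundle of many identical data points at a location interpolating between $p_u$ and $p_v$, exactly as in the reduction for Theorem~\ref{theorem:main_theorem}. Setting $k$ equal to the YES-case cover size, any vertex cover of $G$ of size $k$ yields a $k$-means solution of squared cost $\Phi_{\mathrm{YES}}$ in which every bundle is served by an endpoint-vertex center; conversely, each edge bundle not covered by an endpoint-vertex center pays a strictly larger penalty, whose value can be written explicitly in terms of $\|p_u-p_v\|^2$.

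To pass from this non-bi-criteria setting to the bi-criteria one, the key step is a \emph{snapping lemma}: any $\beta k$-center solution can be transformed, with only a $(1+o(1))$ multiplicative loss, into one whose centers lie either at some $p_v$ or at the centroid of an edge-bundle subfamily sharing a common endpoint vertex. Once snapped, the question becomes combinatorial: how many edge bundles can be served cheaply by $\beta k$ such ``natural'' centers? Using the NO-case guarantee that every vertex cover of $G$ has size at least $(1-\veps)n$, a double-counting argument lower-bounds the number of edges whose cost exceeds the vertex-served cost by an explicit linear-in-$n$ amount. Aggregating these penalties shows that the total $\beta k$-center cost strictly exceeds $(1+\veps)\,\Phi_{\mathrm{YES}}$ whenever $\beta<1.28$ in the squared-distance setting; the linear-distance version of the same bookkeeping recovers $\beta<1.015$ for $k$-median.

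The main obstacle is the snapping lemma, because with $\beta k>k$ centers one must rule out exotic placements that pool many unrelated midpoints into a single Steiner-like center. The argument requires an isoperimetric-type inequality in the simplex spanned by $\{p_v\}$, controlling how much a single off-vertex center can simultaneously reduce the cost of several disjoint bundles. Squared distance both sharpens this inequality (since averaging two far-apart basis vectors is more costly under $\|\cdot\|^2$ than under $\|\cdot\|$) and widens the NO--YES cost ratio; this amplification is precisely what upgrades the $k$-median bound $1.015$ to the $k$-means bound $1.28$, and it is also where the choice between the Khot--Regev and Dinur--Safra starting points shows up as the difference between the UGC bound and the $\mathsf{P}\neq\mathsf{NP}$ bound.
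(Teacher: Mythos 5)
Your proposal has genuine gaps, both in the construction and in the core of the argument. First, the reduction is misdescribed: the paper's instance (the same one used for Theorem~\ref{theorem:main_theorem}) places exactly \emph{one} point $x_e = x_u + x_v$ per edge, where the $x_v$ are standard basis vectors; there are no ``bundles of many identical data points,'' and $k$ is the cover size of the gapped \vc{} instance on bounded-degree \emph{triangle-free} graphs (triangle-freeness is essential to the cluster analysis). Second, and more importantly, your entire soundness argument hinges on a ``snapping lemma'' that you acknowledge is the main obstacle and do not prove. The paper never needs such a lemma: for $k$-means, any solution with $\beta k$ centers induces a partition of the edge-points into $\beta k$ clusters, and each cluster is optimally served by its centroid, whose $1$-means cost has the closed form $\sum_v d_F(v)\bigl(1 - d_F(v)/r\bigr)$. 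The actual work is combinatorial, per cluster: a star cluster on $r$ edges costs exactly $r-1$ and is covered by one vertex, while (by Lemma 4.8 of \cite{hardness:acks15}) a non-star cluster $F$ on $m$ edges costs at least $m-1+\delta(F)$ with $\delta(F)\ge 2/3$ and admits a vertex cover of size at most $1+\tfrac{5}{2}\delta(F)$. Summing over the $t_1$ star and $t_2$ non-star clusters with $t_1+t_2=\beta k$ and using the cost budget $m-k+\delta k$ gives $|VC(G)| \le \beta k + \tfrac{5}{2}(\beta-1)k + \tfrac{5}{2}\delta k$, and forcing this below $(\lambda-\veps)k$ yields $\beta < \tfrac{2}{7}\bigl(\lambda+\tfrac{5}{2}\bigr)$ --- which is where $1.28$ comes from with $\lambda=2$ (UGC) and $1.1$ with $\lambda=1.36$ (Dinur--Safra). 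Without some substitute for this per-cluster cost-versus-cover tradeoff, your double-counting step has nothing to aggregate.

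A secondary error: the claim that $1.28$ arises by ``re-analysing for squared distance'' the same bookkeeping that gives $1.015$ for $k$-median is backwards. The two soundness analyses are structurally different --- the $k$-means one is the simpler of the two (a uniform $\delta(F)\ge 2/3$ bound suffices), whereas the $k$-median analysis requires the much finer case decomposition of Sections~\ref{section:VC_matching_2} and~\ref{section:VC_matching_3} precisely because the $1$-median cost has no closed form. The constants $1.28$ and $1.015$ come from independent calculations, not from one amplifying the other.
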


\noindent
\underline{\it Dimensionality reduction}: Note that we can use dimensionality reduction techniques of Makarychev {\it et al.}~\cite{makarychev19} to show that our hardness of approximation results hold for $O(\log{\frac{k}{\veps}}/\veps^2)$ dimensional instances.

In the next subsection, we discuss the known results on hardness of approximation of the $k$-means and $k$-median problems in more detail.


\subsection{Related Work}
\label{section:related_work}
The first hardness of approximation result for the Euclidean $k$-means problem was given by Awasthi~\emph{et al.}~\cite{hardness:acks15}. 
They obtained their result using a reduction from \vc on triangle-free graphs of bounded degree $\Delta$ to the Euclidean $k$-means instances. 
Their reduction yields a $(1+\frac{\veps}{\Delta})$ hardness factor for the $k$-means problem for a particular constant $\veps>0$. 
However, due to an unspecified value of $\Delta$, the authors did not deduce the exact hardness factor for the $k$-means problem. 
To overcome this barrier of unspecified bounded degree, Lee~\emph{et al.}~\cite{hardness:2017_lee_schimdt} showed the hardness of approximation of \vc on triangle-free graphs of bounded degree four. 
Using $\Delta=4$, they obtained a $1.0013$ hardness of approximation for the Euclidean $k$-means problem. 
Subsequently, Addad and Karthik~\cite{hardness:addad19} improved the hardness of approximation to $1.07$ using a reduction from the {\em vertex coverage problem} instead of a reduction from the vertex cover problem. 
Moreover, they also gave several improved hardness results for the discrete $k$-means/$k$-median problems in general and $\ell_{p}$ metric spaces. 
In their more recent work, they also improved the hardness of approximation results for the continuous $k$-means/$k$-median problem in general metric spaces~\cite{hardness:2020_Addad_Continuous_Metric}. 

Unlike the Euclidean $k$-means problem, no hardness of approximation result was known for the Euclidean $k$-median problem.
In this work, we give hardness of approximation result for the Euclidean $k$-median problem assuming the Unique Game Conjecture. 
However, we do not deduce the exact constant hardness factor since we use the same reduction as in ~\cite{hardness:acks15} and hence run into the same problem of unspecified degree that we discussed in the previous paragraph.
As mentioned earlier, in an unpublished work communicated to us through personal communication, {\em Vincent Cohen-Addad, Karthik C. S., and Euiwoong Lee} have independently obtained hardness of approximation result for the Euclidean $k$-median problem using different set of techniques.
They also gave bi-criteria hardness of approximation results in $\ell_{\infty}$-metric for the $2$-means and $2$-median problems.
We would like to point out that in the bi-criteria setting, our result is the first hardness of approximation result for the Euclidean $k$-means/$k$-median problem to the best of our knowledge.

All of our hardness of approximation results are based on the reduction from \vc on bounded degree and triangle-free graphs. 
As we mentioned earlier, the same reduction is used in~\cite{hardness:acks15,hardness:2017_lee_schimdt} to obtain the hardness of approximation for the Euclidean $k$-means problem. 
However, extending this gap-preserving reduction to the Euclidean $k$-median setting is non-trivial.
This problem was left as an open problem by Awasthi {\it et al.}~\cite{hardness:acks15}.
In the next subsection, we discuss this reduction and related difficulties in obtaining the hardness of approximation for Euclidean $k$-median problem.

\subsection{Comparison with~\cite{hardness:acks15} and Technical Contribution}\label{subsection:kmeans_comparison}
Awasthi \emph{et al.}~\cite{hardness:acks15} showed that the $k$-means problem is hard to approximate within a factor $(1+\veps)$ for a particular constant $0 < \veps < 1)$. 
We borrow some techniques from their work and show the hardness of approximation for the $k$-median problem, within a factor $(1+\veps')$ for a particular constant $0 < \veps' < 1$. 
However, this task is challenging and non-trivial. 
We will discuss these challenges in this subsection. 
First, let us briefly discuss the results and techniques of \cite{hardness:acks15}.

Awasthi \etal~\cite{hardness:acks15} first gave a $(1+\veps)$-approximation preserving reduction from \vc on bounded degree graphs to \vc on bounded degree \emph{triangle-free} graphs. 
Then, they gave a reduction from \vc on bounded degree triangle-free graphs to the Euclidean $k$-means instances. 
The first reduction straightaway gives a $1.36$ hardness of approximation for the \vc on bounded degree and triangle-free graphs since a $1.36$ hardness of approximation is already known for the \vc on bounded degree graphs~\cite{Vertex_Cover:2005_Dinur_Safra}.
Following is a formal statement for this.
\begin{theorem}[Corollary 5.3~\cite{hardness:acks15}]\label{theorem:VC_PNP}
Given any unweighted bounded degree, triangle-free graph $G$, it is
$\mathsf{NP}$-hard to approximate \vc within any factor smaller than $1.36$.
\end{theorem}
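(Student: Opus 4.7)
My plan is to combine a classical hardness of Vertex Cover on bounded-degree graphs with an approximation-preserving reduction to the triangle-free setting. The first ingredient is the Dinur--Safra theorem: for some absolute constant $\Delta$, Vertex Cover restricted to graphs of maximum degree $\Delta$ is $\mathsf{NP}$-hard to approximate within any factor strictly less than $1.36$. Given this, the whole argument reduces to constructing a polynomial-time reduction that, given a bounded-degree graph $G$, produces a bounded-degree triangle-free graph $G'$ such that the approximation ratios for Vertex Cover on the two sides are related by a multiplicative $(1+\varepsilon)$ factor for arbitrarily small $\varepsilon > 0$.

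The most natural starting point is an odd subdivision: replace every edge of $G$ by a path of length $2k+1$, adding $2k$ new internal vertices per edge. This construction has two immediate virtues: it destroys all triangles (the girth of the resulting graph is at least $3(2k+1) \geq 9$), and it keeps the maximum degree at $\max(\Delta,2) = \Delta$. A short case analysis on each subdivided path---if at least one endpoint of the original edge is in the cover, then $k$ internal vertices suffice; otherwise $k+1$ are needed---yields the exact identity
\[
\mathrm{VC}(G') \;=\; \mathrm{VC}(G) \;+\; k\,|E(G)|.
\]

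The main obstacle is that the additive term $k|E(G)|$, which is proportional to $|V(G)|$ in a bounded-degree graph, is added uniformly to both YES and NO instances, and therefore shrinks the multiplicative gap: the new ratio $\frac{c_2\,|V| + k|E|}{c_1\,|V| + k|E|}$ is strictly smaller than $c_2/c_1$. To fix this I would compose the subdivision with a preliminary vertex-replication step that scales the Vertex Cover value of $G$ by a large constant $N = N(\varepsilon)$ without proportionally inflating $|E(G)|/\mathrm{VC}(G)$; for instance, replacing each vertex by $N$ copies and each edge by a perfect matching between the corresponding $N$-sets keeps the maximum degree at $\Delta$, and a short averaging argument over the choice of matchings shows that the Vertex Cover value of the blown-up graph is $N \cdot \mathrm{VC}(G)\,(1 \pm \varepsilon)$. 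After the blow-up, the additive overhead $k|E|$ becomes an $\varepsilon$-fraction of $N\cdot\mathrm{VC}(G)$, yielding the desired $(1+\varepsilon)$ preservation.

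Putting the pieces together, for any $\varepsilon > 0$ one obtains a polynomial-time reduction producing bounded-degree triangle-free instances on which the Dinur--Safra $1.36$-gap transfers, with an arbitrarily small $O(\varepsilon)$ loss. Since $\varepsilon$ can be taken arbitrarily small, any approximation factor strictly less than $1.36$ remains $\mathsf{NP}$-hard, which is the statement of the theorem. The crux is the combined design of the blow-up and the subdivision so that triangle-freeness, bounded degree, and multiplicative approximation preservation all coexist in a single construction.
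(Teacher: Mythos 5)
Your top-level plan---Dinur--Safra hardness of \vc{} on bounded-degree graphs composed with a $(1+\veps)$-approximation-preserving reduction to bounded-degree triangle-free graphs---is exactly how this theorem is obtained; the paper does not prove it but imports it as Corollary~5.3 of Awasthi~et~al., who supply precisely such a reduction. The genuine gap is in the concrete reduction you propose. The odd subdivision adds $k|E(G)|$ to the vertex cover of \emph{every} instance, and this additive term can never be made negligible relative to the YES-threshold $c_1$: any hard instance must have $|E(G)|\geq c_1$ (otherwise the trivial cover of size at most $|E|$ already certifies a YES instance), so the diluted ratio satisfies
\[
\frac{1.36\,c_1 + k|E|}{c_1 + k|E|} \;=\; 1 + \frac{0.36\,c_1}{c_1 + k|E|} \;\leq\; 1.18,
\]
and is in fact only $1+O(1/(k\Delta))$ since $|E|$ can be as large as $\Delta\cdot c_1$. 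So subdivision does prove APX-hardness of \vc{} on bounded-degree triangle-free graphs, but it provably cannot deliver the constant $1.36$, which is the content of the statement and is used quantitatively later (e.g.\ to derive $\beta<1.1$ in Theorem~\ref{theorem:bicriteria_kmeans}).

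The blow-up you introduce to repair this does not help and is itself unsound. It multiplies $|E|$ and $\mathrm{VC}$ by the same factor $N$, so the quantity you need to shrink, $|E|/\mathrm{VC}$, is unchanged; your final estimate compares the overhead $k|E(G)|$ of the \emph{original} graph against $N\cdot\mathrm{VC}(G)$, but the subdivision is applied to the blown-up graph, whose overhead is $kN|E(G)|$. Moreover, the claimed identity $\mathrm{VC}(G_N)=N\cdot\mathrm{VC}(G)(1\pm\veps)$ fails in the soundness direction, which is the direction you need: the matching blow-up of a triangle is $2$-regular, hence a disjoint union of cycles with vertex cover roughly $3N/2$, far below $2N(1-\veps)=N\cdot\mathrm{VC}(K_3)(1-\veps)$. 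The reduction of Awasthi et al.\ avoids additive overhead altogether: each vertex becomes a cloud of size $t$, each edge becomes a constant-degree bipartite \emph{expander} gadget between the two clouds (arranged, e.g.\ via suitably chosen cyclic shifts, so that no triangle of $G$ lifts to a triangle of $G'$), and soundness is argued through the independence number---an independent set occupying an $\veps$-fraction of two adjacent clouds would have to contain a gadget edge. If you want to complete your write-up, you need a gadget of this multiplicative kind rather than any subdivision-based one.
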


\noindent Here is the description of the second reduction.
\begin{quote}
\noindent \textit{Construction of $k$-means instance:}
\noindent Let $(G, k)$ be a hard \vc instance where $G$ has bounded degree $\Delta$. 
Let $n$ denote the number of vertices in the graph and $m$ denote the number of edges in the graph. 
A $k$-means instance $\mathcal{I} \coloneqq (\X,k)$ with $\X \subset \R^n$ is constructed as follows. 
For every vertex $i \in V$, there is an $n$-dimensional vector $x_{i} \coloneqq (0,\dotsc,1,\dotsc,0)$ in $\{0,1\}^{n}$, which has $1$ at $i^{th}$ coordinate and $0$ at the rest of the coordinates. For each edge $e = (i,j) \in E$, a point $x_{e} \coloneqq x_{i} + x_{j}$ is defined in $\{0,1\}^{n}$. The point set $\X \coloneqq \{ x_{e} \mid e \in E\}$ and parameter $k$ defines the $k$-means instance.
\end{quote}
The following theorem based on the above construction is given in \cite{hardness:acks15}.

\begin{theorem} [Theorem 4.1~\cite{hardness:acks15}]\label{theorem:reduction_kmeans}
There is an efficient reduction from \vc on bounded degree, triangle-free
graphs to the Euclidean $k$-means instances that satisfies the following properties:
\begin{enumerate}
    \item If the \vc instance has value $k$, then the $k$-means instance has a cost at most $(m-k)$.
    \item If the \vc instance has value at least $(1 + \veps) \cdot k$, then the optimal cost of $k$-means instance is at least $(m - k + \delta k)$. 
\end{enumerate}
Here, $\veps$ is some fixed constant $> 0$ and $\delta = \Omega(\veps)$
\end{theorem}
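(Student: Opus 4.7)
The plan is to handle the completeness and soundness directions of the reduction described just above separately.

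\textbf{Completeness (Part 1).} Given a vertex cover $S \subseteq V$ of size $k$, I would partition $\X$ into $k$ clusters by assigning each point $x_e$ (with $e = (i,j) \in E$) to the cluster indexed by some endpoint of $e$ lying in $S$. The cluster corresponding to $v \in S$ then consists of the points $x_e$ for edges ``covered at $v$''; all such points share the coordinate direction $e_v$. A direct calculation shows that placing the cluster's center at its centroid (the optimal $1$-means center) yields a cluster cost of exactly $d_v - 1$, where $d_v$ is the size of the cluster; the calculation uses only that the other endpoints of the assigned edges are distinct vertices, which holds since $G$ is simple. Summing over $v \in S$ and using $\sum_{v \in S} d_v = m$ gives total cost $m - k$, proving Part~1.

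\textbf{Soundness (Part 2).} I would argue the contrapositive: given any $k$-clustering $C_1, \dots, C_k$ of cost below $m - k + \delta k$, produce a vertex cover of $G$ of size below $(1+\veps) k$. For each cluster $C_i$, let $H_i$ be the subgraph of $G$ whose edges correspond to the points in $C_i$; set $s_i = |C_i|$ and $t_i = \tau(H_i)$. Choosing a minimum vertex cover $U_i$ of each $H_i$ and unioning yields a vertex cover of $G$ of size at most $\sum_i t_i$, so $\tau(G) \leq \sum_i t_i$. Hence it suffices to prove a per-cluster lower bound of the form
\[
\Phi(C_i) \;\geq\; (s_i - 1) \;+\; \alpha\,(t_i - 1)
\]
for some absolute constant $\alpha > 0$ (exploiting the triangle-freeness of $G$); summing over $i$ and using $\sum_i t_i \geq \tau(G) \geq (1+\veps) k$ then gives total cost $\geq m - k + \alpha \veps k$, which is Part~2 with $\delta = \alpha \veps = \Omega(\veps)$.

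\textbf{The main obstacle.} The technical heart of the argument is the per-cluster inequality above. Using the centroid as optimal center, the cost of $C_i$ equals $2 s_i - (1/s_i)\sum_v d_v^2$, where $d_v$ denotes the degree of $v$ in $H_i$, so the inequality reduces to the combinatorial statement $\sum_v d_v^2 \leq s^2 + s - \alpha s(\tau(H) - 1)$ for every triangle-free graph $H$ with $s$ edges. Triangle-freeness is essential here: for $K_3$ one has $\sum_v d_v^2 = 12 = s^2 + s$, which forces $\alpha \leq 0$. My intended strategy combines the identity $\sum_v d_v^2 = \sum_{uv \in E(H)}(d_u + d_v)$ with the triangle-free consequence $N(u) \cap N(v) = \emptyset$ for every edge $uv$, which gives the per-edge bound $d_u + d_v \leq s + 1$ together with the crucial fact that equality requires $\{u, v\}$ itself to form a vertex cover of $H$. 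A careful case analysis --- in particular, separating $\tau(H) = 2$ (where a single pair of vertices can still be a vertex cover, so edge-by-edge bounds must be supplemented by accounting for edges outside such a pair) from $\tau(H) \geq 3$ (where every edge automatically satisfies $d_u + d_v \leq s$) --- is where essentially all the work goes, and constitutes the main obstacle to obtaining the desired absolute constant $\alpha$.
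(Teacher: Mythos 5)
First, note that the paper does not prove this statement itself: it is imported verbatim as Theorem 4.1 of \cite{hardness:acks15}, and the only place where its ingredients surface is the bi-criteria $k$-means section, which reproduces Claim 4.3 and Lemma 4.8 of \cite{hardness:acks15} and the corollary that a non-star cluster $F$ has a vertex cover of size at most $1+\frac{5}{2}\delta(F)$. Measured against that argument, your completeness proof is correct and identical, and your soundness skeleton is also the right one: your per-cluster inequality $\Phi(C_i)\geq (s_i-1)+\alpha(t_i-1)$ is exactly the statement $\delta(F)\geq\alpha(\tau(F)-1)$ with $\alpha=2/5$, and your reformulation as $\sum_v d_v^2\leq s^2+s-\alpha s(\tau(H)-1)$ for triangle-free $H$, together with the identity $\sum_v d_v^2=\sum_{uv\in E(H)}(d_u+d_v)$ and the per-edge bound $d_u+d_v\leq s+1$ with equality only when $\{u,v\}$ is a vertex cover, are precisely the tools used there.

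The genuine gap is that the per-cluster inequality is asserted but not proved, and the one concrete estimate you offer for the case $\tau(H)\geq 3$ cannot close it. If $\tau(H)\geq 3$ then indeed every edge satisfies $d_u+d_v\leq s$, but this only yields $\sum_v d_v^2\leq s^2$, hence $\Phi\geq s=(s-1)+1$, i.e., $\delta(H)\geq 1$ --- a bound that does not grow with $\tau(H)$. For a cluster that is an induced matching on $s$ edges one has $\tau(H)=s$, so you need $\delta(H)\geq\alpha(s-1)$, and $\delta(H)\geq 1$ is far from enough; the inequality is true there ($\delta=s-1$), but your stated bound does not prove it. The missing ingredient is the quantitative version of your own equality observation: for any edge $uv$, the set $\{u,v\}$ together with one endpoint of each uncovered edge is a vertex cover, so the number of edges missed by $\{u,v\}$ is at least $\tau(H)-2$, giving the refined per-edge bound $d_u+d_v\leq s+3-\tau(H)$ and hence $\delta(H)\geq\tau(H)-2$ (this is the ``heavy edge'' argument behind Lemma 4.8 of \cite{hardness:acks15}). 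Combining $\delta\geq\tau-2$ (which handles $\tau\geq 3$) with a separate constant bound $\delta\geq 2/3$ for every triangle-free non-star cluster (which handles $\tau=2$, and itself still requires an argument --- ruling out near-equality in $d_u+d_v\leq s+1$ for non-stars) yields $\delta\geq\frac{2}{5}(\tau-1)$ and completes the proof. As written, your proposal identifies the right inequality and the right dichotomy $\tau=2$ versus $\tau\geq 3$, but leaves unproved exactly the two estimates on which the theorem rests.
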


\noindent The above reduction is only valid for the bounded degree triangle-free graphs.
This is the main reason, why the \vc was shown to be $\mathsf{APX}$-hard on these graph instances. 
Furthermore, the above theorem implies that the $k$-means problem is $\mathsf{APX}$-hard. 
Following is a formal statement for the same (see Section 4 of~\cite{hardness:acks15} for the proof of this result).

\begin{corollary}
There exist a constant $\veps' > 0$ such that it is $\mathsf{NP}$-hard to approximate the Euclidean $k$-means problem to any factor better than $(1 + \veps')$.
\end{corollary}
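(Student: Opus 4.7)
The plan is to chain the two theorems stated just above: use Theorem~\ref{theorem:VC_PNP} to produce a gap instance of vertex cover on bounded-degree triangle-free graphs, push it through the reduction of Theorem~\ref{theorem:reduction_kmeans} to obtain a gap instance of Euclidean $k$-means, and finally convert the resulting additive gap into a multiplicative one by exploiting the bounded-degree hypothesis on the underlying graph.

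First I would fix the hard vertex-cover instances. By Theorem~\ref{theorem:VC_PNP}, there exist constants $\veps > 0$ and $\Delta$ such that, on bounded-degree ($\Delta$) triangle-free graphs $G$ with $n$ vertices and $m$ edges, it is $\mathsf{NP}$-hard to distinguish the \textsc{Yes} case (the minimum vertex cover of $G$ has size at most $k$) from the \textsc{No} case (the minimum vertex cover has size at least $(1+\veps)k$). Feeding such a promise instance $(G,k)$ to the reduction of Theorem~\ref{theorem:reduction_kmeans}, I obtain a $k$-means instance $(\X, k)$ with $|\X|=m$ satisfying $\mathrm{OPT}_{k\text{-means}}(\X, k) \leq m - k$ in the \textsc{Yes} case and $\mathrm{OPT}_{k\text{-means}}(\X, k) \geq m - k + \delta k$ in the \textsc{No} case, for some $\delta = \Omega(\veps) > 0$.

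To convert this additive gap into a multiplicative one, I would lower-bound the ratio
\[
\frac{m - k + \delta k}{m - k} \;=\; 1 + \frac{\delta k}{m - k}
\]
by a constant strictly greater than $1$. The bounded-degree hypothesis supplies exactly this: since every vertex is incident to at most $\Delta$ edges, any vertex cover has size at least $m/\Delta$, and in particular $k \geq m/\Delta$ under the \textsc{Yes} promise. Consequently $m - k \leq (\Delta - 1)k$, so $\delta k / (m - k) \geq \delta / (\Delta - 1)$. Setting $\veps' := \delta/(\Delta - 1) > 0$, any polynomial-time $(1+\veps')$-approximation algorithm for Euclidean $k$-means would separate the \textsc{Yes} and \textsc{No} cost intervals and therefore decide the vertex-cover promise problem, contradicting Theorem~\ref{theorem:VC_PNP}.

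The only thing to watch out for is the bookkeeping: the parameter $k$ handed to the $k$-means instance is the planted value from the vertex-cover promise rather than the (unknown) actual optimum, and one must keep track of which quantities ($m$, $k$, $\Delta$, $\veps$, $\delta$) are fixed constants and which depend on the input. Beyond this, the argument is a black-box composition of the two previously stated results, so I do not anticipate a substantive technical obstacle.
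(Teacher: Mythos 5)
Your proposal is correct and follows essentially the same route the paper takes (the paper defers this corollary to Section~4 of \cite{hardness:acks15}, but gives the identical argument for the $k$-median analogue in Corollary~\ref{corollary:reduction_kmedian}): compose the gap version of \vc hardness with the reduction, then turn the additive gap $\delta k$ into a multiplicative one using a lower bound on $k$ in terms of $m$. The only difference is cosmetic: you lower-bound $k$ by $m/\Delta$ via the degree bound on a vertex cover, whereas the paper uses a greedy maximal matching to get $k \geq m/(2\Delta)$; both suffice.

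One loose end to tighten: you justify $k \geq m/\Delta$ only ``under the \textsc{Yes} promise,'' but the inequality $\veps'(m-k) < \delta k$ that separates the two cost intervals must hold for the parameters of \emph{every} instance handed to the hypothetical approximation algorithm, including \textsc{No} instances. The standard fix (which the paper states explicitly in its $k$-median proof) is to observe that any instance with $k < m/\Delta$ cannot have a vertex cover of size $k$ at all, so it can be classified as a \textsc{No} instance in polynomial time without invoking the reduction; hence one may assume $k \geq m/\Delta$ for all remaining instances, and your choice $\veps' = \delta/(\Delta-1)$ then works in both cases.
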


\noindent Note that we can further reduce the above $k$-means instances to $k$-means instances in a smaller dimensional space by applying standard dimensionality reduction techniques~\cite{Johnson_Lindenstrauss,makarychev19}. 
Finally, the authors conclude with the following important question (see Section 6 of~\cite{hardness:acks15}):
\begin{quote}
    ``\emph{It would also be interesting
to study whether our techniques give hardness of approximation results for the Euclidean k-median problem.}''
\end{quote}
In other words, if we employ the same construction as in Awasthi \etal~\cite{hardness:acks15}, and let $\mathcal{I} = (\X,k)$ denote the Euclidean $k$-median instance, then can we show this reduction to be gap preserving for $k$-median? This question is challenging due to the hardness of the $1$-median problem.
Unlike the $1$-means problem, where the optimal center is the centroid of the point set, the $1$-median problem does not have any closed-form expression for the optimal center. As we mentioned earlier, this problem is also popularly known as the \emph{Fermat Weber} problem~\cite{survey:facility_location_2001_drezner}. However, despite these difficulties, we can show that the above reduction is gap-preserving for Euclidean $k$-median problem. This is made possible using the idea that we do not require the exact optimal cost of $1$-median instance, instead, good lower and upper bounds on the optimal $1$-median cost suffice for this problem. Following are the two main ideas that we use here. In order to obtain an upper bound, we simply compute the $1$-median cost of a point set with respect to its centroid. In order to obtain a lower bound, we use a clever decomposition technique that decomposes a $1$-median instance into many smaller size instances and bound the total cost in terms of the cost of simpler instances, the $1$-median costs of which can be easily computed. We elaborate on these ideas later in Section~\ref{section:vertex_cover}. Overall, the novelty of this work lies in bounding the optimal cost of the $k$-median instances and further deducing its relationship with the vertex cover of the bounded degree triangle-free graphs. 
Furthermore, we extend these techniques to show the bi-criteria hardness of approximation results of Euclidean $k$-means and $k$-median problems.

\section{Useful Facts and Inequalities} \label{subsection:basic_facts}
In this section, we discuss some basic facts and inequalities that we will frequently use in our proofs. 
First, we note that the Fermat-Weber problem is not difficult for all $1$-median instances. 
We can efficiently obtain $1$-median for some special instances. 
For example, for a set of equidistant points, the $1$-median is simply the centroid of the point set. 
We give a proof of this statement in the next section.
Most importantly, we use the following two facts to compute the $1$-median cost.

\begin{fact}[\cite{fermat:1987_ducharme}]\label{fact:unique_median}
For a set of non-collinear points the optimal $1$-median is unique.
\end{fact}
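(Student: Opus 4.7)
The plan is to prove uniqueness of the $1$-median via a strict convexity argument on the objective $f(c) = \sum_{i=1}^{n} \|x_i - c\|$. First I would note that $f$ is a sum of convex functions, hence convex, and coercive (since each summand grows like $\|c\|$ at infinity), so a minimizer exists by continuity. What remains is to rule out two distinct minimizers, and for this I would show that $f$ is strictly convex along every line in $\R^d$, which together with convexity forces the minimum to be attained at a single point.

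The key ingredient is the following one-dimensional fact: for a fixed point $x \in \R^d$ and any line $L \subset \R^d$ with $x \notin L$, the map $c \mapsto \|x - c\|$ restricted to $L$ is strictly convex. Parametrizing $L$ as $c(t) = c_0 + t v$ for a unit direction $v$, the restriction takes the form $\sqrt{a^2 + (t - t_0)^2}$, where $a = \mathrm{dist}(x, L) > 0$ and $t_0$ is the parameter of the foot of the perpendicular from $x$ to $L$; a direct second-derivative computation shows this is strictly convex on all of $\R$. In contrast, when $x \in L$ the restriction reduces to $|t - t_0|$, which is only weakly convex, explaining why non-collinearity is needed.

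With this in hand I would argue by contradiction: suppose $c_1 \neq c_2$ are both minimizers of $f$ and let $L$ be the line through them. The non-collinearity hypothesis guarantees some index $i^{*}$ with $x_{i^{*}} \notin L$. Then $f|_L$ is a sum of convex functions, at least one of which (the summand indexed by $i^{*}$) is strictly convex, so $f|_L$ itself is strictly convex and has a unique minimum along $L$. This contradicts the assumption that both $c_1$ and $c_2$ attain the global minimum, and uniqueness follows. The only real content is the strict-convexity calculation in the second paragraph; everything else is standard convex analysis. The main (small) obstacle is isolating precisely where non-collinearity is used, namely in ensuring that for any line through two candidate minimizers some input point lies off that line.
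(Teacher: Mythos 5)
Your argument is correct. Note that the paper does not actually prove this statement --- it imports it as a cited fact from the literature (Fact~\ref{fact:unique_median}) --- so there is no in-paper proof to compare against; what you have written is the standard and complete justification. The two load-bearing steps both check out: the restriction of $c\mapsto\|x-c\|$ to a line $L$ with $x\notin L$ is $t\mapsto\sqrt{a^{2}+(t-t_{0})^{2}}$ with $a=\mathrm{dist}(x,L)>0$, whose second derivative $a^{2}/(a^{2}+(t-t_{0})^{2})^{3/2}$ is strictly positive; and non-collinearity guarantees that for \emph{any} line $L$ (in particular the one through two putative minimizers) some input point lies off $L$, so the restricted objective is a sum of convex functions with at least one strictly convex summand, hence strictly convex, hence has at most one minimizer on $L$. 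The existence remark via coercivity is fine, though not needed for uniqueness itself.
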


\begin{fact}[\cite{fermat:1991_Lopuha,fermat:1942_haldane}]\label{fact:isometric}
The $1$-median cost is preserved if pairwise distances between the input points are preserved.\footnote{Even though this statement is not explicitly mentioned in these references, it can be derived from them.}
\end{fact}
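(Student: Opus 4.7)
The plan is to deduce the fact from two classical observations: (i) the optimal $1$-median of a finite point set always lies in the affine hull of that set, and (ii) any two finite point sets with identical pairwise distance matrices admit an affine isometry between their affine hulls that realizes the given correspondence. Putting these together, the optimal cost is the value of a functional invariant under that isometry, so the two sets share the same optimal $1$-median cost.

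First, I would establish that an optimal $1$-median of $P = \{p_1,\ldots,p_n\} \subset \mathbb{R}^d$ lies in $A := \mathrm{aff}(P)$. For any $c \in \mathbb{R}^d$ let $\pi(c)$ denote its orthogonal projection onto $A$; since each $p_i \in A$, the Pythagorean identity gives $\|p_i - c\|^2 = \|p_i - \pi(c)\|^2 + \|c - \pi(c)\|^2$, and hence $\|p_i - c\| \geq \|p_i - \pi(c)\|$. Summing over $i$ shows the $1$-median cost at $\pi(c)$ is no larger than at $c$, so we may restrict the search to $A$ without loss of generality (and when $P$ is non-collinear, Fact~\ref{fact:unique_median} ensures the unique optimum already lies in $A$).

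Second, I would show that if $Q = \{q_1,\ldots,q_n\}$ (possibly in a different Euclidean space) satisfies $\|p_i - p_j\| = \|q_i - q_j\|$ for all $i,j$, then there is an affine isometry $\phi: \mathrm{aff}(P) \to \mathrm{aff}(Q)$ with $\phi(p_i) = q_i$. Set $u_i := p_i - p_1$ and $v_i := q_i - q_1$. The polarization identity
$$2\langle u_i, u_j\rangle = \|p_i - p_1\|^2 + \|p_j - p_1\|^2 - \|p_i - p_j\|^2$$
shows the Gram matrix of $(u_i)$ depends only on pairwise distances among the $p_i$'s, so it matches the corresponding Gram matrix of $(v_i)$. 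Consequently the correspondence $u_i \mapsto v_i$ extends to a well-defined linear isometry $\phi_{\mathrm{lin}}$ between $\mathrm{span}(u_1,\ldots,u_n)$ and $\mathrm{span}(v_1,\ldots,v_n)$, and the rule $\phi(p_1 + u) := q_1 + \phi_{\mathrm{lin}}(u)$ gives the required affine isometry.

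Finally, let $c^*_P \in \mathrm{aff}(P)$ be an optimal $1$-median of $P$, which exists by the first step. Applying $\phi$ yields
$$\sum_{i=1}^{n} \|q_i - \phi(c^*_P)\| \;=\; \sum_{i=1}^{n} \|\phi(p_i) - \phi(c^*_P)\| \;=\; \sum_{i=1}^{n} \|p_i - c^*_P\|,$$
so the optimal $1$-median cost of $Q$ is at most that of $P$. The reverse inequality follows by applying $\phi^{-1}$ to an optimal center of $Q$. The only substantive step is the construction of the isometry from matching distance matrices, but once phrased in terms of Gram matrices via polarization this is standard linear algebra, so no real obstacle should arise.
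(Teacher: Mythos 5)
Your proof is correct, and it supplies exactly the argument the paper leaves implicit: the paper states this as a Fact with only citations and the remark that it ``can be derived from them,'' and then uses it precisely in the way you formalize (a distance-preserving correspondence extends to a rigid transformation under which the $1$-median objective is invariant). The three ingredients you use --- restriction of the optimum to the affine hull via orthogonal projection, recovery of the Gram matrix from pairwise distances by polarization so that $u_i \mapsto v_i$ extends to a linear isometry of the spans, and invariance of the cost functional under the resulting affine isometry --- are the standard route, and the reverse inequality goes through since the same projection argument places an optimal center of $Q$ in $\mathrm{aff}(Q)$.
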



We use the above fact, in vector spaces where it is tricky to compute the optimal $1$-median exactly.
In such cases, we transform the space to a different vector space, where computing the $1$-median is relatively simpler. 
More specifically, we employ a rigid transformation since it preserves pairwise distances.
Next, we give a simple lemma, that is used to prove various bounds related to the quantity $\sqrt{m(m-1)}$.
\begin{lemma}\label{lemma:basic_bound}
Let $m$ and $t$ be any positive real numbers greater than one. If $m \geq t$, the following bound holds:
\[
m - (t-\sqrt{t(t-1)}) \leq \sqrt{m(m-1)} \leq m-1/2
\]
\end{lemma}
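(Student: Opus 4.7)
The plan is to treat the two inequalities separately, since the upper bound is essentially a one-line check while the lower bound requires a monotonicity argument.

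For the upper bound $\sqrt{m(m-1)} \leq m - 1/2$, I would just square both sides (both are positive for $m > 1$) and check: $(m-1/2)^2 = m^2 - m + 1/4 \geq m^2 - m = m(m-1)$, which is clear. That disposes of the right-hand inequality in one step.

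For the lower bound, the cleanest approach is to study the function $f(x) = x - \sqrt{x(x-1)}$ on $(1, \infty)$ and show it is monotonically decreasing; then $m \geq t$ gives $f(m) \leq f(t)$, which is exactly $m - \sqrt{m(m-1)} \leq t - \sqrt{t(t-1)}$, i.e.\ the desired inequality after rearrangement. To see the monotonicity, I would rationalize:
\[
f(x) = x - \sqrt{x(x-1)} = \frac{x^2 - x(x-1)}{x + \sqrt{x(x-1)}} = \frac{x}{x + \sqrt{x(x-1)}} = \frac{1}{1 + \sqrt{1 - 1/x}}.
\]
Since $\sqrt{1 - 1/x}$ is strictly increasing in $x$ for $x > 1$, the denominator is increasing and so $f$ is decreasing. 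Hence $f(m) \leq f(t)$ for $m \geq t > 1$, which is the lower bound.

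There is no real obstacle here; the only mildly nontrivial step is recognizing that $m - \sqrt{m(m-1)}$ should be rewritten via the conjugate to expose its monotonic dependence on $m$. An alternative would be to differentiate $f$ and use the already-established upper bound $\sqrt{m(m-1)} \leq m - 1/2$ to conclude $f'(m) = 1 - \frac{m-1/2}{\sqrt{m(m-1)}} \leq 0$, but the rationalization route avoids calculus entirely and makes the decrease visually obvious.
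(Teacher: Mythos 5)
Your proof is correct and follows essentially the same route as the paper: the upper bound is the identical completing-the-square computation, and the lower bound in both cases reduces to showing that $x - \sqrt{x(x-1)}$ is decreasing on $(1,\infty)$. Your rationalization $x - \sqrt{x(x-1)} = \frac{1}{1+\sqrt{1-1/x}}$ is in fact a cleaner and more complete justification of that monotonicity than the paper's terse appeal to $\frac{a+1}{b+1}\geq\frac{a}{b}$, so nothing is missing.
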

\begin{proof}
The upper bound follows from the following sequence of inequalities:
\[
\sqrt{m(m-1)} < \sqrt{m^{2}-m+1/4} = \sqrt{(m-1/2)^{2}} = m-1/2
\]  
The lower bound follows from the following sequence of inequalities:
    \begin{align*}
    \sqrt{m(m-1)} &= m + (\sqrt{m(m-1)}-m) \\
    &= m + m \cdot \left( \sqrt{\frac{m-1}{m}} -1 \right) \\
    &\geq m + t \cdot \left( \sqrt{\frac{t-1}{t}} -1 \right) \quad \quad \textrm{$\because \frac{a+1}{b+1} \geq \frac{a}{b}$ for $b \geq a$ }\\
    &= m - (t-\sqrt{t(t-1)})
    \end{align*}
\noindent This completes the proof of the lemma.
\end{proof}
\noindent In the next section, we show the inapproximability of the Euclidean $k$-median problem, assuming the Unique Games Conjecture (UGC). 


\section{Inapproximability of Euclidean $k$-Median}\label{section:kmedian_inapproximability}

We show a gap preserving reduction from \vc on bounded degree triangle-free graphs to the Euclidean $k$-median instances. 
In addition to it, we use the following result which follows from~\cite{hardness:acks15} and~\cite{Vertex_Cover:2010_Bounded_Degree_Graphs_Khot}~\footnote{\cite{Vertex_Cover:2010_Bounded_Degree_Graphs_Khot} showed that \vc on $d$-degree graphs is hard to approximate within any factor smaller than $2-\veps$, for $\veps = (2+o_{d}(1)) \cdot \frac{\log \log d}{\log d}$ assuming the Unique Games Conjecture. Therefore, $\veps$ can be set to arbitrarily small value by taking sufficiently large value of $d$.}.
\begin{theorem}\label{theorem:VC_UGC}
Given any unweighted triangle-free graph $G$ of bounded degree, Vertex Cover can not be approximated within a factor smaller than $2 - \veps$, for any constant $\veps > 0$, assuming the Unique Games Conjecture.
\end{theorem}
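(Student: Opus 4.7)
The plan is to compose two known results: the UGC-based $2-\eta$ hardness of \vc on bounded-degree graphs from \cite{Vertex_Cover:2010_Bounded_Degree_Graphs_Khot}, and the $(1+\veps_1)$-approximation-preserving reduction of \cite{hardness:acks15} from bounded-degree graphs to bounded-degree triangle-free graphs that was already flagged in the ``Comparison'' subsection. Fix the target slack $\veps>0$ appearing in the statement and pick auxiliary constants $\eta,\veps_1>0$ (to be specified at the end) small enough so that $(2-\eta)/(1+\veps_1) \geq 2-\veps$; clearly such $\eta,\veps_1$ exist whenever each is a sufficiently small constant.

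First I would invoke \cite{Vertex_Cover:2010_Bounded_Degree_Graphs_Khot} in the quantitative form spelled out in the footnote of the theorem: on graphs of maximum degree $d$, the UGC-hardness factor for \vc is $2-(2+o_d(1))\cdot\frac{\log\log d}{\log d}$. Taking $d=d(\eta)$ sufficiently large makes this factor exceed $2-\eta$, so \vc on graphs of maximum degree $d$ is UGC-hard to approximate within any factor smaller than $2-\eta$. Second, I would feed such a hard instance into the first reduction of \cite{hardness:acks15}, which converts a bounded-degree graph into a bounded-degree \emph{triangle-free} graph while distorting the vertex-cover value by at most a multiplicative $(1+\veps_1)$ factor (the resulting degree remains bounded by a constant depending only on $d$ and $\veps_1$). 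Combining the two, the inapproximability threshold that the triangle-free instance inherits is at least $(2-\eta)/(1+\veps_1) \geq 2-\veps$, which is exactly the claim.

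The step I expect to be the main (and essentially the only) obstacle is the verification that the reduction of \cite{hardness:acks15} really is $(1+\veps_1)$-approximation preserving with $\veps_1$ an arbitrary positive constant. One has to argue simultaneously: (i) completeness --- any vertex cover $S$ of the original graph $G$ lifts to a vertex cover of $G'$ whose size is at most $(1+\veps_1)|S|$, coming from the extra triangle-killing gadget vertices; and (ii) soundness --- any vertex cover $S'$ of $G'$ projects back to a cover of $G$ of size at most $|S'|$ (or with an analogous controlled inflation in the other direction). Provided this two-sided quantitative preservation is in place (which is precisely what Awasthi~\etal\ establish), nothing else needs to be done: Khot's bound can be driven as close to $2$ as desired by enlarging $d$, and the composition then yields $2-\veps$ for any constant $\veps>0$, finishing the proof.
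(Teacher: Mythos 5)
Your proposal is correct and is exactly how the paper obtains this theorem: the paper states it as following from the composition of the Khot--Regev style UGC-hardness of $2-(2+o_d(1))\frac{\log\log d}{\log d}$ for \vc{} on degree-$d$ graphs~\cite{Vertex_Cover:2010_Bounded_Degree_Graphs_Khot} with the $(1+\veps)$-approximation-preserving reduction of Awasthi~\etal~\cite{hardness:acks15} to bounded-degree triangle-free graphs. The paper gives no further proof beyond this citation-level composition, so your write-up (including the explicit bookkeeping of $\eta$, $\veps_1$, and $d$) matches and slightly elaborates the intended argument.
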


In Section~\ref{subsection:kmeans_comparison}, we described the construction of a $k$-means instance from a \vc instance. We use the same construction for the $k$-median instances. Let $G = (V,E)$ denote a triangle-free graph of bounded degree $\Delta$. Let $\mathcal{I} = (\X,k)$ denote the Euclidean $k$-median instance constructed from $G$.
We establish the following theorem based on this construction.
\begin{theorem}\label{theorem:reduction_kmedian}
There is an efficient reduction from instances of \vc on triangle-free
graphs with $m$ edges to those of Euclidean $k$-median that satisfies the following properties:
\begin{enumerate}
    \item If the graph has a vertex cover of size $k$, then the $k$-median instance has a solution of cost at most $m-k/2$
    \item If the graph has no vertex cover of size $\leq (2 - \veps)\cdot k$, then the cost of any $k$-median solution on the instance is at least $m-k/2+\delta k$ 
\end{enumerate}
Here, $\veps$ is some fixed constant and $\delta = \Omega(\veps)$.
\end{theorem}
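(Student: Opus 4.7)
My plan has two parts, matching the two items in the theorem statement.

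For completeness, I fix a minimum vertex cover $V^{\circ} = \{v_1, \ldots, v_k\}$ of $G$ and assign every edge to one of its endpoints in $V^{\circ}$; by minimality the assignment can be carried out so that every $v_i$ receives at least one edge, yielding cluster sizes $d_i \geq 1$ with $\sum_i d_i = m$. Placing a center at the centroid of each cluster $\X_{v_i} = \{x_{v_i} + x_u : (v_i,u) \text{ assigned to } v_i\}$ gives, cluster by cluster, a regular simplex in $\R^n$ of side $\sqrt{2}$ (because the $x_u$ are pairwise orthogonal basis vectors for distinct $u$); by Facts~\ref{fact:unique_median} and~\ref{fact:isometric} the $1$-median of such a simplex is its centroid, with cost exactly $\sqrt{d_i(d_i-1)}$. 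Applying Lemma~\ref{lemma:basic_bound} to bound each summand by $d_i - 1/2$ gives total cost at most $m - k/2$.

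For soundness I argue contrapositively: assuming a $k$-median solution of cost strictly less than $m - k/2 + \delta k$, I will construct a vertex cover of $G$ of size at most $(2-\veps)k$, contradicting Theorem~\ref{theorem:VC_UGC}. Let $\X_1, \ldots, \X_k$ be the Voronoi clusters induced by the $k$ centers, $H_j$ the subgraph of $G$ on the edges in $\X_j$, and $t_j = \tau(H_j)$ its minimum vertex cover; since every edge of $G$ lies in some $H_j$, the union $V^{*}$ of minimum vertex covers of the $H_j$'s is a vertex cover of $G$ of size at most $\sum_j t_j$, so the task reduces to showing that the cost hypothesis forces $\sum_j t_j \leq (2-\veps)k$.

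The heart of the matter --- and the main technical obstacle --- is a per-cluster lower bound on the $1$-median cost $\Phi_j$ that grows quantitatively with the complexity $t_j$ of $H_j$. When $H_j$ is a star, $\Phi_j = \sqrt{d_j(d_j-1)}$ exactly; when $t_j \geq 2$, the cluster contains at least one pair of edges sharing no vertex and hence lying at distance $2$ rather than $\sqrt{2}$, which should force $\Phi_j$ to strictly exceed the star baseline by some constant slack per additional unit of $t_j$. Following the decomposition idea outlined in Section~\ref{subsection:kmeans_comparison}, I plan to partition $\X_j$ along a minimum vertex cover of $H_j$ into $t_j$ stars of sizes $a_1, \ldots, a_{t_j}$; the elementary observation that any single center must still incur the optimal $1$-median cost on each sub-instance gives
\[
\Phi_j \;\geq\; \sum_{\ell=1}^{t_j} \sqrt{a_\ell(a_\ell-1)}.
\]
Boosting this star-sum by a ``distance-$2$ penalty'' of order $\alpha(t_j - 1)$ for some absolute $\alpha > 0$, most naturally via an isometric reduction (Fact~\ref{fact:isometric}) of a carefully chosen two-star sub-instance to low dimension followed by explicit minimization, is the principal technical obstacle.

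Granted such a per-cluster inequality, the finish is careful but routine bookkeeping. Summing over $j$, applying $\sqrt{d_j(d_j-1)} \geq d_j - (2-\sqrt{2})$ for $d_j \geq 2$ (Lemma~\ref{lemma:basic_bound} with $t = 2$), and treating singleton clusters separately using that each contributes $0$ to the cost and only $1$ to $\sum_j t_j$, yields an inequality of the form $\alpha \bigl(\sum_j t_j - k\bigr) < \bigl((3/2-\sqrt{2}) + O(\veps) + \delta\bigr)k$. Since $3/2-\sqrt{2}$ is a small positive absolute constant, choosing $\delta = \Omega(\veps)$ sufficiently small in terms of $\alpha$ forces $\sum_j t_j \leq (2-\veps)k$, producing the forbidden vertex cover and completing the proof.
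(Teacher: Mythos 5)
Your completeness argument matches the paper's and is fine. The soundness half, however, has two genuine gaps. First, the per-cluster inequality you defer --- that a cluster whose edge set has minimum vertex cover $t_j$ costs at least its star-decomposition baseline plus $\alpha(t_j-1)$ for an absolute constant $\alpha>0$ --- is not a detail to be filled in later; it is essentially the entire technical content of the paper (the decomposition lemma, the classification of fundamental non-star graphs $3$-$P_2$, $A_n$, $L_n$, explicit $1$-median bounds for each, the notions of safe and ultra-safe pairs and of bridge graphs, and a case analysis over the size of the second maximum matching). Deferring it means the hard part of the theorem is untouched.

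Second, and independently, the ``routine bookkeeping'' does not close even if that inequality is granted. A singleton star --- in particular a single-edge cluster --- contributes $0$ to your star-sum baseline but a full unit to $\sum_j t_j$, so it sits $1/2$ \emph{under} the budget $|H_j|-1/2$ while costing a whole vertex of cover. Writing $N_1$ and $N_2$ for the numbers of singleton and non-singleton stars, your accounting yields $(\alpha-1)N_1+\bigl(\alpha-(2-\sqrt{2})\bigr)N_2\le(\alpha-1/2+\delta)k$, which for any $\alpha\le 1$ places no upper bound on $N_1$ at all; and $\alpha$ provably cannot exceed roughly $1.15$ (test $C_5$, or the bridge graphs $L_{p,q}$), whereas the bound would need $\alpha$ around $3/2$. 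Relatedly, your stated right-hand side $(3/2-\sqrt{2}+O(\veps)+\delta)k$ accounts only for the slack of stars with at least two edges, while each singleton cluster contributes slack $1/2$, which dominates. The paper escapes exactly this trap with Lemma~\ref{lemma:single_star_vertex_cover}: the single-edge clusters left uncovered by the rest of the solution are covered \emph{jointly}, exploiting triangle-freeness and the interaction with the other clusters, at an amortized $2/3$ of a vertex each (or else a global vertex cover of size $2k-2\delta k$ is produced outright). Your reduction to ``show $\sum_j \tau(H_j)\le(2-\veps)k$'' discards this global step, and without it the final constant lands above $2$.
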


The graphs with a vertex cover size at most $k$ are said to be ``Yes'' instances and the graphs with no vertex cover of size $\leq (2-\veps)k$ are said to be ``No'' instances. Now, the above theorem gives the following inapproximability result for the Euclidean $k$-median problem.

\begin{corollary} \label{corollary:reduction_kmedian}
There exists a constant $\veps' > 0$ such that the Euclidean $k$-median problem can not be approximated to a factor better than $(1 + \veps')$, assuming the Unique Games Conjecture. 
\end{corollary}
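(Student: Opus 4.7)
The plan is to combine the gap-preserving reduction of Theorem~\ref{theorem:reduction_kmedian} with the UGC-based hardness of Vertex Cover on bounded degree triangle-free graphs stated in Theorem~\ref{theorem:VC_UGC}. Suppose, for the sake of contradiction, that for every constant $\veps' > 0$ there is a polynomial time $(1+\veps')$-approximation algorithm $\mathcal{A}$ for the Euclidean $k$-median problem. I will show that, for a sufficiently small $\veps'$ depending on the maximum degree $\Delta$ and on the constant $\veps$ from Theorem~\ref{theorem:VC_UGC}, one can use $\mathcal{A}$ to distinguish the ``Yes'' and ``No'' instances of Vertex Cover on bounded degree triangle-free graphs in polynomial time, contradicting the UGC.

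Given a triangle-free graph $G$ of bounded degree $\Delta$ with $m$ edges and an integer $k$, construct the $k$-median instance $\mathcal{I} = (\mathcal{X}, k)$ from Section~\ref{subsection:kmeans_comparison} and run $\mathcal{A}$ on it. By Theorem~\ref{theorem:reduction_kmedian}, a ``Yes'' instance has optimal $k$-median cost at most $m - k/2$, while a ``No'' instance has optimal cost at least $m - k/2 + \delta k$ where $\delta = \Omega(\veps)$. Thus an algorithm with approximation factor strictly smaller than
\[
\rho \;=\; \frac{m - k/2 + \delta k}{m - k/2} \;=\; 1 + \frac{\delta k}{m - k/2}
\]
can be used to separate the two cases: if $\mathcal{A}$ returns a solution of cost at most $(1+\veps')(m-k/2)$ with $(1+\veps') < \rho$, then the instance must be ``Yes,'' and otherwise it is ``No.''

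The key step is to show that $\rho \geq 1 + \veps'$ for a constant $\veps' > 0$ independent of the instance, which reduces to showing $m = O(k)$. This is where the bounded degree assumption is used: in a graph of maximum degree $\Delta$, every vertex cover has size at least $m/\Delta$, so for a ``No'' instance we have $k \geq m/((2-\veps)\Delta)$, i.e.\ $m \leq (2-\veps)\Delta \cdot k$. Consequently,
\[
\rho \;\geq\; 1 + \frac{\delta k}{(2-\veps)\Delta \cdot k - k/2} \;=\; 1 + \Omega\!\left(\frac{\veps}{\Delta}\right).
\]
Setting $\veps' > 0$ to be any fixed constant strictly smaller than this lower bound (which is a positive constant since both $\veps$ and $\Delta$ are constants), the hypothesized $(1+\veps')$-approximation algorithm distinguishes ``Yes'' from ``No'' instances in polynomial time, contradicting Theorem~\ref{theorem:VC_UGC}.

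The main subtlety that I expect to need a little care is verifying that the gap factor $1 + \Omega(\veps/\Delta)$ is indeed a genuine constant after the reduction, since both $\veps$ (from the UGC-based VC hardness) and $\Delta$ (the unspecified bound on the degree) are fixed universal constants that are not revealed explicitly; this is also why Corollary~\ref{corollary:reduction_kmedian} only asserts the existence of some constant $\veps' > 0$ rather than giving an explicit value, exactly paralleling the situation in~\cite{hardness:acks15} for $k$-means.
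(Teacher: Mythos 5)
Your overall strategy is the same as the paper's: feed the reduction of Theorem~\ref{theorem:reduction_kmedian} into the UGC hardness of Theorem~\ref{theorem:VC_UGC} and argue that the additive gap $\delta k$ becomes a constant multiplicative factor because $m = O(k)$. However, the step where you establish $m = O(k)$ is wrong as written. You claim that since every vertex cover has size at least $m/\Delta$, a ``No'' instance satisfies $k \geq m/((2-\veps)\Delta)$. This does not follow: both ``min VC $> (2-\veps)k$'' and ``min VC $\geq m/\Delta$'' are \emph{lower} bounds on the minimum vertex cover, and two lower bounds place no upper bound on $m$ in terms of $k$. Indeed, a large $\Delta$-regular triangle-free graph with $k=1$ is a perfectly valid ``No'' instance with $m$ arbitrarily larger than $(2-\veps)\Delta k$, and on such an instance your threshold test fails: the ratio $\rho = 1 + \delta k/(m-k/2)$ tends to $1$, so no fixed constant $\veps'>0$ separates the two cases.

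The missing idea --- and the one the paper uses --- is a preprocessing step: instances with $m$ too large relative to $k$ are not hard, because they can be answered ``No'' in polynomial time without invoking $\mathcal{A}$. Concretely, the paper observes that a greedily constructed maximal matching has size at least $\lceil m/(2\Delta)\rceil$, so if $k < m/(2\Delta)$ the minimum vertex cover exceeds $k$; the instance cannot be a ``Yes'' instance, hence by the promise it is a ``No'' instance and can be declared as such immediately. (You could equally use your own observation that the minimum vertex cover is at least $m/\Delta$, but applied to the \emph{Yes} case: a ``Yes'' instance must have $m \leq \Delta k$, so any instance with $m > \Delta k$ is safely answered ``No''.) On the surviving instances $m \leq 2\Delta k$, and your computation then gives $\rho \geq 1 + \delta k/(2\Delta k - k/2) = 1 + \Omega(\veps/\Delta)$, which is exactly the constant gap the paper derives. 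With this correction your argument coincides with the paper's proof.
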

\begin{proof}

Since the hard \vc instances have bounded degree $\Delta$, the maximum matching of such graphs is at least $\lceil \frac{m}{2\Delta}\rceil$. First, let us prove this statement. Suppose $M$ be a matching, that is initially empty, i.e., $M = \emptyset$.
We construct $M$ in an iterative manner. First, we pick an arbitrary edge from the graph and add it to $M$. Then, we remove this edge and all the edges incident on it. We repeat this process for the remaining graph till the graph contains no edge.
In each iteration, we remove at most $2 \Delta$ edges. 
Therefore, the matching size of the graph is at least $\lceil \frac{m}{2\Delta}\rceil$. 

Now, suppose $k < \frac{m}{2\Delta}$. Then, the graph does not have a vertex cover of size $k$ since matching size is at least $\lceil \frac{m}{2\Delta}\rceil$.
Therefore, such graph instances can be classified as ``No'' instances in polynomial time. 
So, they are not the hard \vc instances.
Therefore, we can assume $k \geq \frac{m}{2\Delta}$ for all the hard \vc instances. 
In that case, the second property of Theorem~\ref{theorem:reduction_kmedian}, implies that the cost of $k$-median instance is $(m - \frac{k}{2}) + \delta k \geq (1+ \frac{\delta}{2\Delta})\cdot (m - \frac{k}{2})$. 
Thus, the $k$-median problem can not be approximated within any factor smaller than $1 + \frac{\delta}{2\Delta} = 1 + \Omega(\veps)$.
\end{proof}

Let us define some notations before proving Theorem~\ref{theorem:reduction_kmedian}. 
For any subgraph $S$ of the graph $G$, we denote its number of edges by $m(S)$. 
Recall that a point in $\X$ corresponds to an edge of the graph. Therefore, a subgraph $S$ of $G$  defines a subset of points $\X(S) \coloneqq \{ x_{e} \mid e \in E(S) \}$ of $\X$. 
We define the $1$-median cost of $\X(S)$ with respect to a center $c \in \mathbb{R}^{n}$ as:
\[
\Phi(c,S) \equiv \sum_{x \in \X} d(x,c). 
\]

Furthermore, we define the  \textbf{optimal} $1$-median cost of $\X(S)$ as $\Phi^{*}(S)$, i.e.,
\[
\Phi^{*}(S) \equiv \min_{c \in \mathbb{R}^{n}} \Phi(c,S)
\]

In the discussion that follows, we often use the statement: ``optimal $1$-median cost of a graph $S$'', which simply means: ``optimal $1$-median cost of the cluster $\X(S)$''.

\subsection{Completeness} \label{section:completeness}

Let $V = \{v_{1},\dotsc,v_{k}\}$ be a vertex cover of $G$.  Let $S_{i}$ denote the set of edges covered by $v_{i}$. If an edge is covered by two vertices $i$ and $j$, then we arbitrarily keep the edge either in $S_{i}$ or $S_{j}$. Let $m_{i}$ denote the number of edges in $S_{i}$. We define $\{\X(S_{1}),\dotsc,\X(S_{k})\}$ as a clustering of the point set $\X$. Now, we show that the cost of this clustering is at most $m-k/2$. 
Note that each $S_{i}$ forms a star graph centered at $v_{i}$. Moreover, the point set $\X(S_{i})$ forms a regular simplex of side length $\sqrt{2}$.
We compute the optimal cost of $\X(S_{i})$ using the following lemma.


\begin{lemma}\label{lemma:simplex_cost}
For a regular simplex on $r$ vertices and side length $s$, the optimal $1$-median is the centroid of the simplex. 
Moreover, the optimal $1$-median cost is $s \cdot \sqrt{\fdfrac{r(r-1)}{2}}$.
\end{lemma}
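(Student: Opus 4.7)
The plan is to first reduce to a convenient embedding via Fact~\ref{fact:isometric}, then exploit the full permutation symmetry of the regular simplex to collapse the minimization to one real variable, and finally evaluate the cost by direct computation. No delicate Fermat--Weber machinery is needed; the argument is essentially symmetry plus a one-line derivative.

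By Fact~\ref{fact:isometric} I can replace the given simplex by the standard embedding $x_i := \tfrac{s}{\sqrt{2}}\, e_i \in \mathbb{R}^{r}$, since this is a regular simplex of side length $s$ and Fact~\ref{fact:isometric} says the $1$-median cost is determined by the pairwise distances. The centroid is then $\mu = \tfrac{s}{r\sqrt{2}}\,\mathbf{1}$ with $\mathbf{1}=(1,\ldots,1)$. The symmetric group $S_r$ acts on $\mathbb{R}^{r}$ by coordinate-permutation isometries $T_\sigma$, each of which permutes the vertex set $\{x_1,\dots,x_r\}$; hence $\Phi(T_\sigma c)=\Phi(c)$ for every candidate center $c$, where $\Phi(c):=\sum_i\|x_i-c\|$. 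Since $\Phi$ is convex (a sum of norms), Jensen's inequality gives
\[
\Phi\!\left(\tfrac{1}{r!}\sum_{\sigma\in S_r} T_\sigma c\right) \;\leq\; \tfrac{1}{r!}\sum_{\sigma\in S_r}\Phi(T_\sigma c)\;=\;\Phi(c).
\]
The averaged point has every coordinate equal to $\bar c=\tfrac{1}{r}\sum_i c_i$, so the optimum of $\Phi$ is attained on the line $\{t\,\mathbf{1}:t\in\mathbb{R}\}$, which reduces the problem to the one-variable minimization $\min_t r\sqrt{(s/\sqrt{2}-t)^2+(r-1)t^2}$.

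Differentiating the radicand and setting it to zero gives the unique minimizer $t^{*}=s/(r\sqrt{2})$, which is exactly the common coordinate of $\mu$; therefore $\mu$ is optimal. Substituting back yields $\|x_1-\mu\| = \tfrac{s}{\sqrt{2}}\sqrt{(r-1)/r} = s\sqrt{(r-1)/(2r)}$, and summing the $r$ equal vertex distances produces the claimed value $\Phi(\mu)=s\sqrt{r(r-1)/2}$. The only subtle point to flag is $r=2$: the vertices are collinear, Fact~\ref{fact:unique_median} does not apply, and the optimum is not unique, but the centroid (midpoint) is still an optimal $1$-median and the formula $s\sqrt{2\cdot 1/2}=s$ is correct. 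For $r\geq 3$ the points are non-collinear, so Fact~\ref{fact:unique_median} combined with the symmetry argument upgrades the conclusion to ``$\mu$ is the unique $1$-median.''
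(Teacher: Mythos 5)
Your proof is correct, and it reaches the same destination as the paper's via the same embedding $x_i=\tfrac{s}{\sqrt2}e_i$ and the same one-variable minimization; the computations of $t^{*}=s/(r\sqrt2)$ and of the final cost check out. The one step where you genuinely diverge is the reduction to the diagonal $\{t\mathbf{1}\}$. The paper argues it by contradiction from \emph{uniqueness} of the $1$-median (Fact~\ref{fact:unique_median}): if the optimal center had two unequal coordinates, swapping them would produce a second optimal median, which is impossible for non-collinear points; this is why the paper restricts that argument to $r>2$ and disposes of $r=1,2$ separately. You instead use convexity of $\Phi$ together with averaging over the coordinate-permutation group (Jensen), which shows directly that some diagonal point is at least as good as any candidate center. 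Your route buys two things: it does not presuppose existence or uniqueness of an optimal median to locate the optimum (uniqueness is only used at the end, as an upgrade to ``the unique median'' for $r\ge 3$), and it handles $r=2$ within the same framework rather than by fiat --- and you correctly flag that Fact~\ref{fact:unique_median} fails there while the formula still holds. The paper's swap argument is shorter but leans on the unproved-in-the-paper Fact~\ref{fact:unique_median}; your symmetrization is marginally longer but more self-contained. Both are valid.
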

\begin{proof}
The statement is simple for $r=1, 2$. So for the rest of the proof, we assume that $r > 2$.
Suppose $A = \{ a_{1}, a_{2},\dotsc, a_{r}\}$ denote the vertex set of a regular simplex. Let $s$ be the side length of the simplex. Using Fact~\ref{fact:isometric}, we can represent each point $a_{i}$ in an $r$-dimensional space as follows:
\[
a_{1} \coloneqq \left( \frac{s}{\sqrt{2}},0,...,0 \right),\ \ a_{2} \coloneqq \left( 0,\frac{s}{\sqrt{2}}, ...,0 \right),\ \ \dotsc, \ \ 
a_{r} \coloneqq \left( 0, 0, ..., \frac{s}{\sqrt{2}}\,\right)
\]
Note that the distance between any $a_{i}$ and $a_{j}$ is $s$, which is the side length of the simplex. 
Let $c^{*} = (c_{1},\dotsc,c_{r})$ be an optimal $1$-median of $A$. Then, the $1$-median cost is the following:
\[
\Phi(c^{*},A) = \sum_{i = 1}^{r} \| a_{i} - c^{*}\| = \sum_{i = 1}^{r} \left( \sum_{j = 1}^{r} c_{j}^{2} - c_{i}^{2} + \left(\frac{s}{\sqrt{2}} - c_{i}\right)^{2} \right)^{1/2}
\]
Suppose $c_{i} \neq c_{j}$ for any $i \neq j$. Then, we can swap $c_{i}$ and $c_{j}$ to create a different median, while keeping the $1$-median cost the same. It contradicts the fact that there is only one optimal $1$-median, by Fact~\ref{fact:unique_median}.
Therefore, we can assume $c^{*} = (c,c,\dotsc,c)$. 
Now, the optimal $1$-median cost is:
\[
\Phi^*(A) = \Phi(c^{*},A) \coloneqq r \cdot \sqrt{ \left( c-\frac{s}{\sqrt{2}} \right)^2 + (r-1)\cdot c^2 }
\]
The function $\Phi(c^{*},A)$ is strictly convex and attains minimum at $c = \fdfrac{s}{m \cdot \sqrt{2}}$, which is the centroid of $A$. The optimal $1$-median clustering cost is $\Phi(c^{*},A) = s \cdot \sqrt{\fdfrac{r(r-1)}{2}}$. This completes the proof of the lemma.
\end{proof}

\noindent The following corollary establishes the cost of a star graph $S_{i}$. 
\begin{corollary}\label{corollary:cost_star}
Any star graph $S_{i}$ with $r$ edges has the optimal $1$-median cost of $\sqrt{r(r-1)}$
\end{corollary}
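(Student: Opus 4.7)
The plan is to obtain Corollary~\ref{corollary:cost_star} as an immediate specialization of Lemma~\ref{lemma:simplex_cost}, once the geometry of the point cluster $\X(S_i)$ is identified. Since $S_i$ is a star graph, every edge in $S_i$ has the form $e = (v_i, u)$ for some neighbor $u$ of $v_i$, and the corresponding input point is $x_e = x_{v_i} + x_u$ in $\{0,1\}^n$. So all $r$ points in $\X(S_i)$ share the same coordinate contribution from $v_i$, and they differ only in which second coordinate is set to $1$.

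The first step is to verify pairwise equidistance. For any two edges $e = (v_i, u)$ and $e' = (v_i, u')$ with $u \neq u'$, one has $x_e - x_{e'} = x_u - x_{u'}$, a vector with exactly one $+1$ entry and one $-1$ entry, hence $\|x_e - x_{e'}\| = \sqrt{2}$. Consequently the set $\X(S_i)$ consists of $r$ points that are pairwise at distance $\sqrt{2}$, i.e., it is a regular simplex on $r$ vertices with side length $s = \sqrt{2}$.

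The second step is to invoke Lemma~\ref{lemma:simplex_cost} with parameters $r$ and $s = \sqrt{2}$, which yields an optimal $1$-median cost of
\[
s \cdot \sqrt{\tfrac{r(r-1)}{2}} \;=\; \sqrt{2} \cdot \sqrt{\tfrac{r(r-1)}{2}} \;=\; \sqrt{r(r-1)},
\]
which is exactly the bound asserted by the corollary.

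There is no real obstacle in this argument: the content of the corollary is entirely contained in Lemma~\ref{lemma:simplex_cost}, and the only thing to check is that the embedding $e \mapsto x_e$ of a star's edges into $\{0,1\}^n$ produces a regular simplex. The only mild care needed is to note that distinct edges of $S_i$ have distinct ``free'' endpoints, so the $\sqrt{2}$ pairwise-distance computation is valid across all $\binom{r}{2}$ pairs.
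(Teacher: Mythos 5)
Your proof is correct and matches the paper's approach exactly: the paper likewise observes that $\X(S_i)$ forms a regular simplex of side length $\sqrt{2}$ and derives the corollary directly from Lemma~\ref{lemma:simplex_cost}. Your explicit verification that $\|x_e - x_{e'}\| = \sqrt{2}$ for distinct edges of the star is a detail the paper leaves implicit, but it is the same argument.
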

Furthermore, note that a set of $r$ pairwise vertex-disjoint edges forms a regular simplex in $\X$, of side length $2$. 
The following corollary establishes the cost of such clusters. 
\begin{corollary}\label{corollary:cost_non_star}
Let $F$ be any non-star graph with $r$ pairwise vertex-disjoint edges, then the optimal $1$-median cost of $F$ is $\sqrt{2} \cdot \sqrt{r(r-1)}$
\end{corollary}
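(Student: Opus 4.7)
The plan is to reduce this corollary directly to Lemma~\ref{lemma:simplex_cost} by verifying that the point set $\X(F)$ associated to a matching of $r$ edges is a regular simplex of side length $2$. Once that is established, the formula in the statement is obtained by simply plugging $s=2$ into the expression $s\cdot\sqrt{r(r-1)/2}$ from that lemma.

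First I would unpack the construction: for any edge $e=(i,j)\in E(F)$ the corresponding point is $x_e = x_i + x_j \in \{0,1\}^n$, a $0/1$-vector whose support is exactly $\{i,j\}$. Given two edges $e=(i,j)$ and $e'=(i',j')$ of $F$, the assumption that $F$ is a matching (pairwise vertex-disjoint edges) means $\{i,j\}\cap\{i',j'\}=\emptyset$, so the vectors $x_e$ and $x_{e'}$ have disjoint supports. Consequently
\[
\|x_e - x_{e'}\|_2^{\,2} \;=\; |\{i,j\}| + |\{i',j'\}| \;=\; 4,
\]
so every pair of points in $\X(F)$ is at Euclidean distance exactly $2$. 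Thus $\X(F)$ is a set of $r$ equidistant points with common pairwise distance $s=2$, i.e.\ a regular simplex of side length $2$.

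Applying Lemma~\ref{lemma:simplex_cost} with $s=2$, the optimal $1$-median cost of $\X(F)$ equals
\[
2\cdot \sqrt{\tfrac{r(r-1)}{2}} \;=\; \sqrt{2}\cdot \sqrt{r(r-1)},
\]
which is the claimed bound. I do not foresee any real obstacle: the only non-trivial ingredient is the pairwise-distance computation, and the ``non-star'' hypothesis in the statement is effectively folded into the pairwise vertex-disjointness assumption (for $r\ge 2$ a matching is automatically not a star). Note that, in contrast to Corollary~\ref{corollary:cost_star} where the star graph produces a simplex of side length $\sqrt{2}$ because adjacent edges share one vertex, here the disjointness of the edges doubles the squared distance and introduces the extra factor of $\sqrt{2}$.
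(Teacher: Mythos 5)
Your proof is correct and follows exactly the paper's (implicit) argument: the paper likewise observes that $r$ pairwise vertex-disjoint edges yield a regular simplex of side length $2$ in $\X$ and then invokes Lemma~\ref{lemma:simplex_cost} with $s=2$. Your explicit verification of the pairwise distance via disjoint supports is a welcome detail the paper leaves unstated.
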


We use Corollary~\ref{corollary:cost_non_star} in Section~\ref{section:vertex_cover}. For now, we only use Corollary~\ref{corollary:cost_star} to bound the optimal $k$-median cost of $\X$. Let $OPT(\X,k)$ denote the optimal $k$-median cost of $\X$. The following sequence of inequalities proves the first property of Theorem~\ref{theorem:reduction_kmedian}.
\[
OPT(\X,k) \leq \sum_{i= 1}^{k} \Phi^{*}(S_{i}) \stackrel{\tiny{(Corollary~\ref{corollary:cost_star})}}{=} \sum_{i = 1}^{k} \sqrt{m_{i}(m_{i}-1)} \stackrel{\tiny{(Lemma~\ref{lemma:basic_bound})}}{\leq} \sum_{i = 1}^{k} \left( m_{i} - \frac{1}{2} \right) = m - \frac{k}{2}.
\]

\subsection{Soundness}\label{section:soundness}
Now, we prove the second property of Theorem~\ref{theorem:reduction_kmedian}. 
For this, we prove the equivalent contrapositive statement:
If the optimal $k$-median clustering of $\X$ has cost at most $\left(m - \frac{k}{2} + \delta k \right)$, for some constant $\delta >0$, then $G$ has a vertex cover of size at most $(2-\veps)k$, for some constant $\veps>0$.  
Let $\mathcal{C}$ denote an optimal $k$-median clustering of $\X$. 
We classify its optimal clusters into two categories: (1) \textit{star} and (2) \textit{non-star}. 
Let $ F_{1},F_{2},\dotsc,F_{t}$ denote the non-star clusters, and $S_{1},\dotsc,S_{k-t}$ denote the star clusters. For any star cluster, the vertex cover size is exactly one. 
Moreover, the optimal $1$-median cost of any star cluster with $r$ edges is $\sqrt{r(r-1)}$. 
On the other hand, it may be tricky to compute the vertex cover or the optimal cost of any non-star cluster exactly. Suppose the optimal $1$-median cost of a non-star cluster $F$ on $r$ edges is given as $\sqrt{r(r-1)} + \delta(F)$. We denote by $\delta(F)$ the \emph{extra-cost} of a cluster $F$. 
In the entire discussion, we will use $|.|$ to denote the number of edges in a given graph.
When used in the context of a set, $|.|$ denotes the cardinality of the given set.
Using this, we write:
\[
\delta(F) \equiv \Phi^*(F) - \sqrt{|F|(|F|-1)}
\]
The following lemmas bound the vertex cover of $F$ in terms of $\delta(F)$.

\begin{lemma}\label{lemma:non_star_vertex_cover_1}
Any non-star cluster $F$ with a maximum matching of size two has a vertex cover of size at most $1.62 + \del$.
\end{lemma}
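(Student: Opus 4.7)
I would proceed by a structural case analysis on the minimum vertex cover number $\tau(F)$, paired with a geometric lower bound on the extra cost $\delta(F)$. Fix a maximum matching $\{e_1, e_2\}$ of $F$ and let $M = \{u_1, v_1, u_2, v_2\}$ denote its four endpoints. Every edge of $F$ has at least one endpoint in $M$, otherwise the matching could be extended, so $M$ is itself a vertex cover and $\tau(F) \le 4$. A short argument using triangle-freeness rules out $\tau(F) = 4$: essentiality of all four vertices of $M$ in every optimal cover would demand four private neighbours outside $M$, but two such neighbours cannot coincide without creating a triangle with the incident matching edge, and forcing them to be pairwise distinct exhibits a matching of size $3$, contradicting $\nu(F) = 2$. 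Hence $\tau(F) \in \{2, 3\}$.

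Since $\tau(F)$ is an integer, the target inequality $\tau(F) \le 1.62 + \del$ is equivalent to two lower bounds on the extra cost: $\delta(F) \ge 0.38/(\sqrt{2}+1)$ when $\tau(F) = 2$, and $\delta(F) \ge 1.38/(\sqrt{2}+1)$ when $\tau(F) = 3$. The $\tau(F) = 2$ case corresponds to double-star subgraphs (two centres together touching every edge), and non-star-ness forces at least one pair of vertex-disjoint edges to exist in $F$, which realise points in $\X$ at pairwise distance $2$ rather than $\sqrt{2}$. For $\tau(F) = 3$ the triangle-free condition severely restricts the possibilities (typified by odd cycles such as $C_5$, since any chord would create a triangle), and the required extra cost is correspondingly larger.

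The core of the proof is therefore the lower bound on $\delta(F) = \Phi^*(F) - \sqrt{|F|(|F|-1)}$ in each case, which I would attack via the decomposition technique outlined in Section~\ref{subsection:kmeans_comparison}. Concretely, pick a large star subgraph $S \subseteq F$ and split the contribution to $\Phi^*(F)$ into the cost against $\X(S)$ and the cost against the remaining edges. For any candidate centre, the cost against $\X(S)$ is at least $\Phi^*(S) = \sqrt{|S|(|S|-1)}$ by Corollary~\ref{corollary:cost_star}. The cost against the extra edges $F \setminus S$ is handled by using Fact~\ref{fact:isometric} to place $\X(F)$ in a convenient coordinate system: the vertex-disjoint structure forced by non-star-ness ensures that the extra points lie at distance at least $2$ from the star's centroid, and this yields a positive, explicitly computable slack over the naive bound. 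Applying Lemma~\ref{lemma:basic_bound} to convert $\sqrt{|F|(|F|-1)}$ into $|F|-1/2$ then delivers the additive gap required.

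The main obstacle I anticipate is precisely this uniform lower bound on $\Phi^*(F)$, since the Fermat-Weber centre has no closed form on general point sets and so $\Phi^*(F)$ cannot be computed directly. My plan to circumvent this is to argue uniformly over all candidate centres $c$ by projecting $\X(F)$ onto carefully chosen one-dimensional directions: one aligned with the axis of the star subgraph $S$ (where the cost of $\X(S)$ is already governed by Lemma~\ref{lemma:simplex_cost}) and another capturing the displacement introduced by the disjoint extra edges. Verifying that the resulting slack matches or exceeds $(\tau(F) - 1.62)/(\sqrt{2}+1)$ across both structural cases, and tuning the choice of star subgraph $S$ optimally for each case so that the constant $1.62$ in the statement is actually achieved, is where I expect the calculation to require the most care.
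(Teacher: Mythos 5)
Your structural half is sound and matches the paper's. You correctly reduce the claim to showing $\tau(F)\le 3$ together with the two cost bounds $\delta(F)\ge 0.38/(\sqrt{2}+1)\approx 0.1574$ when $\tau(F)=2$ and $\delta(F)\ge 1.38/(\sqrt{2}+1)\approx 0.5716$ when $\tau(F)=3$; your argument excluding $\tau(F)=4$ (the two private neighbours of $u_1,v_1$ outside the matched vertex set either coincide, forming a triangle, or together with $(u_2,v_2)$ form a matching of size three) is essentially the paper's Lemma~\ref{lemma:vertex_cover_2matching}, which goes further and identifies $C_5$ as the \emph{unique} graph with maximum matching two and vertex cover three.

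The genuine gap is that the two lower bounds on $\delta(F)$, which are the entire substance of the lemma, are left as a plan, and the plan as sketched would not deliver the required constants. The binding family for $\tau(F)=2$ is the bridge graph $L_{p,q}$ (in particular $L_n=L_{n,1}$): since $\sqrt{|F|(|F|-1)}\to |F|-1/2$, you must certify $\Phi^{*}(L_n)\ge |L_n|-0.342$, leaving a margin of only about $0.0006$ over the $0.1574$ you need. No single decomposition of $L_n$ into a star subgraph plus leftover edges achieves this: for instance $\Phi^{*}(S_n)+\Phi^{*}(\{\text{bridge, pendant}\})\approx |L_n|-1.09$ and $\Phi^{*}(S_{n-1})+\Phi^{*}(2\text{-}P_2)\approx |L_n|-1.5$, both short by roughly a full unit. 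The paper closes this gap only by classifying all fundamental non-star graphs ($3$-$P_{2}$, $A_n$, $L_n$), computing $\Phi^{*}(A_n)$ exactly via a symmetry argument on the optimal median, bounding $\Phi^{*}(L_n)$ by averaging over two or three copies decomposed into $A_n$, $S_n$ and the path $P_4=L_1$ (whose cost $1+\sqrt{3}$ is itself a separate planar computation), and then assembling Lemma~\ref{lemma:decompose_lemma} through the safe-pair decomposition. Likewise, for $\tau(F)=3$ you assert the extra cost is ``correspondingly larger'' without any bound; the paper identifies $C_5$ as the only such graph and proves $\delta(C_5)\ge 0.622$ by splitting it into $A_1$ and $A_2$ (Lemma~\ref{lemma:cost_5cycle}). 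Your one-dimensional projection idea is not developed enough to see whether it could reach constants of this precision, so as written the proof is incomplete exactly where the difficulty lies.
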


\begin{lemma}\label{lemma:non_star_vertex_cover_2}
Any non-star cluster $F$ with a maximum matching of size at least three has a vertex cover of size at most $1.8 + \del$. 
\end{lemma}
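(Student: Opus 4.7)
The plan is to prove the equivalent lower bound $\delta(F) \geq (\sqrt{2}-1)\bigl(\text{vc}(F) - 1.8\bigr)$, i.e.,
\[
\Phi^{*}(F) \;\geq\; \sqrt{|F|(|F|-1)} \;+\; (\sqrt{2}-1)\bigl(\text{vc}(F)-1.8\bigr),
\]
which forces $\delta(F)$ to scale with $\text{vc}(F)$ and so converts structural complexity of $F$ into a penalty in the $1$-median cost. The workhorse is \emph{super-additivity} of the optimal $1$-median cost under edge-disjoint decompositions: since each point $x_{e} \in \X(F)$ corresponds to exactly one edge $e$, any partition $F = F_{1} \sqcup F_{2}$ of the edges induces a partition of the points, and
\[
\Phi^{*}(F) \;=\; \min_{c} \sum_{e \in F} \|x_{e} - c\| \;\geq\; \Phi^{*}(F_{1}) + \Phi^{*}(F_{2}).
\]
Combined with the closed-form costs of stars from Corollary~\ref{corollary:cost_star} and of matchings (regular simplices of side $2$) from Corollary~\ref{corollary:cost_non_star}, any chosen decomposition of $F$ into stars and matchings yields a concrete lower bound on $\Phi^{*}(F)$.

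Concretely, I would proceed in four steps. First, fix a maximum matching $M \subseteq F$ of size $\mu \geq 3$ and set $F' = F \setminus M$; by maximality every edge of $F'$ touches $V(M)$, and the triangle-freeness of $F$ further restricts how those attachments can occur. Second, apply super-additivity to obtain $\Phi^{*}(F) \geq \Phi^{*}(M) + \Phi^{*}(F') = \sqrt{2}\sqrt{\mu(\mu-1)} + \Phi^{*}(F')$, and then decompose $F'$ into edge-disjoint stars centered at vertices of $V(M)$ and lower-bound $\Phi^{*}(F')$ by $\sum_{i} \sqrt{r_{i}(r_{i}-1)}$, where $r_{i}$ is the size of the $i$th star. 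Third, upper-bound $\text{vc}(F)$ in terms of $\mu$ and of the star sizes $r_{i}$: the trivial bound $\text{vc}(F) \leq 2\mu$ is usually very loose, and triangle-freeness lets one replace matching endpoints by high-degree vertices from $V(M)$, pushing the cover much closer to $\mu$ plus a small constant. Fourth, substitute everything into the target inequality and apply Lemma~\ref{lemma:basic_bound} to replace each $\sqrt{m(m-1)}$ by the affine surrogate $m - (\text{constant})$; comparing the resulting linear combination with $\sqrt{|F|(|F|-1)}$ should yield a bound of the form $\delta(F) \geq (\sqrt{2}-1)\mu - \kappa$ for an explicit constant $\kappa$, which together with the bound on $\text{vc}(F)$ from the third step produces the required inequality.

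The main obstacle is calibrating the various constants so that exactly $1.8$ falls out. As in the companion Lemma~\ref{lemma:non_star_vertex_cover_1} (which produces $1.62$), the tight instances are small configurations, most plausibly triangle-free graphs with matching number exactly $3$ and only a handful of extra edges, and the constant $1.8$ should emerge from optimizing over these extremal cases. For $\mu$ noticeably larger than $3$, the $\sqrt{2}\sqrt{\mu(\mu-1)}$ term gives slack that grows linearly in $\mu$ with slope $\sqrt{2}-1$, so the bound is comparatively easy to establish there. The delicate points will be (a) picking the right star decomposition of $F'$ so that its aggregate cost $\sum_{i}\sqrt{r_{i}(r_{i}-1)}$ plays well against $\sqrt{|F|(|F|-1)}$ when Lemma~\ref{lemma:basic_bound} is applied on both sides, and (b) using triangle-freeness systematically to rule out configurations where $\text{vc}(F)$ could exceed $1.8 + \del[F]$ without forcing a commensurate growth of $\delta(F)$.
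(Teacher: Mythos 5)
Your high-level skeleton (super-additive decomposition, exact costs of stars and matchings, affine surrogates via Lemma~\ref{lemma:basic_bound}, then trading the vertex cover against the extra cost) is indeed the paper's strategy, but two of your concrete steps would fail, and they are exactly where the constant $1.8$ comes from. First, your vertex cover bound is wrong as stated: for a triangle-free $F$ with maximum matching $M$ the cover is \emph{not} ``$\mu$ plus a small constant.'' The paper's Lemma~\ref{lemma:general_vertex_cover} gives $|VC(F)| \leq |M| + |L| - 1$, where $L$ is a maximum matching of $F' = F \setminus M$ (proved via K\H{o}nig's theorem on the bipartite graph spanned by $M \cup L$), and $|L|$ can be of the same order as $|M|$. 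The argument only closes because the cost lower bound is made to grow like $(\sqrt{2}-1)(|M|+|L|)$ in tandem, which requires decomposing $F'$ further into $G_L$ and $F''$ rather than into stars. Second, decomposing $F'$ into stars is too lossy: a single-edge star contributes $\sqrt{1\cdot 0}=0$ to your lower bound while contributing $1$ to $|F|$, so each such star costs you a full unit against $\sqrt{|F|(|F|-1)}$. In the bottleneck case $|L|=1$ ($F'$ a star, $|VC(F)|=|M|$ exactly), the naive $G_M \oplus F'$ decomposition only yields $|VC(F)| \leq (\sqrt{2}+1)\delta(F) + 3.09$ or so, not $1.8$. The paper gets $1.8$ by a reshuffling trick (Lemma~\ref{lemma:case2_2}): move one edge $e$ of $F'$ into the matching side and up to two matching edges into the star side, so that the second part becomes a graph of type $A_n$ whose $1$-median cost is at least its edge count \emph{with zero loss} (Lemma~\ref{lemma:cost_An}, itself a nontrivial explicit computation of the median of a near-simplex).

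More broadly, your proposal is missing the entire ``fundamental non-star graph'' apparatus that the paper builds precisely because star decompositions do not suffice: the classification into $3$-$P_2$, $A_n$, $L_n$ (Lemma~\ref{lemma:fundamental_graphs}), their individual cost bounds, and the safe-pair/ultra-safe-pair decompositions giving $\Phi^{*}(F) \geq |F| - 0.342$ for any non-star graph and $\Phi^{*}(F) \geq |F|$ for non-star non-bridge graphs (Lemmas~\ref{lemma:decompose_lemma} and~\ref{lemma:ultra_decompose}). These, together with a five-way case analysis on $|L|$ and on whether $F'$ or $F''$ is a bridge graph, are what produce the per-case constants $0.551$, $1.8$, $1.53$, $1.68$, $1.6$, $1.4$, of which $1.8$ is the maximum. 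Without replacements for these ingredients your calibration step cannot reach $1.8$; it is not merely a matter of optimizing over small extremal configurations with $\mu = 3$.
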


These lemmas are the key to proving the main result. 
We will prove these lemmas later. 
First, let us see how they give a vertex cover of size at most $(2-\veps)k$. 
Let us classify the star clusters into the following two sub-categories:
\begin{enumerate}
    \item[(a)] Clusters composed of exactly one edge. Let these clusters be: $P_{1},P_{2},\dotsc,P_{t_{1}}$. 
    \item[(b)] Clusters composed of at least two edges. Let these clusters be: $S_{1},S_{2},\dotsc,S_{t_{2}}$.
\end{enumerate}
    
\noindent Similarly, we classify the non-star clusters into the following two sub-categories:
\begin{enumerate}
    \item[(i)] Clusters with a maximum matching of size two. Let these clusters be: $W_{1}, W_{2}, \dotsc,W_{t_{3}}$ 
    \item[(ii)] Clusters with a maximum matching of size at least three. Let these clusters be: $Y_{1}, Y_{2}, \dotsc,Y_{t_{4}}$  
\end{enumerate}

\noindent Note that $t_{1} + t_{2} + t_{3} + t_{4}$ equals $k$. Now, consider the following strategy of computing the vertex cover of $G$. Suppose, we compute the vertex cover for every cluster separately. Let $C_{i}$ be any cluster, and $|VC(C_{i})|$ denote the vertex cover size of $C_{i}$. Then, the vertex cover of $G$ can be simply bounded in the following manner:
\[
|VC(G)| \leq  \sum_{i = 1}^{t_{1}}|VC(P_{i})| + \sum_{i = 1}^{t_{2}}|VC(S_{i})| + \sum_{i = 1}^{t_{3}}|VC(W_{i})| + \sum_{i = 1}^{t_{4}}|VC(Y_{i})|
\]
However, we can obtain a vertex cover of smaller size using a slightly different strategy. 
In this strategy, we first compute a minimum vertex cover of all the clusters except single edge clusters $P_{1},P_{2},\dotsc,P_{t_{1}}$.
Suppose that vertex cover is $VC'$. 
Then we compute a vertex cover for $P_{1},P_{2},\dotsc,P_{t_{1}}$. Now, let us see why this strategy gives a vertex cover of smaller size than before.
Note that some vertices in $VC'$ may also cover the edges in $P_{1},\dotsc,P_{t_{1}}$. 
Suppose there are $t_{1}'$ clusters in $P_{1},\dotsc,P_{t_{1}}$ that remain uncovered by $VC'$. 
Without loss of generality, assume these clusters to be $P_{1},\dotsc,P_{t_{1}'}$. 
Now, the vertex cover of $G$ is bounded in the following manner:
\begin{eqnarray*}
|VC(G)| &\leq&  |VC\left( \cup_{i=1}^{t_1'}P_{i}\right)| + |VC'| \\
&=& |VC\left( \cup_{i=1}^{t_1'}P_{i}\right)| + |VC\left( (\cup_{j=1}^{t_2} S_j) \cup (\cup_{k=1}^{t_3} W_k) \cup (\cup_{l=1}^{t_4} Y_l)\right)|\\
&\leq& |VC\left( \cup_{i=1}^{t_1'}P_{i}\right)| + \sum_{i = 1}^{t_{2}}|VC(S_{i})| + \sum_{i = 1}^{t_{3}}|VC(W_{i})| + \sum_{i = 1}^{t_{4}}|VC(Y_{i})|
\end{eqnarray*}
Now, we will try to bound the size of the vertex cover of $P_{1} \cup ... \cup P_{t_{1}'}$. 
Note that we can cover all these single-edge clusters with $t_{1}'$ vertices by choosing one vertex per cluster.
However, it may be possible to obtain a vertex cover of smaller size if we collectively consider all these clusters. 
Suppose $E_{P}$ denote the set of all edges in $P_{1},\dotsc,P_{t_{1}'}$ and $V_{P}$ denote the vertex set spanned by them. 
We define a graph $G_{P} = (V_{P},E_{P})$. 
Note that $G_{P}$ is a subgraph of $G$.
Further, we define another subgraph $\overline{G}_{P}$ of $G$ that is obtained after removing the edges of $E_{P}$ from $G$. 
That is, $\overline{G}_{P} = (V,E \setminus E_{P})$. 
In other words, $\overline{G}_{P}$ is the graph spanned by the remaining clusters: $S_{1},\dotsc,S_{t_2}$; $W_{1},\dotsc,W_{t_3}$; $Y_{1}, \dotsc,Y_{t_4}$; and $P_{t_1'+1},\dotsc, P_{t_1}$.
An important property of $G_{P}$ is that any edge of $\overline{G}_{P}$ does not have its both endpoints in $V_{P}$. In other words, every edge of $\overline{G}_{P}$ is incident on at most one edge of $G_{P}$. 
This is because every edge of $\overline{G}_{P}$ has at least one endpoint in $VC'$, and $G_{P}$ is only defined on the edge that are not incident on $VC'$. This property will help us in obtaining a better vertex cover for $P_{1},\dotsc,P_{t_{1}'}$. We bound the size of the vertex cover in the following lemma.

\begin{lemma} \label{lemma:single_star_vertex_cover}
Let $\delta > 0$ be any constant and $G_P$ be as defined above.
If $G_P$ does not have a vertex cover of size $\leq (\frac{2t_{1}'}{3} + 8 \delta k)$, then $G$ has a vertex cover of size at most $(2k -2\delta k)$.
\end{lemma}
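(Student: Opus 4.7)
The plan is to proceed by contrapositive: assume $G_P$ does not admit a vertex cover of size at most $\tfrac{2t_1'}{3}+8\delta k$, and construct a vertex cover of $G$ of size at most $(2-2\delta)k$.

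The first step is to extract a large matching from $G_P$. Using the classical inequality $\tau(H)\le 2\mu(H)$ for any graph $H$, the hypothesis implies that any maximum matching $M$ of $G_P$ satisfies $\mu := |M| > \tfrac{t_1'}{3}+4\delta k$. In particular, $G_P$ contains $\mu$ pairwise vertex-disjoint edges, and by the definition of $G_P$ none of the endpoints of these edges lie in $VC'$.

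The second step is to construct a vertex cover of $G$. I start with $VC'$, which already covers $\overline{G}_P$ by construction. To cover $G_P$, I use $V(M)$: by maximality of $M$ every edge of $G_P$ is incident to some vertex of $V(M)$, so $V(M)$ is a vertex cover of $G_P$. Hence $VC'\cup V(M)$ is a vertex cover of $G$, and what remains is to bound its size. The bound on $|VC'|$ is obtained by combining Lemma~\ref{lemma:non_star_vertex_cover_1} and Lemma~\ref{lemma:non_star_vertex_cover_2} for the non-star clusters $W_i$ and $Y_i$, the observation that each multi-edge star $S_i$ admits a one-vertex cover, and the cost budget $\sum_F \delta(F)=O(\delta k)$ that follows from the assumption that the $k$-median cost is at most $m-k/2+\delta k$ together with Lemma~\ref{lemma:basic_bound}. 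The extra $8\delta k$ slack in the hypothesis is calibrated precisely to absorb the $(\sqrt{2}+1)\sum_F \delta(F)$ contribution of the non-star costs.

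The main obstacle I expect is the tightness of the final counting. A naive combination gives $|V(M)|+|VC'|\le 2\mu + |VC'|$, which can approach $2k$ when $\mu$ is close to $t_1'\approx k$, leaving no room for the desired $-2\delta k$ savings. Beating this requires exploiting the interaction between the large matching in $G_P$ and the cost budget: intuitively, a large matching of single-edge clusters forces the non-star clusters to have small total extra cost, so $|VC'|$ must be correspondingly smaller, and the structural property that every edge of $\overline{G}_P$ touches $VC'$ at most once must be used to avoid double-counting. The delicate part is balancing the $(\sqrt{2}+1)\sum_F\delta(F)$ factor against the matching bound so that everything fits inside the $(2-2\delta)k$ target.
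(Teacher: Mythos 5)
Your first step is sound and matches the paper: from $\tau(G_P)>\frac{2t_1'}{3}+8\delta k$ and $\tau\le 2\mu$ you correctly conclude that $G_P$ contains a matching $M_P$ of size greater than $\frac{t_1'}{3}+4\delta k$ (the paper phrases this as Case~I versus Case~II for a maximal matching, but it is the same dichotomy). The gap is everything after that. Your proposed cover $VC'\cup V(M_P)$ is a valid vertex cover of $G$, but, as you yourself observe, its size cannot be pushed below $2k-2\delta k$: you only have a \emph{lower} bound on $|M_P|$, so $2|M_P|$ can be as large as $2t_1'$, and adding $|VC'|$ leaves no room for the $2\delta k$ savings. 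Your suggested repair --- invoking the clustering cost budget and Lemmas~\ref{lemma:non_star_vertex_cover_1} and~\ref{lemma:non_star_vertex_cover_2} to shrink $|VC'|$ --- is not available here: the lemma is a purely combinatorial statement about $G$, $G_P$ and $VC'$, used as a black box \emph{before} the cost budget is ever applied in the soundness argument, and even with the budget the arithmetic would not yield the required $-2\delta k$ term. So the proposal identifies the obstacle but does not overcome it.

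The missing idea is to abandon $VC'$ entirely and instead build, from scratch, a maximal matching $M_G$ of the whole graph $G$ together with a vertex cover $VC_G$ that takes \emph{both} endpoints of each edge of $M_G$ except for at least $2\delta k$ edges, from which only one endpoint is taken; since hard instances have $|M_G|\le k$, this gives $|VC_G|\le 2k-2\delta k$. The savings are extracted from the large matching $M_P\subseteq G_P$ itself, using two structural facts you did not exploit: triangle-freeness of $G$, and the property that every edge of $\overline{G}_P$ has at most one endpoint in $V_P$. Concretely, the paper first matches the part of $G$ not touching $M_P$, then prunes $M_P$ (Procedures~2 and~3) so that at least $3\delta k$ of its edges survive with every remaining edge of $G_R$ incident on exactly one endpoint of exactly one surviving edge. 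Each surviving edge is then either a ``plank'' edge, which forces two extra vertex-disjoint edges $e_u,e_v$ into $M_G$ (vertex-disjoint precisely because of triangle-freeness) so that $|VC_G|=2|M_G|-2|M_Y|$, or a non-plank edge, which can be covered by a single vertex so that $|VC_G|=2|M_G|-|M_N|$. Since $|M_Y|+|M_N|\ge 3\delta k$, one of $|M_Y|\ge\delta k$ or $|M_N|\ge 2\delta k$ holds, and either way the cover has size at most $2k-2\delta k$. Without some version of this ``one endpoint suffices for many matching edges'' mechanism, the counting in your approach cannot close.
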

\begin{proof}
Let $M_{P}$ be a maximal matching of $G_{P}$.
We divide the analysis into two cases based on the size of $M_{P}$. In the first case, we consider $|M_{P}| \leq (t_{1}'/3 + 4\delta k)$ and show that $G_{P}$ has a vertex cover of size at most $(2t_{1}'/3 + 8 \delta k)$. 
In the second case, we consider $|M_{P}| > (t_{1}'/3 + 4\delta k)$ and show that $G$ has a vertex cover of size at most $(2k - 2\delta k)$. 
Let us consider these cases one by one.

\begin{itemize}
    \item \uline{Case-I}:  $\big( \, |M_{P}| \leq t_{1}'/3 + 4\delta k \, \big)$
    
    \noindent In this case, we simply cover all edges of $G_{P}$ by picking both endpoints of every edge in $M_{P}$. 
    Thus, $G_{P}$ can be covered using only $(2t_{1}'/3 + 8\delta k)$ vertices.
    
    \item  \uline{Case-II}: $\big( \, |M_{P}| > t_{1}'/3 + 4 \delta k \, \big)$
    
    In this case, we will incrementally construct a vertex cover $VC_{G}$ of the graph $G$ of size at most $(2k - 2\delta k)$.
    First, let us discuss the main idea of this incremental construction.
    During the construction, we maintain a maximal matching $M_{G}$ of the graph $G$. 
    Then, for every edge in $M_{G}$, we add both its endpoints to $VC_{G}$, except at least $2\delta k$ edges for which we choose only one endpoint in $VC_{G}$. 
    Suppose, $VC_{G}$ covers all edges in $G$. 
    Then, we can claim that $VC_{G}$ has a size at most $(2k - 2\delta k)$.
    Note that for the hard \vc instances, we can assume that maximum matching size is at most $k$. 
    This is because the graphs with a matching of size $> k$ have a minimum vertex cover of size $> k$. 
    Therefore, such instances can be simply classified as ``No'' instances in polynomial time. 
    Since the size of a maximal matching is always less than the size of a maximum matching, we can further assume $|M_{G}| \leq k$ for the hard \vc instances. 
    This implies that $VC_{G}$ has a size at most $2(k-2\delta k)+2\delta k = (2k - 2\delta k)$.
    Now, let us discuss the construction of such a matching $M_G$ and correspondingly the vertex cover $VC_{G}$.
    
    Initially, both $M_{G}$ and $VC_{G}$ are empty sets. That is, $M_{G} = \emptyset$ and $VC_{G} = \emptyset$. Recall that $G_{P}$ is the graph spanned by $E_{P}$ and $\overline{G}_{P}$ is the graph spanned by $E \setminus E_{P}$. Let $E_{I}$ denote the set of edges in $\overline{G}_{P}$ that are incident on $M_{P}$. Based on this, we define two new graph: (1) $G_{R} \coloneqq (V,E_{R})$ where $E_{R} = E_{P} \cup E_{I}$, and (2) $G' \coloneqq (V,E')$ where $E' = E \setminus (E_{P} \cup E_{I})$. In other words, $G_{R}$ is the graph spanned by the edges of $G_{P}$ and the edges of $\overline{G}_{P}$ that are incident on $M_{P}$; and $G'$ is the graph spanned by the edges of $\overline{G}_{P}$ that are not incident on $M_{P}$. 
    Now, we compute a maximal matching $M'$ of $G'$, and execute the following procedure:\vspace{-0.5cm}

\begin{minipage}[t]{\linewidth}
\begin{figure}[H]
\begin{mdframed}
{\tt Procedure 1} \vspace*{1mm} \\
\hspace{1cm} (1) \hspace*{0.1cm} $M_{G} \gets M'$ \\
\hspace{1cm} (2) \hspace*{0.1cm} For every edge $e \equiv (u,v) \in M_{G}$: \\
\hspace{1cm} (3) \hspace*{0.7cm} $VC_{G} \gets VC_{G} \cup \{u,v\}$ \\
\hspace{1cm} (4) \hspace*{0.7cm} Update $G_{R}$ by removing all the edges in them that are incident on $u$ and $v$\\
\vspace*{-3mm}
\end{mdframed}
\vspace*{-4mm}
\caption{}
\label{fig:Procedure_1}
\end{figure}
\end{minipage}
    Note that the above procedure removes every edge of $G'$ since $M'$ is a maximal matching of $G'$. 
    Now, we will find a vertex cover and maximal matching of updated $G_{R}$. Note that any maximal matching of $G_{R}$ can be combined with that of $M_{G}$ to form a maximal matching of the original graph $G$ since we already removed the edges which were incident on $M_{G}$.
    
    Note that $G_{R}$ is composed of the edge sets $E_{P}$ and a subset of edges from $E_{I}$.
    Therefore, $G_{P}$ is also a subgraph of $G_{R}$. Recall that $M_{P}$ is a maximal matching of $G_{P}$.
    We define a new edge set $U_{P}$ that denote the set of unmatched edges in $G_{P}$, i.e., $U_{P} = E_{P} \setminus M_{P}$. 
    Note that $|U_{P}| \leq 2t_{1}'/3 - 4 \delta k$ since $|M_{P}| > t_{1}'/3 + 4 \delta k$. Moreover, every edge in $U_{P}$ is incident on $M_{P}$, since $M_{P}$ is a maximal matching of $G_{P}$. 
    In the next two procedures, we remove some edges from $U_{P}$ and $M_{P}$ such that the updated $U_{P}$ only contains those edges that are incident on one  edge of the updated $M_{P}$. 
    Then, we will use this graph to obtain a vertex cover of $G$ of size at most $2k - 2\delta k$.
    Following is the first procedure:\vspace{-0.5cm}

\begin{minipage}[t]{\linewidth}
\begin{figure}[H]
\begin{mdframed}
{\tt Procedure 2} \vspace*{1mm} \\
\hspace{1cm} (1) \hspace*{0.3cm} 
\textbf{while} there is an edge $e \equiv (u,v) \in M_{P}$ that is incident on two edges of $U_{P}$ \\
\hspace{1cm} (2) \hspace*{0.7cm} $M_{G} \gets M_{G} \cup \{e\}$ \\
\hspace{1cm} (3) \hspace*{0.7cm} $VC_{G} \gets VC_{G} \cup \{u,v\}$ \\
\hspace{1cm} (4) \hspace*{0.7cm} Update $G_{R}$, $M_{P}$, and $U_{P}$ by removing all the edges in them that are \\
\hspace*{1.7cm} incident on $u$ and $v$\\
\vspace*{-3mm}
\end{mdframed}
\vspace*{-4mm}
\caption{}
\label{fig:Procedure_2}
\end{figure}
\end{minipage} 
 
    Note that before the beginning of this procedure there were at most $\leq 2t_{1}'/3 - 4 \delta k$ edges in $U_{P}$ and at least $t_{1}'/3 + 4 \delta k$ edges in $M_{P}$. Then, the above procedure removes at least two edges from $U_{P}$ and  one edge from $M_{P}$ in each iteration of the while loop. Therefore, at the end of the procedure at least $6 \delta k$ edges remain in $M_{P}$.  Moreover, at the end of the procedure, $M_{P}$ has the property that no two edges in $U_{P}$ are incident on the same edge of $M_{P}$. Next, consider the following procedure:\vspace{-0.5cm}
    
    \begin{minipage}[t]{\linewidth}
\begin{figure}[H]
\begin{mdframed}
{\tt Procedure 3} \vspace*{1mm} \\
\hspace{1cm} (1) \hspace*{0.3cm} \textbf{while} there is an edge $e \equiv (u,v) \in U_{P}$ that is incident on two edges $e_{1}, e_{2}  \in M_{P}$ \\
\hspace{1cm} (2) \hspace*{0.7cm} Arbitrarily pick one edge from $\{e_{1},e_{2} \} $. W.l.o.g., let $e_{1} \equiv (u,v)$ be that edge. \\
\hspace{1cm} (3) \hspace*{0.7cm} $M_{G} \gets M_{G} \cup \{e_{1}\}$ \\
\hspace{1cm} (4) \hspace*{0.7cm} $VC_{G} \gets VC_{G} \cup \{u,v\}$ \\
\hspace{1cm} (5) \hspace*{0.7cm} Update $G_{R}$, $M_{P}$, and $U_{P}$ by remove all the edges in them that are \\
\hspace*{1.7cm} incident on $u$ and $v$\\
\vspace*{-3mm}
\end{mdframed}
\vspace*{-4mm}
\caption{}
\label{fig:Procedure_2}
\end{figure}
\end{minipage} 

    Let $p \geq 6\delta k$ denote the number of edges in $M_{P}$, before the beginning of the above procedure.
    Now, observe that the above procedure removes one edge from $U_{P}$ and one edge from $M_{P}$ in each iteration of the while loop. Furthermore, the while loop executes at most $p/2$ times since $M_{P}$ had the property that two edges of $U_{P}$ do not incident on the same edge of $M_{P}$. Therefore, at the end of the procedure, $M_{P}$ contains at least $p/2 \geq 3 \delta k$ edges. 
    Moreover, $U_{P}$ has obtained the property that all its edges are incident on exactly one edge of $M_{P}$. Now, recall that all edges in $E_{I}$ are also incident on exactly one edge of $M_{P}$. We proved this property earlier (just before we stated Lemma~\ref{lemma:single_star_vertex_cover}) for every edge of $\overline{G}_{p}$ that was incident on $G_{p}$. Therefore, at this point, $G_{R}$ consists of the edge set $M_{P}$ of size at least $3 \delta k$, and the edges that are incident on exactly one edge of $M_{P}$. Now, we will find a maximal matching and vertex cover of the remaining graph $G_{R}$.
    
    We color the edges of $M_{P}$ with red color and the remaining edges of $G_{R}$ with blue color. 
    Now, let us define the concept of ``\emph{plank edge}''. 
    A plank edge is red edge $e \equiv (u,v) \in M_{P}$ that satisfies the following two conditions. 
    The first condition is that at least one blue edge in $G_{R}$ is incident on $u$ and at least one incident on $v$. 
    Let $e_{u}$ and $e_{v}$ denote the blue edges incident on $u$ and $v$, respectively. 
    The second condition is that $e_{u}$ and $e_{v}$ should be vertex disjoint from every edge of $M_{G}$ (with respect to the current set $M_G$ which keeps getting updated). 
    In other words, we should be able to add $e_{u}$ and $e_{v}$ to $M_{G}$. 
    Note that $e_{u}$ and $e_{v}$ do not share any common vertex; otherwise it would form a triangle. 
    Therefore, we can add both of them to $M_{G}$.
    Now, we complete the construction of the maximal matching $M_{G}$ using the following procedure.\vspace{-0.5cm}
    
    \begin{minipage}[t]{\linewidth}
\begin{figure}[H]
\begin{mdframed}
{\tt Procedure 4} \vspace*{1mm} \\
\hspace{1cm} (1) \hspace*{0.3cm} $T \gets \emptyset$ \\
\hspace{1cm} (2) \hspace*{0.3cm} $M_{Y} \gets \emptyset$ \quad \textit{*(this variable accumulates plank edges below)*} \\
\hspace{1cm} (3) \hspace*{0.3cm} 
$M_{N} \gets M_{P}$ \quad \textit{*(variable for non-plank edges)*} \\
\hspace{1cm} (4) \hspace*{0.3cm} \textbf{while} there is a plank edge $e \equiv (u,v) \in M_{N}$\\
\hspace{1cm} (5) \hspace*{1.1cm} 
$M_{Y} \gets M_{Y} \cup \{e\}$ \\
\hspace{1cm} (6) \hspace*{1.1cm} 
$M_{N} \gets M_{N} \setminus \{e\}$ \\
\hspace{1cm} (7) \hspace*{1.1cm} 
$T \gets T \cup \{e_{u},e_{v}\}$ \\
\hspace{1cm} (8) \hspace*{1.1cm} 
$M_{G} \gets M_{G} \cup \{e_{u},e_{v}\}$ \\
\hspace{1cm} (9) \hspace*{0.3cm} 
$M_{G} \gets M_{G} \cup M_{N}$\\
\vspace*{-3mm}
\end{mdframed}
\vspace*{-4mm}
\caption{}
\label{fig:Procedure_4}
\end{figure}
\end{minipage}

    Note that the above procedure adds one edge in $M_{Y}$ and two edges in $T$ in every iteration of the while loop. Therefore, $|T| = 2 \cdot |M_{Y}|$. Also, note that $T \subseteq M_{G}$.
    Now, we complete the construction of the vertex cover $VC_{G}$. We consider two sub-cases based on the size of $M_{Y}$. And, for each of the sub-cases, we construct the vertex cover separately.
    \begin{enumerate}
        \item \uline{Sub-case:} ($|M_{Y}| \geq \delta k$)
        
        For every edge in $M_{P}$, we simply add both its endpoints to $VC_{G}$. 
        This completes the construction of $VC_{G}$. It covers all edges of $G_{R}$ since all edges are incident on some edge of $M_{P}$. Let us compute the size of $VC_{G}$.
        Note that for every edge in $M_{Y} \subseteq M_{P}$, there are two edges in $T \subseteq M_{G}$. Therefore, the size of vertex cover is:
        \[
        |VC_{G}| = 2 |M_{G}| - |T| =2 |M_{G}| - 2 |M_{Y}| \leq 2|M_{G}| - 2\delta k \leq 2k - 2\delta k
        \]

        \item \uline{Sub-case:} ($|M_{Y}| < \delta k$)
        
        For this sub-case, we construct the vertex cover in the following manner. For every edge in $T$, we add its both endpoints to $VC_{G}$. And, we remove all the edges in $G_{R}$ covered by them. The remaining graph contains the set $M_{N}$, and some blue edges incident on it. Since $M_{N}$ is defined on non-plank edges, the remaining blue edges can not incident on both endpoints of any edge of $M_{N}$. Now, for every edge in $M_{N}$, we pick its that endpoint in the vertex cover that it shares with the blue edges incident on it. It completes the construction of $VC_{G}$, and it covers all edges of $G_{R}$. Note that for every edge in $M_{G}$, we added its both endpoints to $VC_{G}$ except the edges that came from $M_{N}$ for which, we just added one endpoint in $VC_{G}$. Also, note that $|M_{N}| = |M_{P}| - |M_{Y}| > 3 \delta k - \delta k = 2 \delta k$.
        Therefore, the size of the vertex cover is:
        \[
        |VC_{G}| = 2|M_{G}| - |M_{N}| < 2 |M_{G}| - 2\delta k \leq 2k - 2\delta k
        \]

    \end{enumerate}
    
\end{itemize}
Hence, we have a vertex cover of size at most $2k - 2\delta k$. This completes the proof of the lemma.
\end{proof}

\noindent Based on the above lemma, we will assume that all single edge clusters can be covered with $(\frac{2t_{1}'}{3} + 8 \delta k) \leq (\frac{2t_{1}}{3} + 8 \delta k)$ vertices; otherwise the graph has a vertex cover of size at most $(2k - 2\delta k)$ and the soundness proof would be complete. 
Now, we bound the vertex cover of the entire graph in the following manner.
\begin{align*}
|VC(G)| &\leq |VC\left( \cup_{i=1}^{t_1'}P_{i}\right)| + |VC'| \\
&= |VC\left( \cup_{i=1}^{t_1'}P_{i}\right)| + |VC\left( (\cup_{j=1}^{t_2} S_j) \cup (\cup_{k=1}^{t_3} W_k) \cup (\cup_{l=1}^{t_4} Y_l)\right)|\\
&\leq  \sum_{i = 1}^{t_{1}'}|VC(P_{i})| + \sum_{i = 1}^{t_{2}}|VC(S_{i})| + \sum_{i = 1}^{t_{3}}|VC(W_{i})| + \sum_{i = 1}^{t_{4}}|VC(Y_{i})|\\
&\leq \left( \frac{2t_{1}}{3} + 8\delta k \right) + t_{2} + \sum_{i = 1}^{t_{3}} \left( \del[W_{i}] + 1.62  \right) + \sum_{i = 1}^{t_{4}} \left( \del[Y_{i}] + 1.8  \right), \\
& \hspace{9cm} \textrm{(using Lemmas~\ref{lemma:non_star_vertex_cover_1},~\ref{lemma:non_star_vertex_cover_2}, and~\ref{lemma:single_star_vertex_cover})} \\
&= (0.67)t_{1} + 8\delta k + t_{2} + (1.62)t_{3} + (1.8) t_{4}  + \left( \sqrt{2}+1 \right) \left( \sum_{i = 1}^{t_{3}} \delta(W_{i}) + \sum_{i = 1}^{t_{4}} \delta(Y_{i})
\right)
\end{align*}

\noindent Since the optimal cost $OPT (\X,k) = \mathlarger{\sum}_{j = 1}^{k} \sqrt{m_{j}(m_{j}-1)} + \mathlarger{\sum}_{i=1}^{t_{3}}  \delta(W_{i}) + \mathlarger{\sum}_{i=1}^{t_{4}}  \delta(Y_{i}) \leq m-k/2+\delta k$, we get $\mathlarger{\sum}_{i=1}^{t_{3}}  \delta(W_{i}) + \mathlarger{\sum}_{i=1}^{t_{4}}  \delta(Y_{i}) \leq m-k/2+\delta k - \mathlarger{\sum}_{j = 1}^{k} \sqrt{m_{j}(m_{j}-1)}$. We substitute this value in the previous equation, and get the following inequality:

\begin{align*}
|VC(G)| &\leq  (0.67)t_{1} + 8\delta k + t_{2} + (1.62) t_{3} +(1.8) t_{4} + \left( \sqrt{2}+1 \right) \cdot \left( m-k/2 -\sum_{j=1}^{k} \sqrt{m_{j}(m_{j}-1)} + \delta k \right)
\end{align*}

\noindent Using Lemma~\ref{lemma:basic_bound}, we obtain the following inequalities:
\begin{enumerate}
    \item For $P_{j}$, $\st[|P_{j}|] \geq |P_{j}| - 1$ since $|P_{j}| = 1$
    \item For $S_{j}$, $\st[|S_{j}|] \geq |S_{j}| - (2 - \sqrt{2})$ since $|S_{j}| \geq 2$
    \item For $W_{j}$, $\st[|W_{j}|] \geq |W_{j}| - (2 - \sqrt{2})$ since $|W_{j}| \geq 2$
    \item For $Y_{j}$, $\st[|Y_{j}|] \geq |Y_{j}| - (3 - \sqrt{6})$ since $|Y_{j}| \geq 3$
\end{enumerate} 
We substitute these values in the previous equation, and get the following inequality:
\begin{align*}
|VC(G)| &\leq  (0.67)t_{1} + 8\delta k + t_{2} + (1.62) t_{3} +(1.8) t_{4}  + \left( \sqrt{2}+1 \right) \cdot \left( m-k/2  - \sum_{j=1}^{t_{1}} \left( |P_{j}| - 1\right) + \right.\\
&  \quad \left.  - \sum_{j=1}^{t_{2}} \left( |S_{j}| - (2-\sqrt{2})\right) -  \sum_{j=1}^{t_{3}} \left( |W_{j}| - (2-\sqrt{2})\right) - \sum_{j=1}^{t_{4}} \left( |Y_{j}| - (3-\sqrt{6})\right)  + \delta k \right)
\end{align*}

\noindent Since the number of edges $|G| = \mathlarger{\sum}_{j = 1}^{t_{1}} \, \, |P_{j}| + \mathlarger{\sum}_{j = 1}^{t_{2}} \, \, |S_{j}| + \mathlarger{\sum}_{j = 1}^{t_{3}} \, \, |W_{j}|  + \mathlarger{\sum}_{j = 1}^{t_{4}} \, \, |Y_{j}| $, we get the following inequality:
\begin{align*}
|VC(G)| &\leq  (0.67)t_{1} + 8\delta k + t_{2} + (1.62) t_{3} +(1.8) t_{4}  + \left( \sqrt{2}+1 \right) \cdot \Bigg( -k/2 + t_{1} + t_{2} \cdot \left( 2-\sqrt{2} \right) + \Bigg. \\
& \hspace{5cm} \Bigg. + \,\, t_{3} \cdot \left( 2-\sqrt{2} \right) + t_{4} \cdot \left( 3-\sqrt{6} \right) + \delta k \Bigg)
\end{align*}
\noindent We substitute $k = t_{1} + t_{2} + t_{3} + t_{4}$, and obtain the following inequality:
\begin{align*}
|VC(G)| &\leq (0.67)t_{1} + 8\delta k + t_{2} + (1.62) t_{3} +(1.8) t_{4}  + \left( \sqrt{2}+1 \right) \cdot \left( \frac{t_{1}}{2} + \frac{t_{2}}{10} + \frac{t_{3}}{10} + \frac{3t_{4}}{50}  +  \delta k \right) \\ 
&=  (1.88)t_{1} + (1.25)t_{2} + (1.87) t_{3} + (1.95) t_{4} + \left( \sqrt{2}+9 \right) \delta k \\
&< (1.95) k + \left( \sqrt{2}+9 \right) \delta k \hspace{3cm} \textrm{(using $t_{3} + t_{4} + t_{1} + t_{2} = k $)}\\
&\leq (2-\veps)k, \quad \textrm{for appropriately small constants $\veps,\delta>0$}
\end{align*}

\noindent This proves the soundness condition and it completes the proof of Theorem~\ref{theorem:reduction_kmedian}.
In the remaining part, we prove Lemmas~\ref{lemma:non_star_vertex_cover_1} and~\ref{lemma:non_star_vertex_cover_2}.

\section{Vertex Cover of Non-Star Graphs}\label{section:vertex_cover}
In this section, we bound the vertex cover size of any non-star graph $F$. 
We aim to obtain this bound in terms of $\delta(F)$, i.e., the extra cost of the graph $F$. 
To do so, we require a bound on the extra cost. 
The $1$-median cost of an arbitrary non-star graph is tricky to compute. 
Fortunately, we do not require the exact optimal cost of a graph; a lower bound on the optimal cost suffices. 
Furthermore, for some graph instances we can compute the exact optimal cost. 
For example, a star graph with $r$ edges corresponds to a regular simplex of size length $s$ that has the optimal $1$-median cost: $s \cdot \sqrt{\frac{r(r-1)}{2}}$. We proved this earlier in Lemma~\ref{lemma:simplex_cost}. 
For the more complex graph instances, we use the following decomposition lemma to bound their optimal cost.

\begin{lemma}[Decomposition lemma]\label{lemma:decomposition}
Let $G = (V, E)$ be any graph and let $E_1, ..., E_t$ be any partition of edges and let $G_{1}, G_{2}, ..., G_{t}$ be the subgraphs induced by these edges respectively.
The following inequality bounds the optimal cost of $G$ in terms of the optimal costs of subgraphs $G_1, ..., G_t$.
\[
\Phi^{*}(G) \geq \Phi^{*}(G_{1}) + \dotsc + \Phi^{*}(G_{t})
\]
\end{lemma}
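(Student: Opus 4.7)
The plan is very direct: the decomposition lemma is essentially a statement that restricting a sum to parts and then minimizing each part separately can only decrease the total. I would first unpack the notation on the right-hand side. Since $E_1, \dots, E_t$ is a partition of $E$, the point sets $\X(G_1), \dots, \X(G_t)$ form a partition of $\X(G)$, because each point $x_e$ corresponds to a unique edge $e \in E$ and thus lies in exactly one $\X(G_i)$.

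Next, I would let $c^* \in \mathbb{R}^n$ denote an optimal $1$-median for the full point set $\X(G)$, so that
\[
\Phi^*(G) \;=\; \sum_{x \in \X(G)} \|x - c^*\|.
\]
Using the partition, I split this sum by groups:
\[
\Phi^*(G) \;=\; \sum_{i=1}^{t} \sum_{x \in \X(G_i)} \|x - c^*\|.
\]
The key step is then the observation that the inner sum is just the $1$-median cost of the subinstance $\X(G_i)$ evaluated at the particular center $c^*$, which is $c^*$ playing a (possibly suboptimal) role for subgraph $G_i$. By definition of $\Phi^*(G_i)$ as the minimum over all centers in $\mathbb{R}^n$,
\[
\sum_{x \in \X(G_i)} \|x - c^*\| \;\geq\; \min_{c \in \mathbb{R}^n} \sum_{x \in \X(G_i)} \|x - c\| \;=\; \Phi^*(G_i).
\]
Summing over $i = 1, \dots, t$ yields $\Phi^*(G) \geq \sum_{i=1}^t \Phi^*(G_i)$, which is the desired bound.

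I do not expect any real obstacle here; the argument is a one-line application of the fact that the global optimal center is a feasible center for each subinstance, so it can only do worse than the subinstance-specific optimum. The only thing worth being careful about is the bookkeeping that the partition of edges genuinely induces a partition of points, which is immediate from the injective correspondence $e \mapsto x_e$. No geometric structure of the embedding, no triangle-free or simplex properties, and no special form of the centers are needed — the proof works for any finite point set in any metric space with a well-defined $1$-median.
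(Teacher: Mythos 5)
Your proof is correct and is essentially identical to the paper's: both take the optimal center $c^*$ (the paper calls it $f^*$) for the whole point set, split the cost sum according to the edge partition, and observe that each partial sum is at least the optimal $1$-median cost of the corresponding subinstance since $c^*$ is a feasible center for it. No discrepancies.
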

\begin{proof}
Let $f^{*}$ be an optimal median of $\X(G)$. For any edge $e \in E$, let $x_{e}$ denote the corresponding point in $\X(G)$.
The proof follows from the following sequence of inequalities:
$
    \Phi^{*}(G) = \sum_{e \in E} d(x_e,f^{*}) 
    = \sum_{i = 1}^{t} \sum_{e \in E_{i}} d(x_e,f^{*}) 
    \geq \sum_{i = 1}^{t} \Phi^{*}(G_{i})
$.
\end{proof}

\noindent
If we can compute the optimal cost of each subgraph, then we can lower bound the overall cost of the graph using the above decomposition lemma. 
In general, a graph may be decomposed in various ways.
However, we prefer that decomposition that gives better bound on the optimal cost.
In the next subsection, we use this decomposition lemma to bound the extra cost of any non-star graph. 
We will use that bound in subsequent subsections to prove Lemmas~\ref{lemma:non_star_vertex_cover_1} and~\ref{lemma:non_star_vertex_cover_2}.

\subsection{$\mathbf{1}$-Median Cost of Non-Star Graphs}
In this section, we show that any non-star graph on $m$ edges has the optimal $1$-median cost at least $m - 0.342$ which is at least $\sqrt{m(m-1)} + 0.158$. 
We assume all graphs to be triangle-free since hard \vc instances are triangle-free. 
Therefore, we do not explicitly mention the ``triangle-freeness'' of graphs whenever we state a lemma related to graphs. 

To obtain a lower bound on the optimal cost, we decompose a graph into so called ``Fundamental non-star graphs''. 
We will show this decomposition process later. 
For now, we first bound the $1$-median cost of these fundamental graphs. 
We then bound the cost of any graph using the decomposition lemma. 
Here is the formal description of fundamental graphs.

\begin{definition}[Fundamental Non-Star Graph]
A fundamental non-star graph is a graph that becomes a star graph when \textbf{any} pair of vertex-disjoint edges are removed from it. The graph that has only two vertex-disjoint edges is also a fundamental non-star graph.
\end{definition}

\begin{figure}[H]
    \centering
    \begin{framed}
    \includegraphics{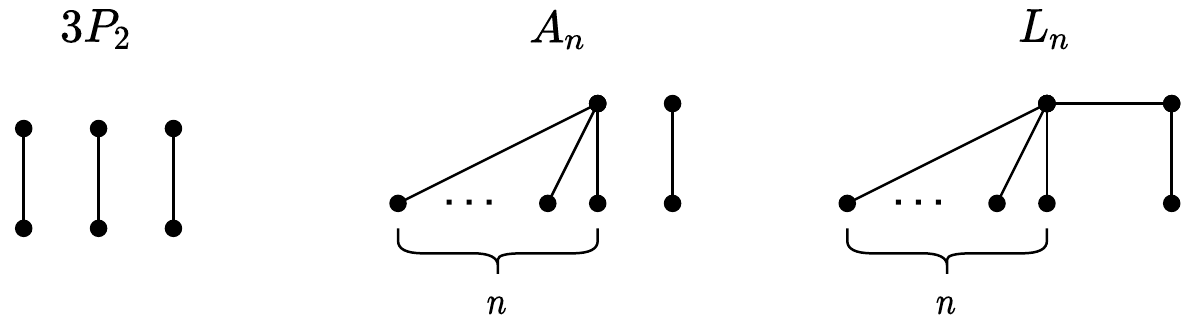}
    \end{framed}
    \vspace*{-3mm}
    \caption{Fundamental non-star graphs: $3$-$P_{2}$, $A_{n}$, and $L_{n}$.}
    \label{fig:graphs}
\end{figure}

\noindent
The following lemma shows that there are exactly three types of fundamental non-star graphs. These are shown in Figure~\ref{fig:graphs}.

\begin{lemma}\label{lemma:fundamental_graphs}
There are only three types of fundamental non-star graphs: $3$-$P_{2}$, $A_{n}$, and $L_{n}$, for $n \geq 1$. These are shown in Figure~\ref{fig:graphs}.
\end{lemma}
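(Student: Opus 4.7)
The plan is to case on the matching number $\mu(G)$ of the fundamental non-star graph $G$; since $G$ is non-star, $\mu(G) \geq 2$. The analysis will split into two main cases: $\mu(G) \geq 3$ (which will force $G \cong 3\text{-}P_{2}$) and $\mu(G) = 2$ (which will force $G$ to be of type $A_{n}$ or $L_{n}$).

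For $\mu(G) \geq 3$, I will fix three pairwise disjoint edges $e_{1}, e_{2}, e_{3}$. For each pair $\{i,j\}$, removing $e_{i}, e_{j}$ yields a star by the fundamental property, and since the remaining matching edge $e_{k}$ lies in this star, its center $c_{ij}$ must be an endpoint of $e_{k}$. The pairwise disjointness of $e_{1}, e_{2}, e_{3}$ then forces $c_{12}, c_{13}, c_{23}$ to be three distinct vertices. Now any edge $f \notin \{e_{1}, e_{2}, e_{3}\}$ would have to contain all three of these distinct vertices (applying the star property for each pair), which is impossible for a 2-endpoint edge. Hence $E(G) = \{e_{1}, e_{2}, e_{3}\}$, giving $G \cong 3\text{-}P_{2}$, and simultaneously ruling out $\mu(G) \geq 4$.

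For $\mu(G) = 2$, I will fix a maximum matching $\{e_{1}, e_{2}\}$ with $e_{i} = (u_{i}, v_{i})$. The graph $G - \{e_{1}, e_{2}\}$ is a star, say with center $c$ (the degenerate empty case recovers two disjoint edges, which is covered explicitly by the definition). The analysis then splits on whether $c \in e_{1} \cup e_{2}$. In the sub-case $c = u_{1}$, every non-matching edge is incident to $u_{1}$, so $G$ is a star at $u_{1}$ together with the extra edge $e_{2}$; triangle-freeness forbids edges among the neighbors of $u_{1}$, so at most one of $\{u_{2}, v_{2}\}$ can be a neighbor of $u_{1}$. When neither is, $G$ is a star at $u_{1}$ plus the disjoint edge $e_{2}$, matching the type $A_{n}$; when exactly one is (say $u_{2}$), the pendant edge $(u_{2}, v_{2})$ hangs off a leaf of the star, matching the type $L_{n}$.

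The remaining sub-case $c \notin e_{1} \cup e_{2}$ (with the star on $\geq 2$ edges, so $c$ is unique) must be ruled out. Every neighbor of $c$ must lie in $\{u_{1}, v_{1}, u_{2}, v_{2}\}$ (else $\mu(G) \geq 3$), and triangle-freeness forces $c$ to pick at most one endpoint from each $e_{i}$, giving $c$ at most two neighbors. If $c$ takes one endpoint from each, say $u_{1}$ and $u_{2}$, then $\{e_{1}, (c, u_{2})\}$ is a disjoint pair whose removal leaves $\{e_{2}, (c, u_{1})\}$---two disjoint edges, not a star---contradicting the fundamental property. Hence the star has at most one edge, which reduces to the previous sub-case by re-choosing $c$ as the endpoint of that single edge lying in $e_{1} \cup e_{2}$. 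The main obstacle I anticipate is the careful bookkeeping in this $\mu(G) = 2$ analysis: the fundamental property must be verified against every disjoint pair (not just the chosen matching $\{e_{1}, e_{2}\}$), and triangle-freeness has to be repeatedly invoked to bound how $c$'s neighbors interact with the matching vertices.
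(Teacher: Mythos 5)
Your proof is correct and follows essentially the same strategy as the paper: a case analysis on the maximum matching size, with the $\mu(G)=2$ case split according to whether the common vertex of the unmatched edges lies on a matching edge, invoking triangle-freeness and maximality of the matching exactly as the paper does. The only (minor) difference is that you collapse the paper's two cases $|M|\geq 4$ and $|M|=3$ into a single argument via the three pairwise-distinct star centers $c_{12},c_{13},c_{23}$, which is a clean streamlining but not a different route.
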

\begin{proof}
It is easy to see that the three types of graphs $3$-$P_2$, $A_n$, and $L_n$ shown in Figure~\ref{fig:graphs} are fundamental non-star graphs. We need to argue that these are the {\em only} fundamental non-star graphs. For this we do a case analysis.
Let $M$ denote a maximum matching of any graph $G = (V, E)$.
We divide the analysis into three cases based on the size of the maximum matching.

\begin{enumerate}
    \item \uline{Case 1:} $|M| \geq 4$. 

Suppose we remove any pair of vertex-disjoint edges from $M$. Then, the remaining graph is still a non-star graph due to a matching of size at least two. Therefore, any such graph can not be a fundamental non-star graph.

\item \uline{Case 2:} $|M| = 3$. 

Let $U$ denote the set of unmatched edges in the graph, i.e, $U = E \setminus M$. Suppose $U \neq \emptyset$, and $l$ be any edge in $U$. Observe that $l$ can incident on at most two  edges of $M$. Since $M$ has a size exactly three, there is always an edge in $M$ that is vertex-disjoint from $l$. Let this edge be $m_{e} \in M$. The set of edges $\{l,m_{e}\}$ forms a $2$-$P_{2}$ (i.e., two vertex-disjoint edges), and we can remove it from $G$. The remaining graph still has a matching of size at least two. Therefore, such a graph can not be a fundamental non-star graph. Now, let us consider the case when $U = \emptyset$. In this case, the graph is equivalent to a $3$-$P_{2}$. 

\item \uline{Case 3:} $|M| = 2$. 
 
 Let $e_{1} \coloneqq (u_{1},v_{1})$, and $e_{2} \coloneqq (u_{2},v_{2})$ be the matching edges in $M$. Let $U = E \setminus M$ denote the set of unmatched edges. If $U$ forms a non-star graph, we can remove a pair of vertex-disjoint edges from it. The remaining graph is a non-star due to $|M| = 2$. Therefore, any graph with $|M| = 2$ and $U$ as a non-star graph, can not be a fundamental non-star graph. Therefore, let us consider the case when $U$ forms a star graph. Suppose all edges of $U$ share a common vertex $w$. There are two possibilities: $w$ is an endpoint of some matching edge or not. Let us consider these two possibilities one by one.

 \begin{enumerate} [label=(\alph*)]
     \item \uline{Subcase:} $w$ is an endpoint of some matching edge
     
 Without loss of generality, we can assume $w \equiv u_{1}$. 
 Now, no two edges of $U$ can be incident on $u_{2}$ and $v_{2}$; otherwise, it would form a triangle $(w,u_{2},v_{2})$. 
 Therefore, at most one edge of $U$ can incident on either $u_{2}$ or $v_{2}$. 
 If such an edge exist, without loss of generality, we can assume that it is incident on $u_{2}$. 
 This graph is of type $L_{n}$. 
 On the other hand, if no edge of $U$ is incident on either $u_{2}$ or $v_{2}$, the graph is of type $A_{n}$. 

  \item \uline{Subcase:} $w$ is not an endpoint of any matching edge.
  
If $|U| = 1$, the graph is simply $A_{1}$.
Now, let us assume that $|U| \geq 2$. Let
$l_{1}$ and $l_{2}$ be any two edges in $U$. The edges: $l_{1}$ and $l_{2}$, can not incident on the same matching edge say $(u_{1},v_{1})$; otherwise it forms a triangle: $(w,u_{1},v_{1})$. Also, note that every edge $e \in U$ must be incident on some matching edge; otherwise $M \cup \{e \}$ would form a matching of size $> |M|$. It would contradict that $M$ is a maximum matching. Therefore, without loss of generality, we can assume that $l_{1}$ is incident on $u_{1}$, and $l_{2}$ is incident on $u_{2}$. Now, it forms a path of length four: $(v_{1},u_{1},w,u_{2},v_{2})$. We can always remove an alternating pair of edges from the path, and the resulting graph still contains a pair of vertex-disjoint edges, i.e., a $2$-$P_{2}$. Therefore, any such graph can not be a fundamental non-star graph.
\end{enumerate}
\end{enumerate}
From the above case analysis we conclude that all fundamental non-star graphs are of type $3$-$P_2$, $A_n$, or $L_n$.
This completes the proof of the lemma.
\end{proof}

Next, we bound the optimal $1$-median cost of each fundamental non-star graph. 
In this discussion, we will use $r$ to denote the number of edges in various cases.

\begin{lemma}\label{lemma:cost_3P2}
Let $r \coloneqq 3$ denote the number of edges in $3$-$P_2$. The optimal cost of $3$-$P_{2}$ is at least $(r + 0.46)$.
\end{lemma}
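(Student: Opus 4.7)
The plan is to observe that $3$-$P_2$ is precisely the non-star graph consisting of exactly three pairwise vertex-disjoint edges, so Corollary~\ref{corollary:cost_non_star} applies verbatim and gives the exact optimal $1$-median cost in closed form; the lemma then reduces to a single numerical inequality.

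More concretely, I would first recall from the construction that each point $x_e$ corresponding to an edge $e = (i,j)$ is the $\{0,1\}$-vector with $1$'s exactly at coordinates $i$ and $j$. For three pairwise vertex-disjoint edges, the six coordinates involved are distinct, so for any two such edge-points $x_e, x_{e'}$ the squared Euclidean distance is $4$, i.e., $\|x_e - x_{e'}\|_2 = 2$. Thus $\X(3\text{-}P_2)$ is a regular simplex on $3$ vertices of side length $2$, and by Lemma~\ref{lemma:simplex_cost} (or directly by Corollary~\ref{corollary:cost_non_star} with $r=3$) its optimal $1$-median cost equals
\[
\sqrt{2} \cdot \sqrt{r(r-1)} \;=\; \sqrt{2}\cdot\sqrt{6} \;=\; 2\sqrt{3}.
\]

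The final step is the numerical check $2\sqrt{3} \geq r + 0.46 = 3.46$. Since $(3.46)^2 = 11.9716 < 12 = (2\sqrt{3})^2$, this inequality holds, which completes the proof. There is no real obstacle here: the only mildly non-trivial observation is that three vertex-disjoint edges automatically yield a regular simplex in the construction (because the supporting coordinates are disjoint), and everything else is just plugging $r=3$ into the formula from Corollary~\ref{corollary:cost_non_star} and comparing $2\sqrt{3}$ to $3.46$. This is the easiest of the fundamental non-star graph cost bounds; the harder cases $A_n$ and $L_n$ (which presumably follow) will require the decomposition lemma together with more delicate geometric arguments, but for $3$-$P_2$ itself the exact optimum is known and the bound is immediate.
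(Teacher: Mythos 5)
Your proposal is correct and follows exactly the paper's route: identify $\X(3\text{-}P_2)$ as a regular simplex of side length $2$, invoke Corollary~\ref{corollary:cost_non_star} to get the exact cost $\sqrt{2}\cdot\sqrt{3\cdot 2} = 2\sqrt{3}$, and verify $2\sqrt{3} \geq 3.46$. The only addition is your explicit justification that disjoint supporting coordinates give pairwise distance $2$, which the paper leaves as ``easy to see.''
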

\begin{proof}
It is easy to see that $\X(3$-$P_{2})$ forms a simplex of side length 2. Therefore, we can use Corollary~\ref{corollary:cost_non_star} to compute its optimal $1$-median cost:
$
\Phi^{*}(3\textrm{-}P_{2}) = \sqrt{2} \sqrt{3(3-1)} = 2\sqrt{3} \geq r + 0.46 \quad \textrm{(since $r = 3$)}.
$
\end{proof}

\begin{lemma}\label{lemma:cost_An}
Let $r \coloneqq n+1$ denote the number of edges in $A_n$. The optimal cost of $A_{n}$ is at least:
\begin{enumerate}
    \item $r$, for $n \geq 1$.
    \item $r + 0.095$, for $n \geq 2$.
    \item $r + 0.135$, for $n \geq 3$.
\end{enumerate}
\end{lemma}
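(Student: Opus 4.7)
My plan is to compute $\Phi^*(A_n)$ in closed form using the high symmetry of $A_n$, then verify each of the three numerical lower bounds.

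\textbf{Setup and isometric embedding.} The graph $A_n$ is a star $K_{1,n}$ centered at $w$ together with a vertex-disjoint edge $(u_2,v_2)$, so $r = n+1$. The corresponding points consist of $n$ vectors $\{a_1,\dots,a_n\}$ that form a regular simplex of side $\sqrt{2}$ (the star edges) and one extra vector $b$ at distance $2$ from every $a_i$ (since the extra edge is vertex-disjoint from the star). Using Fact~\ref{fact:isometric}, I work in a distance-preserving coordinate system in which the simplex centroid $\mu$ is at the origin, each $a_i$ lies in a subspace $H$ at distance $\sqrt{(n-1)/n}$ from $\mu$ (from the proof of Lemma~\ref{lemma:simplex_cost}), and $b$ is placed on the line perpendicular to $H$ through $\mu$ at distance $\sqrt{(3n+1)/n}$ from $\mu$ (obtained from $\|b-\mu\|^2 + \|\mu - a_i\|^2 = 4$).

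\textbf{Reducing to a one-dimensional optimization.} The $S_n$-action permuting the $a_i$ fixes both $\mu$ and $b$. Since $c \mapsto \sum_{p} \|c-p\|$ is convex, averaging any candidate $c$ over the $S_n$-orbit yields an $S_n$-invariant center with no larger cost; and any $S_n$-invariant center is equidistant from all the $a_i$, hence must lie on the axis through $\mu$ perpendicular to $H$ (which also contains $b$). Parametrizing $c(t) = \mu + t\hat{v}$ with $\hat v$ pointing from $\mu$ toward $b$ gives
\[
\Phi(c(t), A_n) = n\sqrt{\tfrac{n-1}{n} + t^{2}} + \bigl|\,t - \sqrt{\tfrac{3n+1}{n}}\,\bigr|.
\]
The optimum lies in $[0, \sqrt{(3n+1)/n}]$; differentiating and solving gives $t^{*} = 1/\sqrt{n(n+1)}$. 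Substituting and using the identity $n\sqrt{n/(n+1)} - 1/\sqrt{n(n+1)} = (n^{2}-1)/\sqrt{n(n+1)} = (n-1)\sqrt{(n+1)/n}$ yields
\[
\Phi^*(A_n) = (n-1)\sqrt{\tfrac{n+1}{n}} + \sqrt{\tfrac{3n+1}{n}}.
\]

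\textbf{Verifying the three bounds.} Evaluating this closed form at the base case of each claim gives $\Phi^*(A_1) = 0 + 2 = 2 = r$ (claim~1 at $n=1$); $\Phi^*(A_2) = \sqrt{3/2} + \sqrt{7/2} \approx 3.0955 \geq r + 0.095$ (claim~2 at $n=2$); and $\Phi^*(A_3) = 2\sqrt{4/3} + \sqrt{10/3} \approx 4.1351 \geq r + 0.135$ (claim~3 at $n=3$). To extend each bound to all larger $n$, I show that $g(n) \coloneqq \Phi^*(A_n) - (n+1)$ is monotonically increasing in $n$, rising from $g(1)=0$ toward the limit $\sqrt{3} - 3/2 \approx 0.232$. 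Substituting $x = 1/n \in (0,1]$ and rewriting gives $g = (1-x)/(\sqrt{1+x}+1) + \sqrt{3+x} - 2$; a direct calculation shows $dg/dx < 0$ on $(0,1]$, which is equivalent to $g$ being increasing in $n$. This monotonicity check is the main technical step, but it is routine given the explicit closed form.
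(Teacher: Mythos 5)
Your proof is correct and follows essentially the same route as the paper: both arguments use symmetry to reduce the problem to a one-dimensional convex minimization along the axis through the centroid of the $\sqrt{2}$-simplex and the apex point, arrive at the identical closed form $\Phi^*(A_n)=\sqrt{n(n+1)}+\sqrt{3+\tfrac{1}{n}}-\sqrt{\tfrac{n+1}{n}}$, and then extend the base-case evaluations to all larger $n$ via monotonicity. The only cosmetic differences are that you justify the restriction to the axis by group-averaging plus convexity rather than by the uniqueness of the Fermat point (Fact~\ref{fact:unique_median}), and you establish the tail by showing $\Phi^*(A_n)-r$ is increasing directly instead of combining Lemma~\ref{lemma:basic_bound} with the monotonicity of the correction term $2/\bigl(\sqrt{3+\tfrac{1}{n}}+\sqrt{\tfrac{n+1}{n}}\bigr)$; both checks are routine and valid.
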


\begin{proof}
Consider the point set $\X(A_{n})$. 
It forms a simplex with $(r-1)$ points at a distance of $\sqrt{2}$ from each other and the remaining point at a distance of $2$ from the rest of the points.
Based on Fact~\ref{fact:isometric}, we represent the coordinates of every point in an $(r-1)$-dimensional space in the following way.
\[
    a_{1} = (1,0,...,0), \quad a_{2} = (0,1,...,0), \quad ..., \quad a_{r-1} = (0,...,0,1), \quad a_{r}= (u,\,\dotsc, \,u)
\]
Here $u = \frac{1}{r-1} + \sqrt{\frac{3}{r-1} + \frac{1}{(r-1)^2}}$.
Let $(c_{1},\dotsc,c_{r-1})$ be an optimal $1$-median of $S = \{a_1, ..., a_r\}$. 
If $c_{i} \neq c_{j}$ for any $i \neq j$, we can swap $c_{i}$ and $c_{j}$ to create a different median with the same $1$-median cost. 
Since the $1$-median is always unique for a set of non-collinear points (by Fact~\ref{fact:unique_median}), we assume $c^{*} = \{c,\dotsc,c\}$ as the optimal median. Then, the optimal $1$-median cost is:
\begin{align*}
\Phi(c^{*},S) = \Phi(c^{*},a_{r}) + \sum_{i = 1}^{r-1} \Phi(c^{*},a_{i}) 
= (u-c)\cdot \sqrt{r-1} + (r-1) \cdot \sqrt{ (1-c)^{2} + (r-2) c^{2}}
\end{align*}
The function is strictly convex and attains minimum at $c = \fdfrac{\sqrt{r}+1}{\sqrt{r} \cdot (r-1)}$. 
We get the following optimal cost on substituting the values of $t$ and $c$ in the previous equation: 
\begin{align*}
\Phi(c^{*},S) &= \sqrt{r(r-1)} + \sqrt{3+\frac{1}{r-1}} - \sqrt{\frac{r}{r-1}} = \sqrt{r(r-1)} + \frac{2}{ \sqrt{3+\frac{1}{r-1}} + \sqrt{\frac{r}{r-1}}}
\end{align*}
For $r \geq t$, we get: 
\begin{align*}
\Phi(c^{*},S)& \geq \sqrt{r(r-1)} + \frac{2}{ \sqrt{3+\frac{1}{t-1}} + \sqrt{\frac{t}{t-1}}}, \quad \quad \left(\textrm{$\because  r \geq t$}\right) \\
&= \sqrt{r(r-1)} + \sqrt{3+\frac{1}{t-1}} - \sqrt{\frac{t}{t-1}} \\
&\geq r - (t-\sqrt{t(t-1)}) + \sqrt{3+\frac{1}{t-1}} - \sqrt{\frac{t}{t-1}}, \quad \quad \textrm{(using Lemma~\ref{lemma:basic_bound})}
\end{align*}

\noindent Substituting $t = 2$, we get:
$
\Phi(c^{*},S) \geq r - (2 -\sqrt{2}) + \sqrt{4} - \sqrt{2} \geq r.
$

\noindent Substituting $t = 3$, we get:
$
\Phi(c^{*},S) \geq r - (3 -\sqrt{6}) + \sqrt{7/2} - \sqrt{3/2} > r + 0.095.
$

\noindent Substituting $t = 4$, we get:
$\Phi(c^{*},S) \geq r - (4 -\sqrt{12}) + \sqrt{10/3} - \sqrt{4/3} > r + 0.135.$\\
This completes the proof of the lemma.
\end{proof}

\begin{lemma}\label{lemma:cost_Ln}
Let $r \coloneqq n+2$ denote the number of edges in $L_n$.
Then the optimal cost of $L_{n}$ is at least:
\begin{enumerate}
    \item $r - 0.268$, for $n = 1$.
    \item $r - 0.334$, for $n = 2$.
    \item $r - 0.342$, for $n \geq 3$.
\end{enumerate}
\end{lemma}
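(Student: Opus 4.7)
My plan is to adapt the embed-and-optimize strategy of Lemma~\ref{lemma:cost_An} to the graph $L_n$. Using Fact~\ref{fact:isometric}, I embed $\X(L_n)$ in $\mathbb{R}^{n+1}$ by placing the $n+1$ edges incident to $u_1$ at the standard basis vectors $e_1, \dots, e_{n+1}$ and the remaining edge $(u_2, v_2)$ at $(\alpha, \alpha, \dots, \alpha, \alpha+1)$ with $\alpha = \sqrt{2/(n+1)}$. A direct check confirms that all pairwise distances agree with the graph structure (edges sharing a vertex map to points at distance $\sqrt{2}$, vertex-disjoint edges to points at distance $2$). By the $n$-fold permutation symmetry among the first $n$ basis vectors together with Fact~\ref{fact:unique_median}, the optimal $1$-median must take the form $c^* = (c, c, \dots, c, d)$, and $\Phi^*(L_n)$ equals the minimum over $(c,d) \in \mathbb{R}^2$ of
\[
f(c,d) \;=\; n\sqrt{nc^2 - 2c + 1 + d^2} \;+\; \sqrt{nc^2 + (d-1)^2} \;+\; \sqrt{n(c-\alpha)^2 + (d-\alpha-1)^2}.
\]

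For Part~1 ($n = 1$) the three points are $(1,0), (0,1), (1,2) \in \mathbb{R}^2$ and are reflectively symmetric about the line $y = 1$; this forces $d = 1$ at the optimum and reduces the minimization to one variable. Solving the resulting stationarity equation yields $c = 1 - 1/\sqrt{3}$ and $\Phi^*(L_1) = 1 + \sqrt{3} = r - (2 - \sqrt{3}) \geq r - 0.268$.

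For Parts~2 and~3 ($n \geq 2$) the reflective symmetry is lost since $a_{n+1}$ and $a_{n+2}$ play asymmetric roles, so I must engage with the full $2$D minimization. My plan is to write out the stationarity system $\partial_c f = \partial_d f = 0$, eliminate $d$ to obtain an algebraic equation in $c$ alone, and solve (or tightly bound) it to extract the minimum. The constants $0.334$ (for $n = 2$) and $0.342$ (for $n \geq 3$) should arise after applying Lemma~\ref{lemma:basic_bound} to convert an $\sqrt{r(r-1)}$-type expression into the desired $r$-type bound, with different choices of the auxiliary parameter $t$ producing the two sub-case constants; the $n = 2$ case should be handled by a direct exact calculation, while $n \geq 3$ should follow from an asymptotic argument valid uniformly in $n$.

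The principal obstacle is that no edge-decomposition via Lemma~\ref{lemma:decomposition} with the sub-problem costs we have in hand is strong enough. Even the sharpest such split I can find, namely $L_n = \{(u_1,v_1),(u_2,v_2)\} \cup S_n$ (the $2$-$P_2$ plus the star on the remaining $n$ edges at $u_1$), yields only $\Phi^*(L_n) \geq 2 + \sqrt{n(n-1)}$, which for large $n$ falls short of the target $r - 0.342$ by roughly $0.15$; alternative decompositions such as $L_n = L_1 \cup S_{n-1}$ or $L_n = A_n \cup \{(u_1,u_2)\}$ are strictly weaker. Hence the proof must genuinely confront the two-variable minimization, and the case split at $n = 2$ versus $n \geq 3$ in the lemma statement reflects two distinct regimes of this optimization that cannot be unified by a single decomposition argument.
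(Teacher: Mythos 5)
Your Part~1 is fine and matches the paper's proof (a direct embedding of the three points and a one--variable convex minimization giving $1+\sqrt{3}$). The problem is Parts~2 and~3: what you have written there is a plan, not a proof. You set up the correct two--variable objective $f(c,d)$, but you never solve it; the claim that the constants $0.334$ and $0.342$ ``should arise'' after eliminating $d$ and invoking Lemma~\ref{lemma:basic_bound} is speculation, and for general $n\geq 3$ the stationarity system has no obvious closed form, so the promised ``asymptotic argument valid uniformly in $n$'' is exactly the hard part that is missing. As it stands, two of the three assertions of the lemma are unproven.

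Moreover, your stated justification for taking this route --- that ``no edge-decomposition via Lemma~\ref{lemma:decomposition} with the sub-problem costs we have in hand is strong enough'' --- is only true for decompositions of a \emph{single} copy of $L_n$, and this is precisely the idea you are missing. The paper applies the decomposition lemma to \emph{several copies} of the graph: since $\Phi^*(L_n)=\sum_{e}d(x_e,f^*)$ for the optimal median $f^*$, one has $t\cdot\Phi^*(L_n)\geq\sum_j\Phi^*(G_j)$ for any partition of the edge multiset of $t$ copies of $L_n$ into subgraphs $G_j$. Concretely, three copies of $L_2$ decompose into $A_2$, two copies of $P_4$ ($=L_1$), and $S_3$, giving $3\Phi^*(L_2)\geq 3.095+2(2.732)+\sqrt{6}>11$, i.e.\ $\Phi^*(L_2)\geq |L_2|-0.334$; and two copies of $L_n$ ($n\geq 3$) decompose into $A_n$, $S_n$, and $P_4$, giving $2\Phi^*(L_n)\geq(n+1.135)+(n-(3-\sqrt{6}))+2.732=2(n+2)-0.684$. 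The averaging over copies is what recovers the $\approx 0.15$ that you correctly observed is lost by any single-copy decomposition. Either adopt this multi-copy decomposition or actually carry out and verify the two-variable minimization; in its current form the argument has a genuine gap.
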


\begin{proof}
Let us prove the first statement for $r=3$ which corresponds to the graph being $L_1$ (i.e., $n=1$).
In $\X(L_1)$, there are two points at distance of 2 from each other, and the third point at a distance of $\sqrt{2}$ from the other two points. It forms a simplex $S$ of dimension two. Based on Fact~\ref{fact:isometric}, we represent the coordinates of the simplex in the following way:
\[
    a_{1} = (0,0), \quad 
    a_{2} = (\sqrt{2},0), \quad
    a_{3} = ( 0, \sqrt{2}).
\]

\noindent Note that the pairwise distances are preserved in this representation.
Let $(c_{1},c_{2})$ be the optimal $1$-median of $S$. 
If $c_{i} \neq c_{j}$ for any $i \neq j$, we can swap $c_{i}$ and $c_{j}$ to create a different median with the same 1-median cost. Therefore, we consider $c^{*} \coloneqq (c,c)$ as the optimal $1$-median of $S$. Then, we get the following optimal $1$-median cost of $S$.
\begin{align*}
\Phi(c^{*},S) = \Phi(c^{*},a_{1}) + \Phi(c^{*},a_{2}) +\Phi(c^{*},a_{3}) = \sqrt{2} \cdot c + 2 \cdot \sqrt{\left( c-\sqrt{2} \right)^{2} + c^{2}} 
\end{align*}
The function is strictly convex and attains minimum at $c = \sqrt{\frac{1}{2}} - \sqrt{\frac{1}{6}}$. Substituting the value of $c$ in $\Phi(c^{*},S)$, we get $\Phi(c^{*},S) = 1+\sqrt{3} > r - 0.268$, for $r = 3$. This completes the proof of the first statement.

Let us prove the second statement. 
Here, we have $r = 4$ (or $n=2$). 
We create three copies of $L_{2}$ (i.e., $3$-$L_{2}$), and decompose them into three subgraphs: $2$-$P_{4}$, $A_{2}$, and $S_{3}$. 
The decomposition is shown in Figure~\ref{fig:graph_l2}. Note that $P_{4}$ is the same as $L_{1}$. 
There are also other ways to decompose the graph. 
However, some of those decompositions give weak bound on the optimal cost. And, this decomposition gives sufficiently good bound on the optimal cost of $L_{2}$.

\begin{figure}[H]
    \centering
    \begin{framed}
    \includegraphics[scale = 0.65]{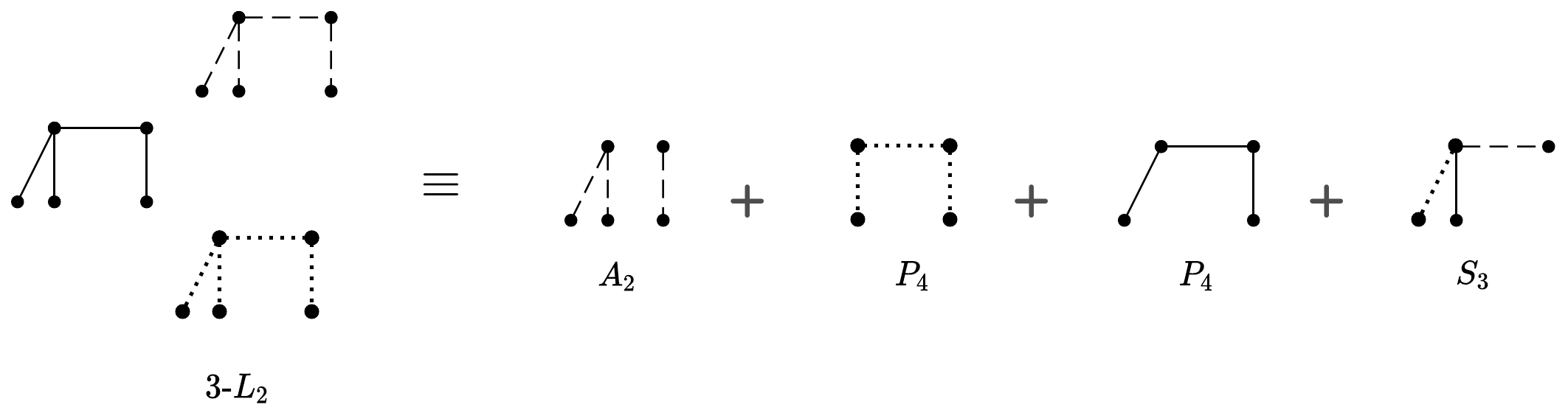}
    \end{framed}
    \vspace*{-3mm}
    \caption{Decomposition of $3$-$L_{2}$}
    \label{fig:graph_l2}
\end{figure}

\noindent Let $c^{*}$ be the optimal 1-median for $L_{2}$. Based on the decomposition, we bound the optimal 1-median cost of $L_{2}$ as follows:
\begin{align*}
    3 \cdot \Phi^{*}(L_{2}) &\geq \Phi^{*}(A_{2}) + 2 \cdot\Phi^{*}(P_{4}) + \Phi^{*}(S_{3}) \numberthis \label{eqn:1}
\end{align*}

\noindent We already have the bounds on the optimal costs of $A_{2}$, $P_{4}$, and $S_{3}$. That is,
\begin{itemize}
\item For $A_{2}$, we have $\Phi^{*}(A_{2}) \geq 3 + 0.095$. This follows from Statement 2 of Lemma~\ref{lemma:cost_An}.

\item For $P_{4}$, we have $\Phi^{*}(P_{4}) \geq |P_{4}| - 0.268 = 2.732$. This follows from Statement 1 of  Lemma~\ref{lemma:cost_Ln} since $P_{4}$ is the same as $L_{1}$, and that the number of edges in $P_4$, denoted by $|P_4|$, equals $3$. 

\item For $S_{3}$, we have $\Phi^{*}(S_{3}) = \sqrt{3(3-1)} = \sqrt{6}$. This follows from Corollary~\ref{corollary:cost_star}, for $r = 3$.
\end{itemize}

Substituting the above values in Equation (\ref{eqn:1}), we get the following inequality:
\begin{align*}
    3 \cdot \Phi^{*}(L_{2}) \geq 3.095 + 2 \cdot (2.732) + \sqrt{6} > 11 
\end{align*}
Thus, the optimal cost of $L_{2}$ is at least $11/3 \geq 3.666 = |L_{2}| - 0.334$. This completes the proof of Statement 2.

Now, let us prove the third statement. Here, we have $r \geq 5$ (or $n \geq 3$). We create two copies of $L_{n}$, and decompose it into three subgraphs: $A_{n}$, $S_{n}$, and $P_{4}$. The decomposition is shown in Figure~\ref{fig:graph_ln}. Again, note that there are many ways to decompose the graph. However, those decompositions may yield weak bound on the optimal cost. Whereas, this decomposition gives sufficiently good bound on the optimal $1$-median cost of $L_{n}$.
\begin{figure}[H]
    \centering
    \begin{framed}
    \includegraphics[scale = 0.7]{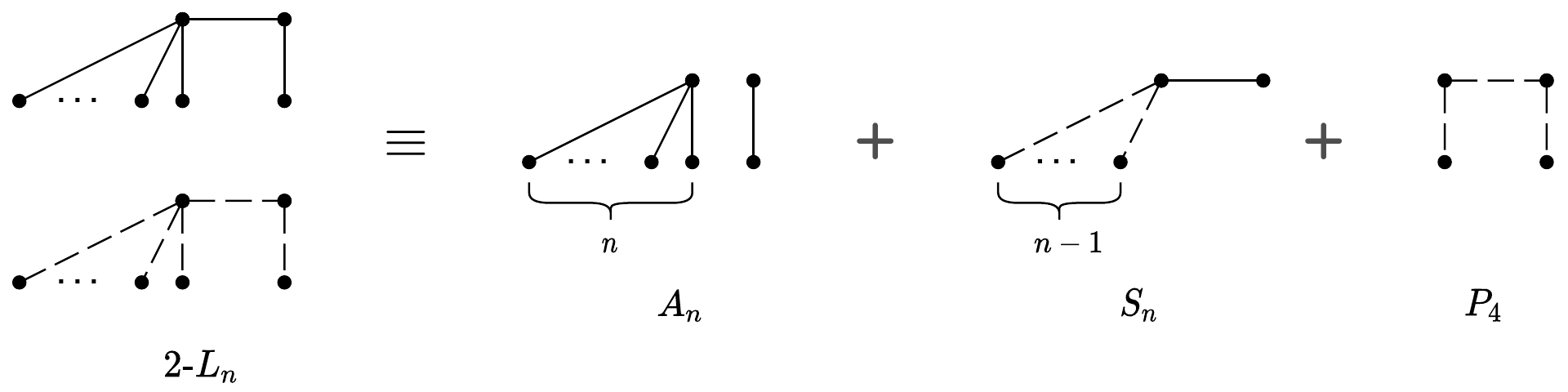}
    \end{framed}
    \vspace*{-3mm}
    \caption{Decomposition of $2$-$L_{n}$ for $n \geq 3$}
    \label{fig:graph_ln}
\end{figure}
\noindent Let $c^{*}$ be the optimal 1-median for $L_{n}$. Based on the decomposition, we bound its optimal $1$-median cost in the following manner:
\begin{align*}
    2 \cdot \Phi^{*}(L_{n}) &\geq \Phi^{*}(A_{n}) + \Phi^{*}(S_{n}) +  \Phi^{*}(P_{4}) \numberthis \label{eqn:2} 
\end{align*}

\noindent We already know the bounds on the optimal costs of $A_{n}$, $S_{n}$ and $P_{4}$. That is,

\begin{itemize}
\item For $A_{n}$, we have $\Phi^{*}(A_{n}) \geq (n+1) + 0.135$. This follows from Statement 3 of Lemma~\ref{lemma:cost_An} ($\because$ $n \geq 3$).

\item For $S_{n}$, we have $\Phi^{*}(S_{n}) = \sqrt{n(n-1)} \geq n - (3-\sqrt{6})$. The first equality follows from Corollary~\ref{corollary:cost_star}) whereas the second inequality follows from Lemma~\ref{lemma:basic_bound}, for $n \geq 3$.

\item  Since $P_{4}$ is is the same as $L_{1}$, we have $\Phi^{*}(P_{4}) \geq |P_{4}| - 0.268$. This follows from Statement 1 of Lemma~\ref{lemma:cost_Ln}.
\end{itemize}

\noindent Substituting the above values in Equation (\ref{eqn:2}) and using the fact that $|P_4|=3$, we obtain the following inequality:
\begin{align*}
    2\cdot \Phi^{*}(L_{n}) \geq (2n+1+|P_{4}|) + (0.135 + \sqrt{6} - 3 -0.268) \geq 2 \cdot (n+2) - 0.684
\end{align*}
\noindent Thus, the optimal cost of $L_{n}$ is at least $(n+2) - 0.342 = |L_{n}| - 0.342$. This completes the proof of the lemma.
\end{proof}

\begin{corollary} \label{corollary:cost_fundamental_graphs}
The cost of any fundamental non-star graph with $r$ edges is at least $r-0.342$.
\end{corollary}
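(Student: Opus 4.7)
The plan is to derive the corollary as an immediate consequence of the structural classification in Lemma~\ref{lemma:fundamental_graphs} together with the per-type cost bounds established in Lemmas~\ref{lemma:cost_3P2}, \ref{lemma:cost_An}, and \ref{lemma:cost_Ln}. Since Lemma~\ref{lemma:fundamental_graphs} enumerates the only fundamental non-star graphs as $3$-$P_{2}$, $A_{n}$ (for $n \geq 1$), and $L_{n}$ (for $n \geq 1$), it suffices to check the bound $\Phi^{*}(\cdot) \geq r - 0.342$ for each of these three families separately, where $r$ denotes the number of edges.

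First, I would handle $3$-$P_{2}$: here $r = 3$ and Lemma~\ref{lemma:cost_3P2} directly yields $\Phi^{*}(3\text{-}P_{2}) \geq r + 0.46 \geq r - 0.342$. Next, I would treat the family $A_{n}$ with $r = n+1$: Statement~1 of Lemma~\ref{lemma:cost_An} already gives $\Phi^{*}(A_{n}) \geq r$ for all $n \geq 1$, which is comfortably above $r - 0.342$. Finally, I would dispatch the family $L_{n}$ with $r = n+2$ by splitting into the three sub-cases exactly as in Lemma~\ref{lemma:cost_Ln}: for $n = 1$ the bound is $r - 0.268$, for $n = 2$ it is $r - 0.334$, and for $n \geq 3$ it is $r - 0.342$. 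Each of these is at least $r - 0.342$, so the worst case across all fundamental non-star graphs is exactly $r - 0.342$, attained (in the bound) by $L_{n}$ for $n \geq 3$.

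There is essentially no obstacle here beyond invoking the preceding lemmas in the right order, since the corollary is a case-by-case minimum of bounds that have already been worked out. The one minor subtlety worth flagging explicitly is that the bound $r - 0.342$ is tight in the sense that it is the weakest of the per-family bounds (coming from $L_{n}$, $n \geq 3$); all other fundamental non-star graphs satisfy strictly better inequalities, which is why the single uniform bound $r - 0.342$ is the correct statement of the corollary. This uniform bound is precisely what will be needed in the subsequent vertex-cover arguments to relate $\delta(F)$ to the structural parameters of $F$.
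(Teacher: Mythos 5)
Your proof is correct and matches the paper's approach exactly: the paper's own proof is a one-line appeal to Lemmas~\ref{lemma:cost_3P2}, \ref{lemma:cost_An}, and \ref{lemma:cost_Ln} (together with the classification in Lemma~\ref{lemma:fundamental_graphs}), and you have simply written out the case check that the worst of the three per-family bounds is $r-0.342$, coming from $L_n$ with $n \geq 3$.
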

\begin{proof}
The proof simply follows from Lemmas~\ref{lemma:cost_3P2},~\ref{lemma:cost_An}, and~\ref{lemma:cost_Ln}.
\end{proof}

We will now bound the cost of {\em any} non-star graph by decomposing it into fundamental non-star graphs.
For this, we define the concept of ``\emph{safe pair}''. 
A safe pair is a pair of vertex-disjoint edges in the graph such that when we remove it from the graph, the remaining graph remains a non-star graph. 
Let us see why such a pair is important. 
First, note that the optimal cost of a $2$-$P_{2}$ is exactly two, using Corollary~\ref{corollary:cost_non_star}. 
Suppose we remove a $2$-$P_{2}$ from the graph $F$. 
Let the remaining graph be $F'$. 
Suppose $\Phi^{*}(F') = |F'| + \gamma$, where $\gamma$ is some constant. 
Then it is easy to see that $\Phi^{*}(F) \geq \Phi^{*}(F') + \Phi^{*}(2$-$P_{2}) = |F| + \gamma$. 
Note that $\gamma$ value is preserved in this decomposition. Suppose we keep removing a safe pair from $F$ until we obtain a graph that does not contain any safe pair. 
Then the remaining graph is simply a fundamental non-star graph by the definition of fundamental non-star graph. 
Moreover, we showed earlier that the optimal cost of any fundamental non-star graph with $r$ edges, is at least $r - 0.342$. 
Note that here $\gamma \geq -0.342$. 
Therefore, $F$ has the optimal cost at least $|F| - 0.342$. This is the main reason for defining a safe pair and a fundamental non-star graphs.
The decomposition procedure described above is given below.

\begin{figure}[H]
\begin{mdframed}
{\tt Decompose$(F)$} \vspace*{1mm} \\
\hspace*{1.2cm} {\bf Input}: A non-star graph $F$.\\
\hspace*{1.2cm} {\bf Output}: A fundamental non-star graph $D \in \{ 3 \textrm{-} P_{2}, A_{n}, L_{n}\}$ \vspace*{1mm}\\
\hspace{1cm} (1) \hspace*{0.6cm} $D \gets F$ \\
\hspace{1cm} (2) \hspace*{0.6cm} while $D \notin \{ 3 \textrm{-} P_{2}, A_{n}, L_{n}\}$ \\
\hspace{1cm} (3) \hspace*{1.5cm} Let $\{e,e'\} \in E(D)$ be a safe pair  \\
\hspace{1cm} (3) \hspace*{1.5cm} $E(D) \gets E(D) \setminus \{e,e'\}$ \\
\hspace{1cm} (4) \hspace*{0.6cm} return $D$
\end{mdframed}
\vspace*{-3mm}
\caption{Decomposition of any non-star graph $F$ into the \emph{fundamental} non-star graphs.}
\label{fig:decompose_procedure}
\end{figure}

\noindent
The above discussion is formalised as the next lemma.

\begin{lemma}\label{lemma:decompose_lemma}
The cost of any non-star graph $F$ is at least $(|F| -0.342) \geq \sqrt{|F|(|F|-1)}+0.158$.
\end{lemma}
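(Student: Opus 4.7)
The plan is to invoke the \texttt{Decompose} procedure of Figure~\ref{fig:decompose_procedure} together with the Decomposition Lemma (Lemma~\ref{lemma:decomposition}). The procedure iteratively strips off a \emph{safe pair}---two vertex-disjoint edges whose removal keeps the graph non-star---and halts precisely when no safe pair remains, which by definition means the current graph is a fundamental non-star graph; by Lemma~\ref{lemma:fundamental_graphs} it then lies in $\{3\text{-}P_2, A_n, L_n\}$. Termination is immediate because each iteration strictly decreases the edge count, and the existence of a safe pair at every non-fundamental step is built into the definitions. If $F$ is already fundamental the loop is skipped and the argument below goes through with $t = 0$ iterations.

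Let $t \geq 0$ be the number of iterations performed, $Q_1, \dots, Q_t$ the safe pairs removed (each a $2$-$P_2$), and $D$ the fundamental non-star graph returned. The edge sets $E(Q_1), \dots, E(Q_t), E(D)$ form a partition of $E(F)$. By Corollary~\ref{corollary:cost_non_star} applied with $r = 2$, each $\Phi^{*}(Q_i) = \sqrt{2}\cdot\sqrt{2(2-1)} = 2 = |Q_i|$, and by Corollary~\ref{corollary:cost_fundamental_graphs} we have $\Phi^{*}(D) \geq |D| - 0.342$. Applying the Decomposition Lemma yields
\[
\Phi^{*}(F) \;\geq\; \Phi^{*}(D) + \sum_{i=1}^{t} \Phi^{*}(Q_i) \;\geq\; (|D| - 0.342) + 2t \;=\; |F| - 0.342,
\]
where the last equality uses $|F| = |D| + 2t$. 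For the second stated inequality, $|F| - 0.342 \geq \sqrt{|F|(|F|-1)} + 0.158$ is equivalent to $|F| - 1/2 \geq \sqrt{|F|(|F|-1)}$, which is the upper bound from Lemma~\ref{lemma:basic_bound} (applicable since every non-star graph has $|F| \geq 2 > 1$).

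The only genuinely delicate point is structural: one must confirm that a non-fundamental non-star graph always admits a safe pair, i.e., a pair of vertex-disjoint edges whose removal leaves a non-star graph (rather than a star, empty graph, or single edge). This is the contrapositive of the definition of ``fundamental non-star graph'', and the case analysis already carried out in the proof of Lemma~\ref{lemma:fundamental_graphs} directly certifies it. Beyond that, the proof is pure bookkeeping: the cost of a $2$-$P_2$ is exactly $2$, matching its edge count, so the ``extra-cost deficit'' of $-0.342$ at the fundamental base propagates unchanged through every stripping step, giving the claimed bound.
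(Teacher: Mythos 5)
Your proposal is correct and follows essentially the same route as the paper: run \texttt{Decompose}, price each removed safe pair at exactly $2$ via Corollary~\ref{corollary:cost_non_star}, lower-bound the residual fundamental graph by Corollary~\ref{corollary:cost_fundamental_graphs}, combine via the Decomposition Lemma, and finish with Lemma~\ref{lemma:basic_bound}. Your extra remark justifying that every non-fundamental non-star graph contains a safe pair is a point the paper leaves implicit, but it does not change the argument.
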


\begin{proof}
Suppose the procedure {\tt Decompose}($F$) runs the while loop $t$ times. 
This means that $F$ is composed of $t$ safe pairs and a fundamental non-star graph $D \in \{ 3 \textrm{-} P_{2}, A_{n}, L_{n}\}$. 
We call $D$ the \emph{residual} graph of $F$. 
Note that $D$ has exactly $|F|-2t$ edges. 
Also, note that $2$-$P_{2}$ is a fundamental non-star graph since it the same as $A_{1}$. 
Based on the decomposition of $F$ into $2$-$P_{2}$'s and $D$, we bound the optimal cost of $F$ as follows: 
\begin{align*}
    \Phi^{*}(F) &\geq t \cdot \Phi^{*}(2\textrm{-}P_{2}) + \Phi^{*}(D) &&\\
    &= 2 \cdot t + \Phi^{*}(D) \quad && \textrm{(using Corollary~\ref{corollary:cost_non_star})}\\
    &\geq 2 \cdot t + (|F|-2t) - 0.342 && \textrm{(using Corollary~\ref{corollary:cost_fundamental_graphs})}\\
    &= |F| - 0.342 &&\\
    &> \sqrt{|F|(|F|-1)}+1/2-0.342  &&\textrm{(using Lemma~\ref{lemma:basic_bound})}\\
    &= \sqrt{|F|(|F|-1)} + 0.158 &&
\end{align*}
This completes the proof of the lemma.
\end{proof}

Next, we show a stronger bound on the optimal cost than the one stated in the previous lemma. 
However, this bound applies for a particular set of graph instances. 
For positive integers $p, q$, let us define a new graph $L_{p,q}$.
This graph is composed of two star graphs $S_{p}$ and $S_{q}$, such that there is an edge between the center vertices of $S_{p}$ and $S_{q}$. Here, the {\em center} vertex is the vertex that is the common endpoint of all edges in a star graph and the remaining vertices are called {\em pendent} vertices.
Let $s_{1}$ and $s_{2}$ denote the center vertices of $S_{p}$ and $S_{q}$ respectively. 
We call the edge $(s_1,s_2)$, the \emph{bridge} edge, and the graph $L_{p,q}$, the \emph{bridge} graph. 
Also, we call the pendent vertices of $S_{p}$ and $S_{q}$ as the \emph{left} and \emph{right} pendent vertices respectively.
Please see Figure~\ref{fig:updated_Ln} for the pictorial depiction of $L_{p,q}$. 
Note that when $p = n$ and $q = 1$, the bridge graph is the same as $L_{n}$.

\begin{figure}[H]
    \centering
    \begin{framed}
    \vspace{-0.7cm}
    \includegraphics{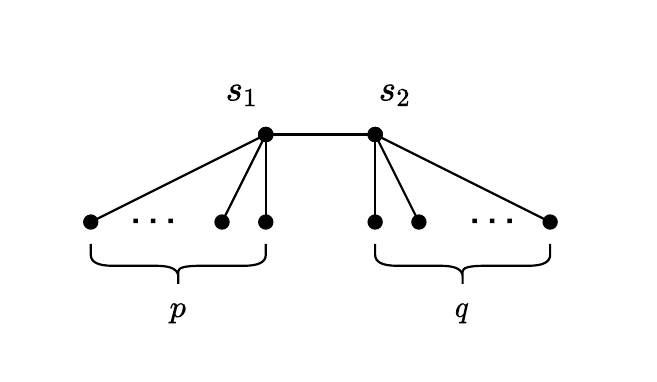}
    \vspace{-0.5cm}
    \end{framed}
    \vspace{-0.3cm}
    \caption{A Bridge Graph: $L_{p,q}$, for $p,q \geq 1$.}
    \label{fig:updated_Ln}
\end{figure}

\noindent Now, we state the lemma that bounds the optimal $1$-median cost of any non-star non-bridge graph.
\begin{lemma}\label{lemma:ultra_decompose}
Suppose $F$ is a non-star non-bridge graph. 
Then $F$ has the optimal $1$-median cost at least $|F| \geq \sqrt{|F|(|F|-1)} + 0.5$.
\end{lemma}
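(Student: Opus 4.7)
The plan is to prove the lemma by strong induction on the number of edges $|F|$, peeling off $2\textrm{-}P_2$'s (safe pairs) one at a time as in the proof of Lemma~\ref{lemma:decompose_lemma}, while ensuring at each step that the residual graph stays non-bridge. The base case is $|F|=2$: the only non-star graph on two edges is $2\textrm{-}P_2 = A_1$, which is non-bridge (being disconnected), and Corollary~\ref{corollary:cost_non_star} gives $\Phi^*(F) = 2 = |F|$. For the inductive step, I first handle the case in which $F$ is already a fundamental non-star graph. By Lemma~\ref{lemma:fundamental_graphs} the possibilities are $3\textrm{-}P_2$, $A_n$ and $L_n$, but $L_n = L_{n,1}$ is itself a bridge graph and is excluded by hypothesis, so $F \in \{3\textrm{-}P_2,\, A_n\}$. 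In both cases the cost bounds already established (Lemma~\ref{lemma:cost_3P2} and the $n \geq 1$ statement of Lemma~\ref{lemma:cost_An}) directly yield $\Phi^*(F) \geq |F|$.

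If $F$ is not fundamental it contains at least one safe pair, and the key auxiliary claim I would prove is the following: \emph{if $F$ is non-star, non-bridge and non-fundamental, then there exists a safe pair $\{e,e'\} \subseteq E(F)$ such that $F' := F - \{e,e'\}$ is non-bridge.} Granting this claim, $F'$ is non-star (by safety of the pair) and non-bridge with $|F'| = |F| - 2$, so the induction hypothesis gives $\Phi^*(F') \geq |F'|$. Since $\{e,e'\}$ is a pair of vertex-disjoint edges, Corollary~\ref{corollary:cost_non_star} gives $\Phi^*(\{e,e'\}) = 2$, and combining with the decomposition Lemma~\ref{lemma:decomposition} produces $\Phi^*(F) \geq \Phi^*(F') + \Phi^*(\{e,e'\}) \geq (|F|-2) + 2 = |F|$, as desired. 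The auxiliary inequality $|F| \geq \sqrt{|F|(|F|-1)} + \tfrac{1}{2}$ in the statement follows immediately from Lemma~\ref{lemma:basic_bound}.

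The main obstacle is proving the structural claim about safe pairs, which is where all the work lies. My plan is to split on the connectivity of $F$. If $F$ is disconnected, then removing any two edges can only preserve or increase the number of connected components, and since every bridge graph $L_{p,q}$ is by definition connected, the residual of any safe pair is automatically non-bridge, so any safe pair works. If $F$ is connected, I would case on the size of a maximum matching $M$ of $F$. When $|M| \geq 3$ there is enough flexibility in the choice of a pair of vertex-disjoint edges to ensure that the two remaining matching edges in the residual cannot both be incident on a single putative ``bridge'' edge, so a suitable safe pair produces a non-bridge residual. When $|M|=2$, I would enumerate, in the spirit of the case analysis of Lemma~\ref{lemma:fundamental_graphs}, the connected triangle-free non-fundamental non-bridge graphs with matching two (short paths such as $P_5$, the $4$-cycle $C_4$, and their obvious augmentations by extra pendants off non-center vertices), and in each case exhibit an explicit safe pair whose removal leaves a non-bridge graph (typically $2\textrm{-}P_2$, $A_m$, or another non-bridge graph). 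The connected, matching-two case is the most delicate part of the argument and is where the explicit case analysis has to be carried out carefully to cover every configuration that the non-bridge hypothesis still allows.
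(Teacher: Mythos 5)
Your overall strategy coincides with the paper's: iteratively remove vertex-disjoint pairs whose removal keeps the graph non-star \emph{and} non-bridge (the paper's ``ultra-safe pairs''), so that the terminal fundamental graph is forced into $\{3\textrm{-}P_{2}, A_n\}$ rather than $L_n$, each removed pair contributing cost exactly $2$; your reduction of the lemma to the claim ``every non-star, non-bridge, non-fundamental graph contains a safe pair with non-bridge residual'' is exactly the paper's reduction. The gap is that this claim is where all the content of the lemma lives, and your proposal does not establish it. The $|M|\geq 3$ case is asserted only via ``enough flexibility,'' the connected $|M|=2$ case is explicitly deferred to an enumeration you have not carried out, and the one case you do argue---$F$ disconnected---is wrong as stated. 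Removing a safe pair can delete an entire component of the edge-induced subgraph: take $F$ to be the disjoint union of $L_{2,2}$ and a single isolated edge $e$. This $F$ is non-star and non-bridge (it is disconnected), and the pair $\{e, e'\}$ with $e'$ a pendant edge of $L_{2,2}$ is a safe pair, yet its removal leaves $L_{2,1}=L_{2}$, which is a bridge graph. So ``any safe pair works'' fails for disconnected $F$, and exhibiting a good pair (here $\{e,b\}$ with $b$ the bridge edge of $L_{2,2}$) again requires a structural argument.

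The paper proves the claim by a short direct argument that sidesteps your connectivity/matching-size casework entirely, and you may want to adopt it: start from an \emph{arbitrary} safe pair $\{e_1,e_2\}$ of the non-fundamental graph $D$ (one exists by the definition of fundamental non-star graph). If it is not ultra-safe, then its removal yields a bridge graph, so $D$ is exactly some $L_{p,q}$ together with the two extra edges $e_1,e_2$. Now a two-case check on how $e_1,e_2$ sit relative to the bridge edge $b=(s_1,s_2)$ produces an ultra-safe pair of the form $\{e_i,b\}$: if some $e_i$ joins a left pendant $u$ to a right pendant $v$, then after removing $\{e_i,b\}$ the residual still contains the vertex-disjoint edges $(s_1,u)$ and $(s_2,v)$ with no edge joining their four endpoints, so it is non-star and non-bridge; otherwise neither $e_i$ crosses between the two stars, both cannot be incident on $b$ (else $D=L_{p+1,q+1}$ would be a bridge graph), and removing $b$ together with an $e_i$ not incident on $b$ disconnects $S_p$ from $S_q$, so the residual is again non-star and non-bridge. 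Until you either carry out this argument or complete your own enumeration (including a corrected treatment of disconnected graphs), the proposal is a correct reduction to an unproven combinatorial claim rather than a proof.
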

\begin{proof}
Here, we need to define the new concept of ``\emph{ultra-safe}'' pair.
An ultra-safe pair is a pair of vertex-disjoint edges such that removing it from the graph does not make the resulting graph a star or a bridge graph.
We decompose $F$ in a similar way as we did before. 
However, instead of removing a safe-pair from the graph, we remove an ultra-safe pair in every iteration of the while loop. 
We decompose $F$ using the following procedure.

\begin{figure}[H]
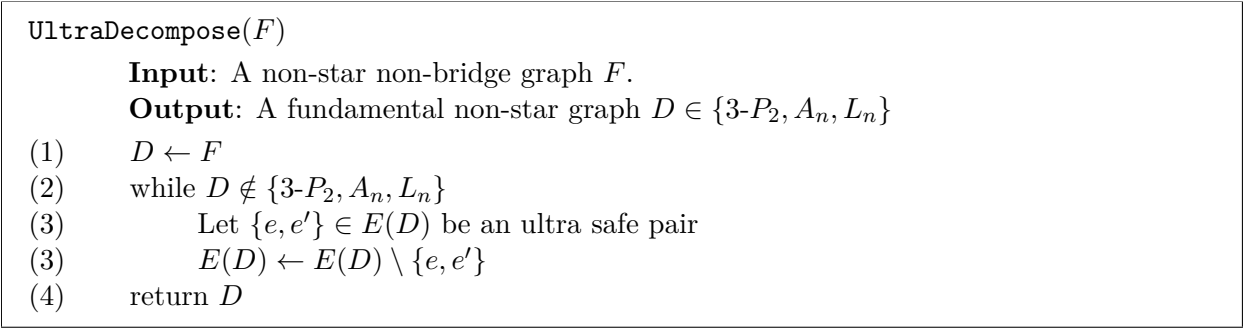

\begin{mdframed}
{\tt UltraDecompose$(F)$} \vspace*{1mm} \\
\hspace*{1.2cm} {\bf Input}: A non-star non-bridge graph $F$.\\
\hspace*{1.2cm} {\bf Output}: A fundamental non-star graph $D \in \{3 \textrm{-} P_{2},A_{n},L_{n}\}$ \vspace*{1mm}\\
\hspace{1cm} (1) \hspace*{0.6cm} $D \gets F$ \\
\hspace{1cm} (2) \hspace*{0.6cm} while $D \notin \{3\textrm{-}P_{2},A_{n},L_{n} \}$ \\
\hspace{1cm} (3) \hspace*{1.5cm} Let $\{e,e'\} \in E(D)$ be an ultra safe pair   \\
\hspace{1cm} (3) \hspace*{1.5cm} $E(D) \gets E(D) \setminus \{e,e'\}$ \\
\hspace{1cm} (4) \hspace*{0.6cm} return $D$
\end{mdframed}
\vspace*{-3mm}
\caption{Decomposition of a non-star non-bridge graph $F$ into fundamental non-star graphs.}
\label{fig:ultra_decompose_procedure}
\end{figure}

First, note that the procedure {\tt UltraDecompose$(F)$} produces a residual graph of type $3$-$P_{2}$ or $A_{n}$. 
It does not produce a residual graph of type $L_{n}$ since we are always removing an ultra-safe pair from the graph, and $L_{n}$ is equivalent to $L_{n,1}$. 
Next, we show that we can always remove an ultra-safe pair from $G$ until we obtain a graph of type $3$-$P_{2}$ or $A_{n}$. 
Consider the $i^{th}$ iteration of the while loop given that it is executed.
Let $D$ be the graph at the start of this iteration. 
It is clear that $D$ is neither a $3$-$P_{2}$ nor $A_{n}$; otherwise, this while loop would not have been executed. 
Also, $D$ can neither be a star nor a bridge graph since an ultra-safe pair was removed in the previous iteration. This fact also holds for the first iteration since the input graph is neither a star nor a bridge graph. It implies that $D$ is a non-star graph but not a fundamental non-star graph. 
Since the graph is not a fundamental non-star graph, it must contain a safe pair. 
Let $e_{1} \equiv (u_{1},v_{1})$ and $e_{2} \equiv (u_{2},v_{2}) $ form a safe pair in $D$. 
If $\{e_{1},e_{2}\}$ is also an ultra-safe pair, we are done.
On the other hand, if $\{e_{1},e_{2}\}$ is not an ultra-safe pair, we show that there is another ultra-safe pair in $D$. 
Since $\{e_{1},e_{2}\}$ is a safe pair but not an ultra safe-pair, removing it would make the resulting graph, a bridge graph $L_{p,q}$. Therefore, $D$ is composed of a graph of type $L_{p,q}$, and the two additional edges: $e_{1}$ and $e_{2}$. 
Let $b \equiv (s_{1},s_{2})$ denote the bridge edge of $L_{p,q}$. 
We split the analysis into two cases, based on the orientation of $e_{1}$ and $e_{2}$ in the graph.
For each case, we show that $D$ contains an ultra-safe pair.
\begin{itemize}
    \item \uline{Case-I}: {\it At least one of the two edges, $e_{1}, e_{2}$ connects a left pendent vertex with a right one.}
    
    Without loss of generality, let $e_{1} \equiv (u_{1},v_{1})$ be the edge that connects a left pendent vertex $u_1$ with a right pendent vertex $v_1$.
    We claim that $e_{1}$ and the bridge edge $b$,  form an ultra-safe pair. 
    Indeed, suppose we remove this pair from the graph. 
    The resulting graph would be a non-star graph since $(s_{1},u_{1})$ and $(s_{2},v_{1})$ are still present in the graph, and they form a $2$-$P_{2}$. 
    Therefore, the pair $\{e_{1},b\}$ satisfies the condition of being a safe pair. 
    Furthermore, the resulting graph is a non-bridge graph since there is no common edge incident on $(s_{1},u_{1})$ and $(s_{2},v_{1})$. 
    This proves that $\{e_{1},b\}$ is an ultra-safe pair.
    
    \item  \uline{Case-II}: {\em Neither $e_{1}$ nor $e_{2}$ connects any left pendent vertex with a right one.} 
    
    First, let us consider the possibility that both the edges $e_{1}$ and $e_{2}$ are incident on the bridge edge, i.e., $e_{1}$ is incident on $s_{1}$ and $e_{2}$ is incident on $s_{2}$. 
    Then the graph $D$ is a bridge graph of the form $L_{p+1,q+1}$ which is not possible as per earlier discussion. 
    Hence, without loss of generality, we can assume that $e_{1}$ is not incident on $b$. 
    Now, we claim that the pair $\{e_{1},b\}$ forms an ultra-safe pair. 
    Note that it forms a $2$-$P_{2}$ since both edges are vertex-disjoint. 
    Now, suppose we remove this pair from the graph. 
    Then in the resulting graph $S_{p}$ and $S_{q}$ are not connected by any edge since $e_{2}$ does not connect left and right pendent vertices. Such a graph can neither be a bridge graph or a star graph, since both of these graphs are connected graphs.
\end{itemize}
The above discussion implies that there $D$ always contains an ultra-safe pair unless $D$ is of type $3$-$P_{2}$ or $A_{n}$.
This means that if the procedure {\tt UltraDecompose}($F$) runs the while loop `$t$' times, then $F$ is composed of $t$ ultra-safe pairs and a fundamental non-star graph $D \in  \{3 \textrm{-} P_{2}, A_{n}\}$.
Based on this decomposition, we bound the optimal cost of $F$ in the following manner:
\begin{align*}
    \Phi^{*}(F) &\geq t \cdot \Phi^{*}(2\textrm{-}P_{2}) + \Phi^{*}(D) &&\\
    &= 2 \cdot t + \Phi^{*}(D), \quad && \textrm{(using Corollary~\ref{corollary:cost_non_star})}\\
    &\geq 2 \cdot t + (|F|-2t), && \textrm{(using Lemma~\ref{lemma:cost_3P2} and~\ref{lemma:cost_An})}\\
    &= |F| &&\\
    &> \sqrt{|F|(|F|-1)}+1/2,  &&\textrm{(using Lemma~\ref{lemma:basic_bound})}
\end{align*}
This completes the proof of the lemma.
\end{proof}

\noindent In the next two subsections, we bound the vertex cover size of any non-star graph $F$ in terms of the extra cost $\delta(F)$.

\subsection{Vertex Cover for Matching Size Two}\label{section:VC_matching_2}
In this section, we show that any graph with a maximum matching of size exactly two has a vertex cover of size at most $2\left( \sqrt{2}+1 \right)\delta(F)+1.62$. Let $C_{5}$ denote a cycle on five vertices. In the following lemma, we show that $C_{5}$ is the only graph with a maximum matching of size two and a vertex cover of size three. The rest of the graphs with a maximum matching of size two, have a vertex cover of size two.

\begin{lemma}\label{lemma:vertex_cover_2matching}
Let $F$ be any graph other than $C_{5}$. If $F$ has a maximum matching of size two, it has a vertex cover of size two. Furthermore, $C_{5}$ has a vertex cover of size three.
\end{lemma}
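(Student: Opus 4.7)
The plan is to fix a maximum matching $M = \{e_1, e_2\}$ of $F$ with $e_1 = (u_1, v_1)$ and $e_2 = (u_2, v_2)$, set $S = \{u_1, v_1, u_2, v_2\}$, and argue case-by-case that unless $F = C_5$ some pair of vertices from $S$ covers every edge. Because $M$ is maximum, every edge of $F$ must share a vertex with $e_1$ or $e_2$ (otherwise adding it to $M$ would give a matching of size three), so every edge has at least one endpoint in $S$. Moreover, any vertex cover of $F$ of size two must contain exactly one vertex of $\{u_1, v_1\}$ and one of $\{u_2, v_2\}$ in order to cover $e_1$ and $e_2$, so only the four candidate pairs $(a,b) \in \{u_1, v_1\} \times \{u_2, v_2\}$ need to be tested.

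To analyze these candidates I would introduce the pendant neighborhoods $A = N(u_1) \setminus S$, $B = N(v_1) \setminus S$, $C = N(u_2) \setminus S$, $D = N(v_2) \setminus S$, and the ``cross edges'' inside $S$ not lying in $M$. Triangle-freeness gives $A \cap B = \emptyset = C \cap D$ and forces each of $u_1, v_1$ to have at most one neighbor in $\{u_2, v_2\}$ (a second such edge would form a triangle with $e_2$). The no-three-disjoint-edges property then kills several configurations: combining $e_2$ with one pendant at $u_1$ and one at $v_1$ shows $A = \emptyset$ or $B = \emptyset$, and symmetrically $C = \emptyset$ or $D = \emptyset$; combining a cross edge with two pendants yields further restrictions.

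Next I would case-split on the cross-edge configuration: (i) none, (ii) a single cross edge, WLOG $(u_1, u_2)$, or (iii) two cross edges, either the parallel pair $\{(u_1,u_2), (v_1,v_2)\}$ or the X-pair $\{(u_1,v_2), (v_1,u_2)\}$. In cases (i), (iii), and most subcases of (ii), direct inspection shows at least one of the four candidate pairs covers every edge, so $|VC(F)| = 2$. The only surviving subcase is (ii) with pendants present at both $v_1$ and $v_2$: the max-matching constraint applied to $(u_1, u_2)$ together with pendants at $v_1, v_2$ forces $B = D = \{w\}$ for a single common vertex $w$ (otherwise picking distinct $b \in B$, $d \in D$ yields the three disjoint edges $(u_1,u_2), (v_1,b), (v_2,d)$), while the earlier constraints force $A = C = \emptyset$. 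The edge set of $F$ is then exactly $\{e_1, e_2, (u_1,u_2), (v_1,w), (v_2,w)\}$, i.e., the cycle $u_1\,v_1\,w\,v_2\,u_2\,u_1$, which is $C_5$. Hence $F \neq C_5$ always yields a vertex cover of size two.

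For the second assertion, $C_5$ has five edges and every vertex has degree two, so any two vertices cover at most four edges and $|VC(C_5)| \geq 3$; any three consecutive vertices of the cycle cover all five edges, giving $|VC(C_5)| = 3$. The main obstacle in this plan is the cross-edge case analysis: one has to carefully trace how triangle-freeness and the max-matching hypothesis jointly interact with each cross-edge configuration so that $C_5$ emerges as the unique obstruction to a size-two vertex cover.
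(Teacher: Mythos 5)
Your proposal is correct and follows essentially the same strategy as the paper's proof: fix the maximum matching $\{(u_1,v_1),(u_2,v_2)\}$, observe that every edge meets $V_M=\{u_1,v_1,u_2,v_2\}$, use triangle-freeness together with the no-three-disjoint-edges property to constrain the pendant edges and the cross edges inside $V_M$, and isolate $C_5$ as the unique configuration (a single cross edge with pendants at the two opposite matched vertices meeting in a common outside vertex) that defeats every two-vertex cover. The paper organizes the cases by the presence of $(v_1,v_2)$ and $(u_1,u_2)$ rather than by counting cross edges, but the content is the same.
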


\begin{proof}
Let $M$ be a maximum matching of $F$. Let $e_{1} \equiv (u_{1},v_{1})$ and $e_{2} \equiv (u_{2},v_{2})$ denote the edges in $M$. 
Let $V_{M}$ denote the vertex set spanned by $M$, i.e., $V_{M} \coloneqq \{u_{1},v_{1},u_{2},v_{2}\}$. 
Let $U$ denote the unmatched edges of $F$. i.e., $U \coloneqq E(F) \setminus M$. 
Note that all edges in $U$ are incident on at least one of the matching edges; otherwise it forms a matching of size three and this contradicts the fact that $F$ has a maximum matching of size two. 
Let $U_{1}$ denote the edges in $U$ that are incident on exactly one of the matching edges and $U_{2} = U \setminus U_{1}$ be the remaining unmatched edges. 
In other words, $U_{2}$ contains the edges that have their both endpoints in $V_{M}$. 

First, we claim that no two edges in $U_{1}$ can be incident on different endpoints of the same matching edge. 
For the sake of contradiction, suppose $(x,u_{1})$ and $(y,v_{1})$ are the edges in $U_{1}$ such that $x,y \notin V_{M}$. 
If $x = y$, it forms a triangle $(x,u_{1},v_{1})$, which is not allowed. 
If $x \neq y$, we have a matching of size three -- $\{(x,u_{1}), (y,v_{1}), (u_{2},v_{2}) \}$. 
This contradicts the fact that $F$ has a maximum matching $M$ of size two. 
Therefore, $U_1$ cannot contain both $(x,u_{1})$ and $(y,v_{1})$.
Similarly, $U_1$ cannot contain both $(x,u_{2})$ and $(y,v_{2})$.
Now, without loss of generality, we can assume that all the edges in $U_{1}$ have their one endpoint in the vertex-set: $\{u_{1},u_{2}\}$. 
Let us divide the remaining analysis into following two cases based on the existence of edge $(v_{1},v_{2})$ in the graph.

\begin{itemize}
    \item \uline{Case 1}: $(v_{1},v_{2}) \notin E(F)$. 
    
$U_{2}$ can only contain the following edges: $(u_{1},v_{2}), (u_{2},v_{1})$, and $(u_{1},u_{2})$. 
Note that all edges in $U_{2}$ have at least one endpoint in $\{u_{1},u_{2}\}$. 
Previously, we showed all edges of $U_{1}$ are incident on $\{u_{1},u_{2}\}$. Based on both of these facts, we can cover all edges in $U$ using only two vertices: $\{u_{1},u_{2}\}$. Furthermore, these vertices also cover the matching edges in $M$. Therefore, all edges of the graph are covered, and we have a vertex cover of size two.

\item
\uline{Case 2}: $(v_{1},v_{2}) \in E(F)$.

Now, note that $U_{2}$ can not contain the edges: $(u_{1},v_{2})$ and $(u_{2},v_{1})$,
since they form the triangles: $(v_{1},v_{2},u_{1})$ and $(v_{1},v_{2},u_{2})$. 
However, $U_{2}$ can contain the edge: $(u_{1},u_{2})$. 
Let us consider the following two sub-cases based on the existence of $(u_{1},u_{2})$ in the graph.

\begin{enumerate}[label=(\alph*)]
    \item \uline{Sub-case:} $(u_{1},u_{2}) \in E(F)$.
    
    We claim that either all the edges in $U_{1}$ are incident on $u_{1}$ or all of them are incident on $u_{2}$. For the sake of contradiction, let $(x,u_{1})$ and $(y,u_{2})$ be the edges in $U_{1}$ such that $x,y \notin V_{M}$. If $x = y$, it forms a triangle $(x,u_{1},u_{2})$, which is not allowed. If $x \neq y$, we get a matching of size three -- $\{(x,u_{1}), (y,u_{2}), (v_{1},v_{2}) \}$ which contradicts the fact that $F$ has a maximum matching $M$ of size two. 
    Without loss of generality, we can assume that all edges in $U_{1}$ are incident on $u_{1}$. 
    Therefore, we can cover all edges of $U_{1}$ using only $u_{1}$. 
    Furthermore, $u_{1}$ covers one of the matching edge $(u_{1},v_{1}) \in M$ and the edge $(u_{1},u_{2}) \in U_{2}$. Only two edges remain uncovered in the graph, which are $(u_{2},v_{2}) \in M$ and $(v_{1},v_{2}) \in U_{2}$. We cover both these edges by picking the vertex $v_{2}$. Thus, we get a vertex cover of size two.

    \item \uline{Sub-case:} $(u_{1},u_{2}) \notin E(F)$.
    
    Let us consider the case when all the edges in $U_{1}$ are incident on either $u_{1}$ or $u_{2}$. 
    In this case, either $\{u_{1},v_{2}\}$ or $\{u_{2},v_{1}\}$ forms a vertex cover of size two. Hence, we are done. Let us consider the other case. Suppose, there are two edges $(x,u_{1})$ and $(y,u_{2})$ in $U_{1}$ such that $x,y \notin V_{M}$. 
    If $x \neq y$, we get a matching of size three --  $\{(x,u_{1}), (y,u_{2}), (v_{1},v_{2})\}$, which is not possible. On the other hand, if $x = y$, then the only possibility is that $F$ is a cycle of length five -- $C_{5} : (x,u_{1},v_{1},v_{2},u_{2})$. In this case, the vertex cover of $F$ is of size $3$.
\end{enumerate}
\end{itemize}
We showed that all graphs with maximum matching $2$ has a vertex cover of size $2$ except for $C_{5}$ that has a vertex cover of size $3$.
This completes the proof of the lemma.
\end{proof}

\begin{corollary}
Let $F$ be any graph with a maximum matching of size two. If the graph is not a $C_{5}$, it has a vertex cover of size at most $ \del +1.62$.
\end{corollary}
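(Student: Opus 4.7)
The plan is to derive this corollary by combining Lemma~\ref{lemma:vertex_cover_2matching} with the lower bound on the $1$-median cost established in Lemma~\ref{lemma:decompose_lemma}, so essentially no new structural argument is needed.

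Since $F$ has a maximum matching of size two and is not isomorphic to $C_{5}$, the preceding Lemma~\ref{lemma:vertex_cover_2matching} already produces a vertex cover of $F$ of size at most $2$. It therefore suffices for me to verify the numerical inequality $\del + 1.62 \geq 2$, or equivalently $\delta(F) \geq \tfrac{0.38}{\sqrt{2}+1} \approx 0.158$.

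For the second ingredient, I observe that $F$ is automatically a non-star graph: a star graph has matching number one, whereas $F$ has matching number two. I can therefore apply Lemma~\ref{lemma:decompose_lemma} to $F$, obtaining $\Phi^{*}(F) \geq \sqrt{|F|(|F|-1)} + 0.158$. By the very definition $\delta(F) \equiv \Phi^{*}(F) - \sqrt{|F|(|F|-1)}$, this reads $\delta(F) \geq 0.158$.

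Combining the two pieces, $\del + 1.62 \geq (\sqrt{2}+1)(0.158) + 1.62 > 2$, so the size-$2$ cover furnished by Lemma~\ref{lemma:vertex_cover_2matching} is always within the claimed budget. There is no real obstacle here; the structural work was absorbed into the preceding vertex-cover lemma and into the decomposition lemma, and the corollary amounts to a calibration check. The constant $1.62$ in the statement is chosen precisely so that this calibration succeeds with just a sliver of slack, which is in turn the reason the $0.342$ constant appeared in Lemma~\ref{lemma:decompose_lemma}.
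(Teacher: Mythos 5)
Your proof is correct and matches the paper's own argument: both invoke Lemma~\ref{lemma:vertex_cover_2matching} for the size-$2$ cover and Lemma~\ref{lemma:decompose_lemma} for $\delta(F) \geq 0.158$, then check that $(\sqrt{2}+1)(0.158) + 1.62 > 2$. Your write-up is in fact slightly more careful than the paper's, since you explicitly justify that $F$ is a non-star graph before applying the decomposition lemma.
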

\begin{proof}
In Lemma~\ref{lemma:decompose_lemma}, we showed that any graph $F$ has an extra cost at least 0.158. In other words, $\delta(F) \geq 0.158$. It is easy to see that $|VC(F)| = 2 \leq \del +1.62$. Hence proved.
\end{proof}

\noindent Next, we consider the particular case of $C_{5}$. The following lemma bounds the optimal $1$-median cost of $C_{5}$.

\begin{lemma}\label{lemma:cost_5cycle}
The optimal $1$-median cost of $C_{5}$ is at least $\sqrt{|C_5|(|C_5|-1)} + 0.622$.
\end{lemma}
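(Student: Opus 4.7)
The plan is to compute the optimal $1$-median of $C_5$ essentially exactly by exploiting its $5$-fold rotational symmetry, and then to verify that the resulting value comfortably exceeds $\sqrt{|C_5|(|C_5|-1)} + 0.622$.

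First I would set up explicit coordinates. Label the vertices of $C_5$ as $1,2,3,4,5$ and the edges as $(i, i{+}1 \bmod 5)$, so the corresponding points in $\X(C_5) \subset \mathbb{R}^5$ are $p_i = e_i + e_{i+1 \bmod 5}$ for $i = 1, \dots, 5$. These five points are pairwise distinct and non-collinear, so by Fact~\ref{fact:unique_median} the optimal $1$-median $c^*$ is unique. Now consider the cyclic permutation $\sigma$ of coordinates sending coordinate $i$ to coordinate $i+1 \bmod 5$; this is an isometry of $\mathbb{R}^5$ that permutes the point set $\X(C_5)$ (it sends $p_i$ to $p_{i+1}$). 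Hence $\sigma(c^*)$ is also an optimal $1$-median, and uniqueness forces $\sigma(c^*) = c^*$. The only fixed points of $\sigma$ lie on the diagonal, so $c^* = (c,c,c,c,c)$ for some scalar $c$.

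Next I would reduce the problem to a one-variable optimization. Each point $p_i$ has exactly two coordinates equal to $1$ and three equal to $0$, so
\[
\|p_i - c^*\| = \sqrt{2(1-c)^2 + 3c^2} = \sqrt{5c^2 - 4c + 2},
\]
and the total $1$-median cost is $\Phi(c^*, C_5) = 5\sqrt{5c^2 - 4c + 2}$. This is a strictly convex function of $c$, minimized where $10c - 4 = 0$, i.e.\ at $c = 2/5$. Plugging in gives $\Phi^*(C_5) = 5\sqrt{6/5} = \sqrt{30}$.

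Finally I would check the numerical bound. Rationalizing,
\[
\sqrt{30} - \sqrt{20} \;=\; \frac{10}{\sqrt{30} + \sqrt{20}} \;>\; \frac{10}{6 + 5} \;=\; \frac{10}{11} \;>\; 0.622,
\]
so $\Phi^*(C_5) = \sqrt{30} > \sqrt{20} + 0.622 = \sqrt{|C_5|(|C_5|-1)} + 0.622$, proving the lemma. The only subtle step is the symmetry reduction justifying $c^* = (c,\dots,c)$; everything else is a routine single-variable convex minimization followed by a cheap numerical estimate, so I do not expect any serious obstacle.
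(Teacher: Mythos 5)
Your proof is correct, and it takes a genuinely different route from the paper. The paper handles $C_5$ with its general decomposition machinery: it splits the five edges into a $2$-$P_2$ (cost exactly $2$ by Corollary~\ref{corollary:cost_non_star}) and an $A_2$ (cost at least $3.095$ by Lemma~\ref{lemma:cost_An}), and the decomposition lemma then gives $\Phi^*(C_5) \geq 5.095 = \sqrt{20} + 0.622\ldots$, which is exactly the constant claimed. You instead compute the optimum exactly: the cyclic coordinate shift is an isometry permuting $\X(C_5)$, so by the uniqueness of the $1$-median for non-collinear points (Fact~\ref{fact:unique_median}) the optimal center lies on the diagonal, and a one-variable minimization yields $\Phi^*(C_5) = \sqrt{30} = \sqrt{20} + 1.005\ldots$. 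Your symmetry-plus-uniqueness argument is the same device the paper itself uses inside Lemma~\ref{lemma:simplex_cost} and Lemma~\ref{lemma:cost_An} (there with transpositions rather than a cyclic shift), so it is entirely in the spirit of the paper; all your computations check out ($5c^2-4c+2$ minimized at $c=2/5$ gives $5\sqrt{6/5}=\sqrt{30}$, and $\sqrt{30}-\sqrt{20}=10/(\sqrt{30}+\sqrt{20})>10/11>0.622$). What each approach buys: yours gives the exact optimal cost and hence a strictly stronger extra-cost bound ($\delta(C_5)\geq 1.005$ rather than $0.622$), with a self-contained elementary argument; the paper's version requires no new computation beyond the fundamental-graph costs it has already established and illustrates the decomposition technique, but it is inherently lossy and exploits nothing special about $C_5$. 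Since the lemma only needs $0.622$, either suffices.
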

\begin{proof}
Let $C_{5}$ be $(u,v,w,x,y)$. We decompose the graph into two fundamental non-star graphs: $A_{1} \colon \{ (u,v), (w,x) \}$ and $A_{2} \colon \{ (v,w),(x,y),(y,u) \}$. The following sequence of inequalities bound the optimal cost of $C_{5}$.
\begin{align*}
\Phi^{*}(C_{5}) &\geq \Phi^{*}(A_{1}) + \Phi^{*}(A_{2}) &&\\
&\geq 2 + (3 + 0.095) \quad 
&&\textrm{(using Statement 1 and 2, of Lemma~\ref{lemma:cost_An})}\\
&= \sqrt{20} + \left( 5-\sqrt{20} \right) + 0.095 &&\\ 
&= \sqrt{|C_5|(|C_5|-1)} + 0.622 && \textrm{($\because |C_5| = 5$)}
\end{align*}
This completes the proof of the lemma.
\end{proof}

\begin{corollary}
The graph $C_{5}$ has a vertex cover of size at most $\del[C_{5}] + 1.62$.
\end{corollary}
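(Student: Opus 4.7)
The plan is to combine two earlier results: the exact vertex cover size of $C_5$ from Lemma~\ref{lemma:vertex_cover_2matching}, and the optimal $1$-median cost lower bound from Lemma~\ref{lemma:cost_5cycle}. Recall that $\del[F]$ is defined as $(\sqrt{2}+1)\delta(F)$ where $\delta(F) \equiv \Phi^*(F) - \sqrt{|F|(|F|-1)}$ is the extra cost of $F$. The goal is to show $|VC(C_5)| \leq \del[C_5] + 1.62$.

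First, I would invoke Lemma~\ref{lemma:vertex_cover_2matching}, which explicitly identifies $C_5$ as the unique graph with matching size two whose vertex cover has size three; in particular $|VC(C_5)| = 3$. So the claim reduces to showing $(\sqrt{2}+1)\delta(C_5) \geq 3 - 1.62 = 1.38$, i.e.\ $\delta(C_5) \geq 1.38/(\sqrt{2}+1) = 1.38(\sqrt{2}-1)$, which is approximately $0.572$.

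Next, I would appeal to Lemma~\ref{lemma:cost_5cycle}, which gives $\Phi^*(C_5) \geq \sqrt{|C_5|(|C_5|-1)} + 0.622$, so by the definition of extra cost we immediately obtain $\delta(C_5) \geq 0.622$. Since $0.622 > 1.38(\sqrt{2}-1)$, this finishes the argument: explicitly,
\[
\del[C_5] + 1.62 = (\sqrt{2}+1)\delta(C_5) + 1.62 \geq (\sqrt{2}+1)(0.622) + 1.62 > 1.5 + 1.62 > 3 = |VC(C_5)|.
\]

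The entire argument is a short arithmetic chain once the two supporting lemmas are in hand, so there is no real obstacle here. The substantive work was already carried out in Lemma~\ref{lemma:cost_5cycle}, where the decomposition of $C_5$ into $A_1$ and $A_2$ was chosen carefully to produce an extra-cost lower bound of $0.622$ that is comfortably larger than the threshold $1.38(\sqrt{2}-1)$ required here.
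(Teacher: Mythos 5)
Your proposal is correct and matches the paper's own argument: both use Lemma~\ref{lemma:cost_5cycle} to get $\delta(C_5)\geq 0.622$ and then verify that $(\sqrt{2}+1)(0.622)+1.62 > 3 = |VC(C_5)|$. The only difference is that you spell out the arithmetic threshold $1.38(\sqrt{2}-1)\approx 0.572$ explicitly, which the paper leaves implicit.
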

\begin{proof}
Since $\delta(C_{5}) \geq 0.622$, we get 
$|VC(C_{5})| = 3 \leq \del[C_{5}] + 1.62$ which proves the corollary.
\end{proof}

\subsection{Vertex Cover for Matching Size at least Three}\label{section:VC_matching_3}
In this section, we show that any non-star graph $F$, with a maximum matching of size at least three, has a vertex cover of size at most $1.8 + \del$. 
First, let us define some notations. 
Let $M$ denote a maximum matching of $F$, and $G_{M}$ denote the subgraph spanned by $M$. 
Let $F'$ be the graph obtained by removing $M$ from $F$, i.e., $F' = (V, E(F) \setminus M)$. 
Let $L$ denote a maximum matching of $F'$, and $G_{L}$ denote the subgraph spanned by $L$. 
We call $L$ the \emph{second maximum matching} of $F$ after $M$.
Now, we remove $L$ from the graph. 
Let $F''$ be the graph obtained by removing $L$ from $F'$,
i.e., $F'' = (V, E(F) \setminus (M \cup L))$. 
Recall that in this entire discussion, we are using $|.|$ to denote the number of edges in any given graph.

Now, we obtain a relation between the vertex cover size and extra cost of a graph. 
To establish this relation, we show that both of them are proportional to the number of vertex disjoint edges in the graph. 
For example, a graph with a maximum matching $M$ has a vertex cover of size at most $2 |M|$. 
Similarly, a set of $m$ vertex-disjoint edges has an extra cost of $\left( \sqrt{2}-1 \right) \cdot \sqrt{m(m-1)}$ (using Corollary~\ref{corollary:cost_non_star}). 
Also, note that a star graph which has a maximum matching of size one, has an extra cost of only zero. 
In the next two lemmas, we formally establish these relations of the vertex cover size and extra cost in terms of number of vertex-disjoint edges in the graph. 
Then we will use the two lemmas to bound the vertex cover size in terms of the extra cost. 
First, let us bound the vertex cover size in terms of $|L|$ and $|M|$.

\begin{lemma}\label{lemma:general_vertex_cover}
Any non-star graph $F$ has a vertex cover of size at most $(|M| + |L| - 1)$.
\end{lemma}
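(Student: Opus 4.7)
My plan is to begin with the easy ``trivial'' vertex cover $V_M$ of $F$ of size $2|M|$: this is a valid cover because, by maximality of $M$, no two vertices of $V\setminus V_M$ are adjacent (otherwise we could extend $M$), so every edge of $F$ has at least one endpoint in $V_M$. From here, the task reduces to exhibiting a set of $|M|-|L|+1$ vertices of $V_M$ that can be deleted from the cover without uncovering any edge of $F$.

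The structural tool I would use is the subgraph $T := M \cup L$. Because every vertex of $V$ lies in at most one edge of $M$ and at most one edge of $L$, $T$ has maximum degree two, hence decomposes into vertex-disjoint alternating paths and cycles whose edges strictly alternate between $M$-type and $L$-type. I would perform a per-component construction of a refined vertex cover: within each alternating cycle of length $2j$ I would keep only $j$ carefully chosen vertices of $V_M$ (every other one), and within each alternating path I would keep the endpoints corresponding to $M$-edges while skipping those that are ``shielded'' by an adjacent $L$-edge endpoint. The crucial fact guaranteeing correctness of this local choice is the maximality of $L$ in $F \setminus M$: every edge of $F \setminus (M \cup L)$ must share a vertex with some edge of $L$, so any edge of $F$ not living inside $T$ is automatically incident on a kept endpoint of an $L$-edge as long as I include at least one endpoint per $L$-edge in the cover.

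Next I would tally the savings. Each alternating cycle contributes a saving of $j$ vertices (from $2j$ down to $j$), and each alternating path contributes a saving equal to the number of $L$-edges it contains (with an off-by-one adjustment at its endpoints depending on whether the path starts and ends with $M$-edges or $L$-edges). Summing these contributions over all components of $T$ should yield a total saving of $|L|$ from ``bulk'' $L$-edges, plus one additional saving of $+1$ arising from a boundary effect — for example, a component of $T$ that is a path starting and ending with $M$-edges, or equivalently an $M$-edge with at least one degree-one endpoint in $F$. This gives the required $|M|-|L|+1$ savings.

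The main obstacle, and what will make the proof non-trivial, is controlling \emph{inter-component} edges of $F$: when I delete a vertex $v \in V_M$ from one component of $T$, I must ensure that every $F$-neighbor of $v$ is either itself kept in the cover or lies in another component whose kept vertices cover the edge. The maximality of $M$ forces every such neighbor to lie in $V_M$, but it may have been dropped by the local analysis of its own component. To handle this, I would layer a careful case analysis on top of the per-component construction — in particular, for each dropped $v$ I would verify that the alternating structure forces its $F$-neighbors to lie on an $L$-edge whose other endpoint is kept. A secondary annoyance is the degenerate case $L = \emptyset$ (i.e.\ $F$ is itself a matching), where the claimed bound cannot literally hold; I would treat this as a boundary case to be dispatched separately (or defer it to the downstream application, where such $F$ are handled directly by Lemma~\ref{lemma:decompose_lemma}).
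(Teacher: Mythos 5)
Your structural framing (decomposing $M\cup L$ into alternating paths and even cycles, and flagging the inter-component edges as the obstacle) is sensible, but the proposal breaks down at exactly the point you defer. The ``crucial fact'' you invoke is false: maximality of $L$ in $F'=F\setminus M$ only guarantees that each edge of $F''=F\setminus(M\cup L)$ touches \emph{some} endpoint of \emph{some} $L$-edge, not the endpoint you happened to keep. So keeping one endpoint per $L$-edge does not cover $F''$; generically you must keep \emph{both} endpoints of the $L$-edges, and that is precisely why the achievable bound is $|M|+|L|-1$ rather than the K\H{o}nig bound $|M|$ for the bipartite graph $M\cup L$. Your per-component rule already fails inside a single component: let $F$ be the $8$-cycle $a_1\cdots a_8$ with $M$ the edges $(a_1,a_2),(a_3,a_4),(a_5,a_6),(a_7,a_8)$ and $L$ the other four cycle edges, plus the chords $(a_2,a_6)$ and $(a_1,a_5)$. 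This is triangle-free, both chords lie in $F''$ and are incident on $L$, yet either ``every other vertex'' choice on the cycle misses one of the two chords, so the cycle component cannot be covered with $j$ vertices as you propose.

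The arithmetic also does not close. Starting from $2|M|$ you must delete $|M|-|L|+1$ vertices, but the savings you tally are $|L|$ plus a single boundary $+1$, and $|L|+1\neq|M|-|L|+1$ unless $|M|=2|L|$. (Take $F$ to be ten disjoint $M$-edges plus one pendant edge at an endpoint of $e_1$: then $|L|=1$, your tally deletes $2$ of the $20$ vertices and leaves a cover of size $18$, far above the claimed $|M|+|L|-1=10$.) Crediting the $+1$ to every path that starts and ends with $M$-edges does not fix this either: there are exactly $|M|-|L|$ such paths, so the total would become $|M|$, i.e.\ a minimum vertex cover of $M\cup L$, which cannot cover $F''$. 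The paper's proof instead keeps both endpoints of all but one $L$-edge ($2|L|-2$ vertices) and then shows the residual graph can be covered with $|M|-|L|+1$ more; the key facts are that the surviving non-$M$ edges form a star (else $L$ would not be a maximum matching of $F'$) and that triangle-freeness lets one pick a single endpoint per surviving $M$-edge in the tight case. This step is the entire content of the lemma and is absent from your proposal. (Your observation that the statement degenerates when $L=\emptyset$ is correct; the paper treats $|L|=0$ separately in a later lemma.)
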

\begin{proof}
Let $G_{ML}$ denote the graph spanned by the edge set $M \cup L$. 
Note that there are no odd cycles in $G_{ML}$; otherwise there would be two adjacent edges in the cycle that would belong to the same matching set $L$ or $M$. Thus, $G_{ML}$ is a bipartite graph. 
There is a well-known result that says that in a bipartite graph, the size of a maximum matching is equal to the size of a vertex cover~\cite{fermat:book_graph_theory_applications_bondy}. 
Therefore, $G_{ML}$ has a vertex cover of size exactly $|M|$. 
Let $S'$ denote a vertex cover of $G_{ML}$.

Now, we give an incremental construction of a vertex cover of the entire graph $G$. Let this vertex cover be denoted by $S$. 
Initially, we add all vertices of $S'$ to $S$. 
Therefore, at this stage, $S$ covers all edges in $L$ and $M$ which means that for every edge in $L$, at least one of its endpoints must belong to $S$. 
Now, we include its other endpoint in $S$ as well and we do this for all edges in $L$. 
We observe that $S$ now covers all edges in $F'$ since $L$ is a maximum matching of $F'$, and all edges in $F'$ are incident on $L$. 
Therefore, $S$ covers all edges in $G$, and has a size of $|M|+|L|$.

Our main goal is to obtain a vertex cover of size $|M| + |L| - 1$. 
We again give an incremental construction and let $\mathcal{S}$ denote this incrementally constructed vertex cover. 
Initially, $\mathcal{S}$ is empty. 
Let us color the edges of the graph. 
We color the edges in $M$ with red color, $L$ with green color, and $E(F'')$ with blue color. 
Note that non-red edges are the edges of the graph $F'$.
Now, for every edge in $L$ except one, we add both its endpoints in $\mathcal{S}$. 
Let $e' \equiv (u',v') \in L$ be the remaining edge of $L$. Now, we remove all the edges of $F$ covered by $\mathcal{S}$. 
Let the resulting graph be $G_{\mathcal{S}}$. 
$G_{\mathcal{S}}$ contains some red edges, some blue edges, and exactly one green edge $e'$. 
Also, note that all non-red edges in $G_{\mathcal{S}}$ form a star graph. 
This is because, if they form a non-star graph, it would have a matching of size at least two and this matching together with the removed green edges form a matching of $F'$ of size $\geq |L|+1$. This contradicts with the fact that that $F'$ has a maximum matching of size $|L|$. Therefore, non-red edges of $G_{\mathcal{S}}$ form a star graph. 
Now, let us construct a vertex cover of $G_{\mathcal{S}}$. Let $R$ be the set of red edges in $G_{\mathcal{S}}$. 
Let $NR$ be the set of non-red edges in $G_{\mathcal{S}}$. Further, assume that $NR \coloneqq \{(u,v_{1}), (u,v_{1}), \dotsc, (u,v_{t})\}$, i.e., all non-red edges are incident on a common vertex $u$. We consider three different cases depending on the number of red edges in $R$. For each of these cases, we construct a vertex cover for $G_{\mathcal{S}}$.

\begin{enumerate}
    \item \uline{Case 1}: $|R| \leq |M| - |L|$.
    
    Since $NR$ forms a star graph, we cover it using a single vertex $u$. For every edge in $R$, we pick one vertex per edge in the vertex cover. Thus, all edges of $G_{\mathcal{S}}$ are covered. So, the size of the entire vertex cover of $F$ is $(|M| - |L|) + 1 + 2\cdot(|L| - 1) = |M| + |L| - 1$. 
    
    \item \uline{Case 2}: $|R| \geq |M| - |L| + 2$.
    
    In this case, $R$ and the removed green edges form a matching of size $|M| + 1$ which contradicts with the fact that $G$ has the maximum matching of size $|M|$. Therefore, we can rule out this case.
    
    \item \uline{Case 3}: $|R| = |M| - |L| + 1$
    
    Here, we claim that every non-red edge in $NR$ must be incident on some red edge in $R$. 
    For the sake of contradiction, suppose this is not true and there is a non-red $e_{i} \in NR$ that is not incident on any of the red edges in $R$. 
    It is easy to see that $\{e_{i}\} \cup R \cup L \setminus \{ e'\}$ forms a matching of size $|M|+1$. It contradicts that $F$ has a matching of size $|M|$. Therefore, each non-red edge in $NR$ must be incident on some red edge in $R$. Moreover, note that no two edges in $NR$ can be incident on the same red edge $(r_{1},r_{2}) \in R$. Otherwise, it would form a triangle -- $(u,r_{1},r_{2})$, which is not allowed. Now, for every edge in $R$, we pick exactly one of its endpoints. This is the endpoint that it shares with some non-red edge in $NR$ if one exists; otherwise an arbitrary endpoint is picked. Thus, we cover the edges in $G_{\mathcal{S}}$ using only $|R|$ vertices. So, the size of the vertex cover of the entire graph $F$ in this case is $|R| + 2|L| - 2 = |M| + |L| - 1$.
\end{enumerate}
This completes the proof of the lemma. 
\end{proof}

\noindent Now, we bound the extra cost of a graph in terms of $|M|$ and $|L|$. There are some special graph instances for which we do the analysis separately. 
For the following lemma, we assume that $|L| \geq 3$ and $F''$ is a non-star non-bridge graph. 

\begin{lemma}\label{lemma:case0_2}
Let $|L| \geq 3$, and $F''$ be a non-star non-bridge graph. Then the extra cost of $F$ is at least $(\sqrt{2}-1) \cdot \left( \, |M| + |L| \, \right) - 1.06$.
\end{lemma}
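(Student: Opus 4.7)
The plan is to apply the decomposition lemma (Lemma~\ref{lemma:decomposition}) to the natural edge partition $E(F) = M \sqcup L \sqcup E(F'')$, and then lower-bound each of the three pieces using facts already established in the paper. The three subgraphs are very different in nature and each has a tight bound available to us: the matching $M$ consists of $|M|$ pairwise vertex-disjoint edges, so $\X(M)$ is a regular simplex of side length $2$; similarly for $\X(L)$. And $F''$ is assumed non-star non-bridge, which is exactly the hypothesis needed to invoke Lemma~\ref{lemma:ultra_decompose}.

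Concretely, by Corollary~\ref{corollary:cost_non_star},
\[
\Phi^{*}(\X(M)) \;=\; \sqrt{2}\cdot\sqrt{|M|(|M|-1)},\qquad \Phi^{*}(\X(L))\;=\;\sqrt{2}\cdot\sqrt{|L|(|L|-1)},
\]
and by Lemma~\ref{lemma:ultra_decompose}, $\Phi^{*}(F'')\geq |F''|$. Since $M$ is a maximum matching we have $|M|\geq|L|\geq 3$, so Lemma~\ref{lemma:basic_bound} with $t=3$ gives $\sqrt{|M|(|M|-1)}\geq |M|-(3-\sqrt{6})$ and likewise for $|L|$. Combining these through Lemma~\ref{lemma:decomposition} and using $|F|=|M|+|L|+|F''|$,
\[
\Phi^{*}(F) \;\geq\; \sqrt{2}\bigl(|M|+|L|\bigr)-2\sqrt{2}\,(3-\sqrt{6})+|F''|
\;=\;(\sqrt{2}-1)(|M|+|L|)+|F|-2\sqrt{2}(3-\sqrt{6}).
\]

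Subtracting $\sqrt{|F|(|F|-1)}$ and applying the upper bound $\sqrt{|F|(|F|-1)}\leq |F|-1/2$ from Lemma~\ref{lemma:basic_bound}, the $|F|$ terms cancel and I obtain
\[
\delta(F)\;\geq\;(\sqrt{2}-1)(|M|+|L|)+\tfrac12-2\sqrt{2}(3-\sqrt{6})
\;=\;(\sqrt{2}-1)(|M|+|L|)+\tfrac12-6\sqrt{2}+4\sqrt{3},
\]
and the numerical value $\tfrac12-6\sqrt{2}+4\sqrt{3}\approx -1.057>-1.06$ yields exactly the claimed bound.

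The substantive part of the argument is really done by Lemma~\ref{lemma:ultra_decompose}: without a linear-in-$|F''|$ (rather than $\sqrt{|F''|(|F''|-1)}$) lower bound on the optimal $1$-median cost of $F''$, the $|F|$ terms would not cancel after subtracting $\sqrt{|F|(|F|-1)}$, and we would lose the $(\sqrt{2}-1)$ gain coming from the two matchings. This is why the hypothesis that $F''$ is non-star non-bridge is essential; the possibility that $F''$ is a bridge graph $L_{p,q}$ (where only the weaker bound of Lemma~\ref{lemma:decompose_lemma} applies) will be handled in a separate case elsewhere in the proof. The only other thing to watch is the size threshold: we need $|M|,|L|\geq 3$ to use $t=3$ in Lemma~\ref{lemma:basic_bound}, which is precisely guaranteed by the assumption $|L|\geq 3$ together with $|M|\geq |L|$.
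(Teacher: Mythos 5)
Your proposal is correct and follows essentially the same route as the paper: decompose $F$ into $G_M$, $G_L$, and $F''$ via Lemma~\ref{lemma:decomposition}, bound the first two by Corollary~\ref{corollary:cost_non_star} together with Lemma~\ref{lemma:basic_bound} (with $t=3$), bound $F''$ by Lemma~\ref{lemma:ultra_decompose}, and finish with $\sqrt{|F|(|F|-1)}\leq|F|-1/2$. Your exact constant $\tfrac12-6\sqrt{2}+4\sqrt{3}\approx-1.057$ matches the paper's rounded $-1.06$, and your observation that $|M|\geq|L|\geq 3$ justifies the $t=3$ substitution for both matchings.
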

\begin{proof}
We decompose $F$ into three subgraphs: $G_{M}$, $G_{L}$, and $F''$. 
It gives the following bound on the optimal cost of $F$.
\begin{align*}
    \Phi^{*}(F) &\geq \Phi^{*}(G_{M}) + \Phi^{*}(G_{L}) + \Phi^{*}(F'')\numberthis \label{eqn:VC_0}
\end{align*}

\noindent We already know the bounds on the optimal costs of $G_{M}$, $G_{L}$ and $F''$. That is,

\begin{itemize}
    \item 
$
\Phi^{*}(G_{M}) \stackrel{\mbox{\tiny{Corollary~\ref{corollary:cost_non_star}}}}{\geq} \sqrt{2} \cdot \st[|M|] \sinq{\textrm{(Lemma~\ref{lemma:basic_bound},$|M| \geq 3$)}}{\geq} \sqrt{2} \cdot \left( |M| - (3 - \sqrt{6}) \right)
\geq \sqrt{2} \cdot |M| -0.78.
$

\item 
$
\Phi^{*}(G_{L}) \sinq{Corollary~\ref{corollary:cost_non_star}}{=} \sqrt{2} \cdot \st[|L|] \sinq{\textrm{(Lemma~\ref{lemma:basic_bound}, $|L| \geq 3$)}}{\geq} \sqrt{2} \cdot \left( |L| - (3 - \sqrt{6}) \right) 
\geq \sqrt{2} \cdot |L| -0.78.
$

\item  $\Phi^{*}(F'') \sinq{Lemma~\ref{lemma:ultra_decompose}}{\geq} |F''|.$
\end{itemize}

\noindent Substituting the above values in equation~\ref{eqn:VC_0}, we obtain the following inequality:
\begin{align*}
    \Phi^{*}(F) &\geq |M| + |L| + |F''| + (\sqrt{2}-1) \cdot \left( |M| + |L|\right) - 1.56&&\\
    &= |F| + (\sqrt{2}-1) \cdot \left( |M| + |L|\right) - 1.56 && \\
    &> \sqrt{|F|(|F|-1)} + 0.5 + (\sqrt{2}-1) \cdot \left( |M| + |L|\right) - 1.56 &&\textrm{(using Lemma~\ref{lemma:basic_bound})}\\
    &= \sqrt{|F|(|F|-1)} + (\sqrt{2}-1) \cdot \left( |M| + |L|\right) - 1.06 &&
\end{align*}
This completes the proof of the lemma.
\end{proof}

\noindent Note that in Lemma~\ref{lemma:general_vertex_cover}, we bound the vertex cover size in terms of $|M|$ and $|L|$. 
Then in Lemma~\ref{lemma:case0_2}, we bound the extra cost in terms of $|M| + |L|$. 
Now, we put these two results together and obtain a relation between the extra cost and vertex cover size.

\begin{corollary}
Let $|L| \geq 3$ and $F''$ is a non-star non-bridge graph. 
Then $F$ has a vertex cover of size at most $1.6 + \del$.
\end{corollary}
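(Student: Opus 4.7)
The plan is to combine Lemma~\ref{lemma:general_vertex_cover} with Lemma~\ref{lemma:case0_2} by eliminating the quantity $|M|+|L|$. Lemma~\ref{lemma:general_vertex_cover} gives an upper bound on the vertex cover of the form $|VC(F)| \le |M|+|L|-1$, and Lemma~\ref{lemma:case0_2} gives a lower bound on the extra cost of the form $\delta(F) \ge (\sqrt{2}-1)(|M|+|L|) - 1.06$. Rearranging the second bound yields an upper bound on $|M|+|L|$ in terms of $\delta(F)$, which is exactly what is needed.

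Concretely, I would first rewrite Lemma~\ref{lemma:case0_2} as
\[
|M|+|L| \;\le\; \frac{\delta(F)+1.06}{\sqrt{2}-1} \;=\; (\sqrt{2}+1)\bigl(\delta(F)+1.06\bigr),
\]
using the identity $1/(\sqrt{2}-1)=\sqrt{2}+1$. Substituting into the bound from Lemma~\ref{lemma:general_vertex_cover}, I obtain
\[
|VC(F)| \;\le\; (\sqrt{2}+1)\delta(F) \;+\; 1.06(\sqrt{2}+1) - 1.
\]
A short numerical estimate $1.06(\sqrt{2}+1) - 1 < 1.56 < 1.6$ finishes the bound, so $|VC(F)| \le \del + 1.6$ as required. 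This is the complete argument.

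There is no real obstacle here since both ingredients have already been established; the only step that requires care is the arithmetic check that $1.06(\sqrt{2}+1)-1$ really is below $1.6$ (and indeed below $1.56$, matching the slack used in Lemma~\ref{lemma:case0_2}). The hypotheses $|L|\ge 3$ and $F''$ being a non-star non-bridge graph are used only to invoke Lemma~\ref{lemma:case0_2}; they are not needed again in the combination step.
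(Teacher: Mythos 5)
Your proof is correct and is essentially the paper's own argument: both combine Lemma~\ref{lemma:general_vertex_cover} ($|VC(F)|\le |M|+|L|-1$) with Lemma~\ref{lemma:case0_2}, the only difference being that you solve for $|M|+|L|$ first while the paper starts from $1.6+\del$ and substitutes directly. The arithmetic checks out, since $1.06(\sqrt{2}+1)-1\approx 1.559<1.6$.
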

\begin{proof}
The proof follows from the following sequence of inequalities:\\
$
     1.6 + \del \sinq{Lemma~\ref{lemma:case0_2}}{\geq} |M| + |L| + 1.6 - \left(\sqrt{2}+1 \right) (1.06) 
     > |M| + |L| - 1
     \sinq{Lemma~\ref{lemma:general_vertex_cover}}{=} |VC(F)|.
$
\end{proof}

\noindent

\noindent There are some special graph instances for which either Lemma~\ref{lemma:general_vertex_cover} gives a weak bound on the vertex cover size or Lemma~\ref{lemma:case0_2} gives a weak bound the extra cost of the graph. This would give an overall weak relation between the vertex cover size and extra cost of the instances. Therefore, we analyse such instances separately. We divide the remaining instances into the following five categories.
\begin{enumerate}
    \item $|L| = 0$: In this case, we show $|VC(F)| \leq (\sqrt{2}+1)\delta(F) + 0.551$.
    \item $|L| = 1$: In this case, we show $|VC(F)| \leq (\sqrt{2}+1)\delta(F) + \mathbf{1.8}$.
    \item $|L| = 2$ and $F'$ is a bridge graph: In this case, we show $|VC(F)| \leq (\sqrt{2}+1)\delta(F) + 1.53$.
    \item $|L| = 2$ and $F'$ is a non-bridge graph: In this case, we show $|VC(F)| \leq (\sqrt{2}+1)\delta(F) + 1.68$.
    \item $|L| \geq 3$ and $F''$ is a bridge graph: In this case, we show $|VC(F)| \leq (\sqrt{2}+1)\delta(F) + 1.4$.
\end{enumerate}

\noindent We analyse these instance one by one. Note that the overall technique remains the same. That is, we first bound the vertex cover size in terms of $|M|$ and $|L|$. Then, we obtain a lower bound on the extra cost in terms of $|M|$ and $|L|$. And, finally we state a corollary (similar to the corollary above) combining these two results. Also, note that for all the following cases we will consider $|M|\geq 3$ since we have already dealt with the case $|M| = 2$ in Section~\ref{section:VC_matching_2}. 

\subsubsection{Case: $\mathbf{|L| = 0}$}

The following lemma is trivial.
\begin{lemma}[Vertex Cover]\label{lemma:case1_1}
If $|L| = 0$, $F$ has a vertex cover of size exactly $|M|$.
\end{lemma}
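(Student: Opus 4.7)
The plan is to unpack what $|L| = 0$ says about the edge set of $F$ and then read off the vertex cover size directly. Recall from the setup preceding the lemma that $L$ was defined as a maximum matching of the graph $F' = (V, E(F) \setminus M)$. So the hypothesis $|L| = 0$ says $F'$ has no edges, i.e., $E(F) \setminus M = \emptyset$, which means $E(F) = M$. Thus $F$ is nothing more than the graph $G_M$ spanned by the matching $M$: a collection of $|M|$ pairwise vertex-disjoint edges.

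From here the two directions of ``exactly $|M|$'' are immediate. For the upper bound, I would simply pick one endpoint of each edge of $M$; this produces a vertex cover of size $|M|$ since every edge of $F$ is in $M$ and is therefore covered by its chosen endpoint. For the matching lower bound, any vertex cover must contain at least one endpoint of each edge in the matching $M$, and since the edges of $M$ are pairwise vertex-disjoint, these contributions are disjoint, forcing $|VC(F)| \ge |M|$.

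I do not foresee a real obstacle here: the whole content of the lemma is the observation that the hypothesis $|L|=0$ collapses $F$ onto its maximum matching, after which the vertex cover count is the standard fact about a disjoint union of edges. The only thing worth being careful about is citing the definition of $L$ correctly (so that ``$|L|=0$'' is properly interpreted as $E(F) = M$ rather than as something weaker), which is why I would phrase the first sentence of the proof in exactly those terms.
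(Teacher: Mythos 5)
Your proof is correct and is precisely the argument the paper has in mind: the paper simply declares this lemma ``trivial'' and omits the proof, and your observation that $|L|=0$ forces $E(F)=M$, so that $F$ is a disjoint union of $|M|$ edges with vertex cover exactly $|M|$ (one endpoint per edge suffices, and the matching forces at least $|M|$), fills in exactly the intended details.
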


\begin{lemma}[Extra Cost]\label{lemma:case1_2}
If $|L| = 0$, the extra cost of $F$ is exactly $\left( \sqrt{2}-1 \right) \cdot \st[|M|] $.
\end{lemma}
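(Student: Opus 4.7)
The plan is to observe that this lemma is essentially a direct computation once one unpacks the definitions. By definition $L$ is a maximum matching of $F' = (V, E(F) \setminus M)$, so $|L| = 0$ forces $E(F) \setminus M = \emptyset$, i.e., $F = G_M$. Hence $F$ consists of exactly $|M|$ pairwise vertex-disjoint edges and $|F| = |M|$.

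Next, I would invoke Corollary~\ref{corollary:cost_non_star}, which gives the optimal $1$-median cost of a set of $r$ pairwise vertex-disjoint edges as $\sqrt{2} \cdot \sqrt{r(r-1)}$. Applied with $r = |M|$, this yields
\[
\Phi^{*}(F) = \sqrt{2} \cdot \sqrt{|M|(|M|-1)}.
\]
Note that since we are in the regime $|M| \geq 3$ (the matching-size-two case having been handled in Section~\ref{section:VC_matching_2}), the graph $F$ is genuinely a non-star graph, so applying Corollary~\ref{corollary:cost_non_star} is legitimate.

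Finally, I would substitute into the definition of extra cost,
\[
\delta(F) \;=\; \Phi^{*}(F) - \sqrt{|F|(|F|-1)} \;=\; \sqrt{2} \cdot \sqrt{|M|(|M|-1)} - \sqrt{|M|(|M|-1)} \;=\; (\sqrt{2}-1)\cdot \sqrt{|M|(|M|-1)},
\]
which is exactly the claimed equality. There is no real obstacle here; the only subtlety worth a sentence of justification is the identification $F = G_M$ (hence $|F| = |M|$), which is immediate from $|L| = 0$.
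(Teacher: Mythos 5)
Your proof is correct and follows the paper's own (one-line) argument: identify $F$ with $G_M$ via $|L|=0$, apply Corollary~\ref{corollary:cost_non_star} with $r=|M|$, and subtract $\sqrt{|F|(|F|-1)}$ per the definition of extra cost. You simply spell out the details the paper leaves implicit.
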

\begin{proof}
The proof simply follows from Corollary~\ref{corollary:cost_non_star}.
\end{proof}

\begin{corollary}
If $|L| = 0$, $F$ has a vertex of size at most $0.551 + \del$.
\end{corollary}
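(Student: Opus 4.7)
The plan is to derive the bound by chaining together Lemma~\ref{lemma:case1_1} and Lemma~\ref{lemma:case1_2}, exploiting the fortunate algebraic identity $(\sqrt{2}+1)(\sqrt{2}-1) = 1$. Concretely, I would observe that the extra cost given by Lemma~\ref{lemma:case1_2} is $\delta(F) = (\sqrt{2}-1)\cdot\st[|M|]$, so multiplying by $(\sqrt{2}+1)$ collapses the prefactor and yields $\del[F] = \st[|M|]$. Since Lemma~\ref{lemma:case1_1} tells us $|VC(F)| = |M|$, the desired inequality $|VC(F)| \le 0.551 + \del[F]$ reduces to the purely numerical claim $|M| \le 0.551 + \st[|M|]$.

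To close this, I would invoke Lemma~\ref{lemma:basic_bound} with $t = 3$, which is legitimate because all the residual cases under discussion satisfy $|M| \geq 3$ (the paper explicitly sets aside $|M| \le 2$ in Section~\ref{section:VC_matching_2}). Lemma~\ref{lemma:basic_bound} then gives $\st[|M|] \geq |M| - (3 - \sqrt{6})$. Since $3 - \sqrt{6} < 0.551$ (numerically, $3 - \sqrt{6} \approx 0.5505$), we obtain $0.551 + \st[|M|] \geq |M| + (0.551 - (3 - \sqrt{6})) \geq |M|$, which finishes the corollary.

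The proof is essentially a one-line chain of substitutions, so there is no real obstacle; the only ``delicate'' point is checking that the numerical slack $0.551$ is chosen precisely large enough to absorb the $(3 - \sqrt{6})$ slack coming from Lemma~\ref{lemma:basic_bound} when $|M| \ge 3$. This is the reason the constant appearing in the corollary is $0.551$ rather than, say, $0.5$: it is tuned to match exactly the $t = 3$ specialization of the basic bound while preserving the coefficient $1 = (\sqrt{2}+1)(\sqrt{2}-1)$ of $\st[|M|]$ in $\del[F]$.
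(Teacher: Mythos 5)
Your proof is correct and follows exactly the paper's own argument: combine Lemma~\ref{lemma:case1_1} ($|VC(F)|=|M|$) with Lemma~\ref{lemma:case1_2}, use $(\sqrt{2}+1)(\sqrt{2}-1)=1$ to get $\del = \st[|M|]$, and then apply Lemma~\ref{lemma:basic_bound} with $t=3$ (valid since $|M|\geq 3$ here) together with $3-\sqrt{6}\approx 0.5505 < 0.551$. Nothing is missing.
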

\begin{proof}
The proof follows from the following series of inequalities:\\
$
    0.551 + \del \sinq{Lemma~\ref{lemma:case1_2}}{=} 0.551 + \st[|M|] 
    \sinq{\textrm{(Lemma~\ref{lemma:basic_bound}, $|M| \geq 3$)}}{\geq} 0.551 + |M| - (3 - \sqrt{6}) > |M|
    \sinq{Lemma~\ref{lemma:case1_1}}{=} |VC(F)|.
$
\end{proof}

\subsubsection{Case: $\mathbf{|L| = 1}$}

Note that the condition $|L| = 1$ is equivalent to $F'$ being a star graph.
\begin{lemma}[Vertex Cover]\label{lemma:case2_1}
If $|L| = 1$, $F$ has a vertex cover of size exactly $|M|$
\end{lemma}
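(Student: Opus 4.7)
The plan is to show $|VC(F)| \le |M|$ by explicit construction; since the matching $M$ already forces $|VC(F)| \ge |M|$, this will yield equality. The key structural observation is that $|L| = 1$ is equivalent to saying that $F' = (V, E(F) \setminus M)$ has maximum matching of size one, which means $F'$ is a star graph. Let $u$ denote its centre and $v_1, \ldots, v_d$ its pendent neighbours; every non-matching edge of $F$ is then of the form $(u, v_i)$.

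I would split the construction into two cases based on whether or not $u$ is covered by $M$. If $u$ is an endpoint of some matching edge $(u,w) \in M$, then placing $u$ into the vertex cover simultaneously covers all $d$ star edges and the matching edge $(u,w)$; selecting an arbitrary endpoint from each of the remaining $|M|-1$ matching edges then yields a vertex cover of total size $1 + (|M|-1) = |M|$. If instead $u$ is unmatched by $M$, then by maximality of $M$ each neighbour $v_i$ must itself be matched, for otherwise $\{(u,v_i)\} \cup M$ would be a matching of size $|M| + 1$, contradicting maximality. Now invoking triangle-freeness, no two neighbours $v_i, v_j$ can be the two endpoints of a single matching edge, as that would produce the triangle $(u,v_i,v_j)$. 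Hence $v_1, \ldots, v_d$ lie in $d$ pairwise distinct edges of $M$; adding all $v_i$ to the vertex cover covers every star edge together with those $d$ matching edges, and picking one endpoint from each of the remaining $|M|-d$ matching edges completes a vertex cover of size $d + (|M|-d) = |M|$.

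The only substantive step is Case 2, where both triangle-freeness (to guarantee that the $v_i$ spread across distinct matching edges) and the maximality of $M$ (to force every $v_i$ to be matched in the first place) are essential. Both hypotheses are available in the setting of Theorem~\ref{theorem:reduction_kmedian}, so I do not anticipate any obstacle beyond carefully verifying that these two ingredients together rule out the only configurations which could force a vertex cover of size $|M|+1$.
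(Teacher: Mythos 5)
Your proof is correct, but it takes a genuinely different route from the paper. The paper obtains this lemma as an immediate consequence of the general bound of Lemma~\ref{lemma:general_vertex_cover} ($|VC(F)| \le |M| + |L| - 1$ for any non-star $F$), simply substituting $|L| = 1$; all of the work there is hidden in the proof of that general lemma, which colours the edges, invokes K\H{o}nig's theorem on the bipartite graph spanned by $M \cup L$, and does a case analysis on the number of residual matching edges. You instead give a direct, self-contained construction that exploits the specific structure available when $|L| = 1$, namely that $F'$ is a star (an equivalence the paper also records, and which itself quietly uses triangle-freeness, since a triangle also has maximum matching one). Your two cases are both sound: when the star centre $u$ is matched, covering via $u$ plus one endpoint of each remaining matching edge gives $|M|$ vertices; when $u$ is unmatched, maximality of $M$ forces every pendant neighbour $v_i$ to be matched, and triangle-freeness forces the $v_i$ into pairwise distinct matching edges, so taking all $v_i$ plus one endpoint of each untouched matching edge again gives $|M|$. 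Together with the trivial lower bound from the matching, this gives equality. What your approach buys is a short, elementary argument that avoids K\H{o}nig's theorem; what the paper's approach buys is uniformity, since the same Lemma~\ref{lemma:general_vertex_cover} is reused verbatim for the cases $|L| = 2$ and $|L| \ge 3$, where your star-based construction would no longer apply.
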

\begin{proof}
The proof follows from Lemma~\ref{lemma:general_vertex_cover} and substituting $|L| = 1$.
\end{proof}

\begin{lemma}[Extra Cost]\label{lemma:case2_2}
If $|L| = 1$, the extra cost of $F$ is at least $ \left(\sqrt{2}-1 \right) \left( |M|\right) - 0.743$
\end{lemma}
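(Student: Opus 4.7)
The plan is to mirror the proof strategy of Lemma~\ref{lemma:case0_2}: I will exhibit a decomposition of $F$ into subgraphs whose optimal $1$-median costs admit known lower bounds, and then invoke Lemma~\ref{lemma:decomposition}. Since $|L|=1$, the subgraph $F' := F \setminus M$ is a star with $|F'| \geq 1$ edges. I will first handle the arithmetically tight case $|F'|=1$, which determines the constant $0.743 = 3\sqrt{2}-\tfrac{7}{2}$ in the statement, and then argue that the case $|F'|\geq 2$ yields a strictly stronger bound.

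For $|F'|=1$, let $e' \equiv (a,b)$ be the unique edge of $F'$. The maximality of $M$ forces $e'$ to share a vertex with some matching edge, so the set $A := \{e \in M : e \cap e' \neq \emptyset\}$ satisfies $1 \leq |A| \leq 2$. I pick any two matching edges $e_1, e_2 \in M$ with $A \subseteq \{e_1, e_2\}$ (possible since $|A|\leq 2 \leq |M|$). Then $\{e_1, e_2\}$ is a $2$-$P_{2}$, and $\{e'\} \cup (M \setminus \{e_1,e_2\})$ is a set of $|M|-1$ pairwise vertex-disjoint edges, because every matching edge sharing a vertex with $e'$ lies in $A$ and hence was removed. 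Applying Lemma~\ref{lemma:decomposition} together with Corollary~\ref{corollary:cost_non_star} (once with $r=2$ and once with $r=|M|-1$),
\[
\Phi^*(F) \ \geq\ 2 \ +\ \sqrt{2}\,\sqrt{(|M|-1)(|M|-2)}.
\]
Since $|M|\geq 3$, Lemma~\ref{lemma:basic_bound} with $t=2$ yields $\sqrt{(|M|-1)(|M|-2)} \geq |M|-3+\sqrt{2}$. Combining with $\sqrt{|F|(|F|-1)} \leq |F|-\tfrac{1}{2} = |M|+\tfrac{1}{2}$ will give
\[
\delta(F) \ \geq\ \sqrt{2}|M| + 4 - 3\sqrt{2} - \bigl(|M|+\tfrac{1}{2}\bigr) \ =\ (\sqrt{2}-1)|M| + \tfrac{7}{2} - 3\sqrt{2},
\]
which is precisely the stated bound.

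For $|F'|\geq 2$, let $v$ be the center of the star $F'$. If some matching edge $e^*$ is vertex-disjoint from $V(F')$, then $F' \cup \{e^*\}$ is an $A_{|F'|}$-graph; decomposing $F = A_{|F'|} \cup (M\setminus\{e^*\})$ and invoking the improved bound $\Phi^*(A_{|F'|}) \geq |F'|+1+0.095$ from Lemma~\ref{lemma:cost_An} (statement~2) yields the strictly stronger bound $\delta(F) \geq (\sqrt{2}-1)|M| - 0.648$. If no matching edge is disjoint from $V(F')$, then triangle-freeness (which forbids a matching edge between two pendants of $F'$, since such an edge would complete a triangle through the center $v$) together with the maximality of $M$ imposes severe structural restrictions on $F$; a small case analysis on whether $v\in V(M)$, using a spider-like decomposition -- one star edge $(v,u_{i_0})$ paired with the matching edges incident on the other matched pendants, producing a set of pairwise vertex-disjoint edges, while the leftover edges assemble into an $A_n$ graph -- again delivers a bound at least as strong as the claimed one.

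The principal obstacle will be verifying vertex-disjointness of the chosen decompositions in the sub-cases of $|F'|\geq 2$, particularly when the pendants of $F'$ are heavily incident on matching edges; here triangle-freeness is essential for ruling out the configurations that would destroy disjointness. The case $|F'|=1$ is arithmetically tight, and the constant $3\sqrt{2}-\tfrac{7}{2}\approx 0.743$ arises from combining the exact cost of the $2$-$P_{2}$ (namely $2$) with the $t=2$ bound of Lemma~\ref{lemma:basic_bound} applied to the $|M|-1$ vertex-disjoint edges.
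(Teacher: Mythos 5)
Your proposal is correct. On the arithmetically tight case $|F'|=1$ it reproduces the paper's proof exactly: the paper's $M'=(M\setminus\{\ell_1,\ell_2\})\cup\{e\}$ and $L'=\{\ell_1,\ell_2\}$ are precisely your $(|M|-1)$-edge matching $\{e'\}\cup(M\setminus\{e_1,e_2\})$ and your $2$-$P_{2}$, and the constant $\tfrac{7}{2}-3\sqrt{2}$ falls out of the same arithmetic. For $|F'|\ge 2$ the routes genuinely differ. The paper uses one uniform construction: swap a single star edge $e$ into the matching, push the (at most two) matching edges meeting $e$ into the star part, and apply statement~1 of Lemma~\ref{lemma:cost_An} to the resulting $L'$. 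You instead keep the star intact and either adjoin a matching edge disjoint from $V(F')$ to form $A_{|F'|}$, or use the spider decomposition. Your version costs more case analysis but yields strictly better constants in every $|F'|\ge 2$ branch (it lets you invoke statement~2 of Lemma~\ref{lemma:cost_An}), and it is more robust: the paper's assertion that $L'=(E(F')\setminus\{e\})\cup\{\ell_1,\ell_2\}$ is always of type $A_n$ is delicate precisely when $\ell_2$ must be taken as an arbitrary second matching edge and that edge happens to meet a pendant of the star --- which is exactly the configuration your spider case is built for. The only part you leave at sketch level is that spider case, and it does go through: triangle-freeness and the matching property guarantee that $\{(v,u_{i_0})\}\cup\{(u_i,y_i):i\ne i_0\}$ is a set of $|M|-1$ (if $v$ is matched) or $|M|$ (if $v$ is unmatched) pairwise vertex-disjoint edges and that the leftover edges form an $A_n$ with $n\ge 1$; the two sub-cases give $\delta(F)\ge(\sqrt{2}-1)|M|-0.648$ and $\delta(F)\ge(\sqrt{2}-1)|M|-0.279$ respectively, both stronger than the stated bound.
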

\begin{proof}

Let $e$ be some edge in $E(F')$. 
The edge $e$ must incident on some edge of $M$; otherwise $M$ would not be a maximum matching. 
Furthermore, $e$ can only be incident on at most two edges of $M$. 
Let us define the edges $l_{1}$ and $l_{2}$ in the graph depending on the orientation of $e$ in the graph.
\begin{itemize}
    \item If $e$ is incident on two edges of $M$, then $l_{1}$ and $l_{2}$ are defined as the corresponding incident edges in $M$.
    \item If $e$ is incident on only one edge of $M$, then $l_{1} \in M$ is defined as the incident edge and $l_{2}$ is defined as any other edge in $M$. 
\end{itemize}
Let $M' \coloneqq (M \setminus \{\ell_{1},\ell_{2}\} ) \cup \{e\}$ and $L' = E(F') \cup \{\ell_{1},\ell_{2}\} \setminus \{e\}$.
Given this, note that $M'$ forms a matching of size $(|M|-1)$
and $L'$ spans a graph of type $A_{n}$ for $n \geq 1$. 
Let $G_{M'}$ denote the graph spanned by $M'$, and $G_{L'}$ denote the graph spanned by $L'$.
We decompose $F$ into these two subgraphs, i.e., $G_{M'}$ and $G_{L'}$. 
It gives the following bound on the optimal cost of $F$.
\begin{align*}
    \Phi^{*}(F) &\geq \Phi^{*}\left(  G_{M'}  \right) + \Phi^{*}\left(  G_{L'}  \right) \numberthis \label{eqn:VC_2}
\end{align*}

\noindent We already know the bounds on the optimal costs of $G_{M'}$ and $G_{L'}$. 
That is,

\begin{itemize}
    \item 
$
\Phi^{*}(G_{M'}) \sinq{Corollary~\ref{corollary:cost_non_star}}{\geq} \sqrt{2} \cdot \st[|M'|]
\sinq{\textrm{(Lemma~\ref{lemma:basic_bound},  $|M'| \geq 2$)}}{\geq} \sqrt{2} \cdot \left( |M'| - (2 - \sqrt{2}) \right).
$

\item 
$
\Phi^{*}(G_{L'}) \sinq{\textrm{(Lemma~\ref{lemma:cost_An} statement 1)}}{\geq} |L'|.
$
\end{itemize}

\noindent We substitute the above values in Equation (\ref{eqn:VC_2}). This gives the following inequality:
\begin{align*}
    \Phi^{*}(F) &\geq |M'| + |L'| + \left( \sqrt{2}-1 \right) |M'| + 2 - 2 \sqrt{2} \\
    &= |M| + |F'| + \left(\sqrt{2} - 1\right) |M| + 3-3\sqrt{2} &&\\
    &\quad \textrm{(substituting $|M'| = |M|-1$ and $|L'| = |F'| + 1$)} &&\\
    &= |F| + \left(\sqrt{2} - 1\right) |M| + 3-3\sqrt{2} && \left( \because |F| = |M| + |F'| \right)\\
    &> \sqrt{|F|(|F|-1)} + 0.5 + \left(\sqrt{2} - 1\right) |M| + 3-3\sqrt{2} && \textrm{(using Lemma~\ref{lemma:basic_bound})} \\
    &> \sqrt{|F|(|F|-1)} + \left(\sqrt{2} - 1\right) |M| -0.743 &&
\end{align*}
This completes the proof of the lemma.
\end{proof}

\begin{corollary}
If $|L| = 1$, then $F$ has a vertex cover of size at most $1.8 + \del$.
\end{corollary}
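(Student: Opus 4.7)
The plan is to simply combine the two immediately preceding lemmas (Lemma~\ref{lemma:case2_1} and Lemma~\ref{lemma:case2_2}) with the algebraic identity $(\sqrt{2}+1)(\sqrt{2}-1) = 1$, which is precisely why the constant $(\sqrt{2}+1)$ is the correct multiplier on $\delta(F)$ in the target bound.

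First I would invoke Lemma~\ref{lemma:case2_2} to write $\delta(F) \geq (\sqrt{2}-1)|M| - 0.743$. Multiplying both sides by the positive constant $(\sqrt{2}+1)$ and using $(\sqrt{2}+1)(\sqrt{2}-1) = 1$, I obtain
\[
\left(\sqrt{2}+1\right)\delta(F) \;\geq\; |M| - \left(\sqrt{2}+1\right)(0.743).
\]
Then I would numerically check that $(\sqrt{2}+1)(0.743) < 1.8$ (indeed $(\sqrt{2}+1)(0.743) \approx 1.794$), so that after adding $1.8$ to both sides,
\[
1.8 + \left(\sqrt{2}+1\right)\delta(F) \;\geq\; |M|.
\]
Finally, Lemma~\ref{lemma:case2_1} gives $|VC(F)| = |M|$ in this case, which completes the argument.

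There is no real obstacle here; the work has already been front-loaded into Lemmas~\ref{lemma:case2_1} and~\ref{lemma:case2_2}. The only care needed is to verify the numerical slack $1.8 - (\sqrt{2}+1)(0.743) > 0$, which is tight but positive, confirming that the constant $1.8$ chosen in the statement is the right one (and explaining why it is slightly larger than the constants appearing in the other cases $|L| = 0$, $|L| = 2$, and $|L| \geq 3$).
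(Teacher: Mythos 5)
Your proposal is correct and is essentially identical to the paper's own proof: both combine Lemma~\ref{lemma:case2_2} with the identity $(\sqrt{2}+1)(\sqrt{2}-1)=1$ to get $1.8+(\sqrt{2}+1)\delta(F)\geq |M|+1.8-(\sqrt{2}+1)(0.743)>|M|$, and then apply Lemma~\ref{lemma:case2_1}. The numerical slack $1.8-(\sqrt{2}+1)(0.743)\approx 0.006$ is exactly the check the paper relies on as well.
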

\begin{proof}
The proof follows from the following sequence of inequalities:\\
\[
     1.8 + \del \sinq{Lemma~\ref{lemma:case2_2}}{\geq} |M| + 1.8 - \left(\sqrt{2}+1 \right) (0.743) 
     > |M|
     \sinq{Lemma~\ref{lemma:case2_1}}{=} |VC(F)|.
\]
\end{proof}


\subsubsection{Case: $|L|=2$ and $\mathbf{F'}$ is Bridge Graph}

Since $F'$ is a bridge graph, $|L| = 2$. For this case, Lemma~\ref{lemma:general_vertex_cover} gives a vertex cover of size at most $|M| + 1$. However, we show a stronger bound than this in the following lemma.

\begin{lemma}[Vetex Cover] \label{lemma:case3_1}
If $F'$ is a bridge graph $L_{p,q}$ for some $p,q \geq 1$, then $F$ has a vertex cover of size $|M|$.
\end{lemma}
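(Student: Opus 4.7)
The plan is to improve on the generic bound of Lemma~\ref{lemma:general_vertex_cover}, which yields only $|M| + |L| - 1 = |M| + 1$ when $|L| = 2$, by producing an explicit vertex cover of size exactly $|M|$ that exploits the bridge structure of $F'$. Let $s_1, s_2$ denote the center vertices of $F' = L_{p,q}$, with left pendent vertices $u_1, \dots, u_p$ attached to $s_1$ and right pendent vertices $w_1, \dots, w_q$ attached to $s_2$; note that the pair $\{s_1, s_2\}$ is itself a vertex cover of $F'$.

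First I would show that at least one of $s_1, s_2$ is $M$-matched. This is immediate: since the bridge $(s_1, s_2)$ lies in $F'$ it is not in $M$, so if both endpoints were unmatched this single edge would be an augmenting path, contradicting the maximality of $M$.

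Next comes a case split. If both $s_1$ and $s_2$ are $M$-matched, say $(s_1, a), (s_2, b) \in M$, then since $(s_1, s_2) \notin M$ we have $a \ne s_2$ and $b \ne s_1$, so these are two distinct $M$-edges incident on $\{s_1, s_2\}$; the cover $\{s_1, s_2\}$ augmented with one endpoint per remaining $M$-edge has size $2 + (|M| - 2) = |M|$. The subtler case is when, without loss of generality, $s_1$ is matched via $(s_1, a) \in M$ while $s_2$ is unmatched. Maximality of $M$ then forces every $w_j$ to be matched, say via $(w_j, z_j) \in M$. Here I would appeal crucially to triangle-freeness: since $(s_1, s_2)$ and $(s_2, w_j)$ both lie in $F'$, the edge $(s_1, w_j)$ cannot exist in $F$ (it would close a triangle), so $a \notin \{w_1, \dots, w_q\}$; by the same reasoning $z_j \ne s_1$, and no edge $(w_j, w_k)$ with $j \ne k$ can lie in $F$ (it would triangulate with $s_2$), so $z_j \ne w_k$. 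Consequently the $1+q$ edges $(s_1, a), (w_1, z_1), \dots, (w_q, z_q)$ are pairwise distinct $M$-edges, and the candidate cover $\{s_1, w_1, \dots, w_q\}$ together with one endpoint from each of the remaining $|M| - 1 - q$ edges of $M$ covers every edge of $F$ using exactly $1 + q + (|M| - 1 - q) = |M|$ vertices.

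The main obstacle is precisely this asymmetric subcase: a naive attempt such as $\{s_1, s_2\}$ plus one endpoint per remaining $M$-edge would only cover one $M$-edge for free (the one at $s_1$) and cost $|M| + 1$ overall. One must therefore \emph{drop} $s_2$ from the cover and absorb its role through all of the right pendents $w_1, \dots, w_q$, and the crux of the argument is the triangle-free structural observation above, which guarantees that none of the $M$-edges covered by these pendents collapse onto one another or onto $(s_1, a)$, so the final count lands at exactly $|M|$.
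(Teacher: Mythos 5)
Your proof is correct, but it follows a genuinely different route from the paper's. The paper picks the endpoint $u$ of the bridge edge that is shared with some matching edge $e \in M$ (such an endpoint exists by maximality of $M$), places $u$ in the cover, deletes all edges incident on $u$, argues that the residual graph $G'$ has maximum matching $M' = M \setminus \{e\}$ of size $|M|-1$ and second maximum matching $L'$ of size exactly one (since removing $M'$ from $G'$ leaves a star), and then invokes the generic bound of Lemma~\ref{lemma:general_vertex_cover} to cover $G'$ with $|M'|+|L'|-1 = |M|-1$ vertices, for a total of $|M|$. You instead build the cover from scratch: you split on whether both bridge centers $s_1,s_2$ are $M$-saturated (in which case $\{s_1,s_2\}$ already absorbs two $M$-edges) or only one is, and in the asymmetric case you use triangle-freeness to show that the pendent vertices of the unsaturated center are saturated by $q$ pairwise distinct $M$-edges, none of which coincides with the edge at $s_1$, so that $\{s_1, w_1,\dots,w_q\}$ plus one endpoint per leftover $M$-edge lands exactly at $|M|$. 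Your argument is self-contained and avoids re-entering the K\H{o}nig-type machinery of Lemma~\ref{lemma:general_vertex_cover}; it also makes explicit where triangle-freeness is used, and it sidesteps the slightly delicate step in the paper's proof where one must certify that the residual graph's maximum matching really drops to $|M|-1$. The paper's route is shorter on the page because the heavy lifting was already done in the earlier lemma. Both yield the bound needed downstream, namely a vertex cover of $F$ of size at most $|M|$.
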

\begin{proof}
Let $b \equiv (u,v)$ be the bridge edge of $L_{p,q}$. 
Suppose $b$ be incident on an edge $e \in M$. 
Without loss of generality, we can assume that $u$ is the common endpoint of $e$ and $b$. 
Let us pick the vertex $u$ in the vertex cover and remove the edges covered by it. 
Let $G'$ denote the resulting graph. 
Further, let $M'$ denote a maximum matching of $G'$.
Now, we claim that $|M'|$ = $|M|-1$. 
For the sake of contradiction, assume that $|M'| \geq |M|$. 
Then the edge $b$ and matching set $M'$ would together form a matching of size $|M|+1$ and this would contradict that $F$ has a maximum matching $M$ of size $|M|$. 
Now, suppose we choose $M' \equiv M \setminus \{e\}$ as the maximum matching of $G'$ and let $L'$ be the second maximum matching after $M'$.
If we remove the edges of $M'$ from $G'$, the remaining graph would be a star graph. 
Therefore, the size of second maximum matching $L'$ is exactly one.
Now, using Lemma~\ref{lemma:general_vertex_cover}, we can cover $G'$ using $|M'| + |L'| - 1$ vertices. 
Thus the vertex cover (including the vertex $u$) of the entire graph $F$ has a size at most $|M'| + |L'| = |M|$. 
This proves the lemma.
\end{proof}

\begin{lemma}[Extra Cost] \label{lemma:case3_2}
If $F'$ is a bridge graph $L_{p,q}$ for some $p,q \geq 1$, then the extra cost of $F$ is at least $ \left(\sqrt{2}-1 \right) \cdot |M| - 0.28$.
\end{lemma}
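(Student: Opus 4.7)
My plan is to mirror the matching-swap strategy of Lemma~\ref{lemma:case2_2}: absorb the bridge edge $b=(s_1,s_2)$ of $F'=L_{p,q}$ into a new matching, then decompose $F$ into pieces whose $1$-median costs are controlled by the earlier lemmas. Since $M$ is maximum and $b\in F\setminus M$, at least one of $s_1,s_2$ must be matched in $M$ (otherwise $b$ itself could be added to $M$), so the argument splits into two cases: both endpoints of $b$ matched in $M$ (the main case), or only one matched.

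In the main case, pick $\ell_1=(s_1,x),\ \ell_2=(s_2,y)\in M$; triangle-freeness of $F$ forces $x,y\notin V(F')$ (if, e.g., $x=w_j$, then $s_1,s_2,w_j$ would form a triangle). Define the swap
\[
M'\;:=\;(M\setminus\{\ell_1,\ell_2\})\cup\{b\},\qquad L'\;:=\;F\setminus M'\;=\;\{\ell_1,\ell_2\}\cup(F'\setminus\{b\}).
\]
Then $M'$ is a matching of $F$ of size $|M|-1$, and because $\ell_1$ extends the $s_1$-side star to $S_{p+1}$ (with $x$ as a new pendant) and $\ell_2$ extends the $s_2$-side star to $S_{q+1}$, we have $L'=S_{p+1}\sqcup S_{q+1}$, a vertex-disjoint union of two stars. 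Apply Lemma~\ref{lemma:decomposition}: $\Phi^*(F)\ge\Phi^*(G_{M'})+\Phi^*(L')$. For the matching piece, Corollary~\ref{corollary:cost_non_star} gives $\Phi^*(G_{M'})=\sqrt 2\sqrt{(|M|-1)(|M|-2)}$; for the stars piece, $L'$ is disconnected and therefore non-star and non-bridge, so Lemma~\ref{lemma:ultra_decompose} gives $\Phi^*(L')\ge|L'|=p+q+2$. Applying Lemma~\ref{lemma:basic_bound} at an appropriate threshold to replace $\sqrt{(|M|-1)(|M|-2)}$ by its linear lower bound and $\sqrt{|F|(|F|-1)}$ by its linear upper bound, and substituting $|F|=|M|+p+q+1$, reduces the desired inequality to an algebraic check.

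The complementary case (only one of $s_1,s_2$, say $s_1$, is matched in $M$) is handled analogously: maximality of $M$ forces every pendant $w_j$ of the star at the unmatched endpoint $s_2$ to be matched in $M$ (otherwise $(s_2,w_j)\in F'$ could be added to $M$), and a symmetric swap that absorbs one such matched pendant edge into the $F'$-side yields a subgraph which is no longer a bridge graph (since the $s_2$-side acquires a tail), so Lemma~\ref{lemma:ultra_decompose} once again applies.

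The principal technical difficulty is the tightness of the final arithmetic in the regime of small $|M|$ (specifically $|M|=3$) combined with small $p,q$, where the generic bound $\Phi^*(L')\ge|L'|$ is barely sufficient. Closing this gap requires using the full strength of the ultra-safe pair decomposition on $L'$---together with the improved residual bounds from Lemma~\ref{lemma:cost_3P2} and Lemma~\ref{lemma:cost_An} on the fundamental non-star graphs $3$-$P_2$ and $A_n$ that arise when $p\ne q$, and a direct geometric analysis (via Fact~\ref{fact:isometric} and the symmetry argument analogous to the proofs of Lemma~\ref{lemma:cost_An} and Lemma~\ref{lemma:cost_Ln}) of the two-cluster configuration when $p=q$---to supply the additional slack that upgrades the constant from the loose $-3\sqrt{2}+2\sqrt{3}+1/2\approx -0.279$ obtained by the direct decomposition to the claimed $-0.28$.
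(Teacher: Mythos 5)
Your swap-based decomposition is a genuinely different route from the paper's, which simply splits $F$ into $G_M$ and $F'$ and applies Lemma~\ref{lemma:decompose_lemma} to the bridge graph $F'$ (a bridge graph is non-star, so $\Phi^{*}(F')\geq |F'|-0.342$); no absorption of the bridge edge into the matching is needed there. Unfortunately your version loses ground quantitatively, and the loss is fatal. In your main case the decomposition yields $\Phi^{*}(F)\geq \sqrt{2}\sqrt{(|M|-1)(|M|-2)}+(p+q+2)$, and in the worst case $|M|=3$ the first term equals exactly $2=|M'|$, so the whole bound collapses to $|F|$, i.e.\ $\delta(F)\geq 1/2$, which is $(\sqrt{2}-1)|M|-0.74$. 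This is weaker not only than the stated $-0.28$ but also than the $-0.63$ that the paper's proof actually derives and that the subsequent corollary consumes (the $-0.28$ in the lemma statement is inconsistent with the paper's own proof; the corollary substitutes $0.63$). Concretely, for $|M|=3$, $p=q=1$ you obtain $\delta(F)\geq 0.5$ while even the $-0.63$ version requires $\delta(F)\geq (\sqrt{2}-1)\cdot 3-0.63\approx 0.613$. The reason is that the swap discards the surplus $\sqrt{2}\sqrt{|M|(|M|-1)}-|M|>0$ carried by the full matching $M$ of size $\geq 3$; the paper's bound lives entirely on that surplus.

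Two further problems. First, your claimed intermediate constant $-3\sqrt{2}+2\sqrt{3}+1/2\approx-0.279$ is an arithmetic slip: the term $2\sqrt{3}=\sqrt{2}\sqrt{6}$ is the cost of a matching of size three, i.e.\ of the \emph{unswapped} $G_M$, not of your $G_{M'}$; and in any case ``upgrading'' $-0.279$ to $-0.28$ is backwards, since a bound with constant $-0.279$ already implies one with constant $-0.28$. Second, the additional slack you invoke to close the gap (refined residual bounds from Lemma~\ref{lemma:cost_3P2} and Lemma~\ref{lemma:cost_An}, and a direct geometric analysis of the two-star configuration when $p=q$) is only gestured at, and it is precisely in the tight regime $|M|=3$, $p=q=1$ that it would have to produce on the order of $0.1$ to $0.46$ of extra cost. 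Without carrying out that computation the argument is incomplete; with the paper's direct decomposition of $F$ into $G_M$ and $F'$ none of this is necessary.
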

\begin{proof}
We decompose $F$ into two subgraphs: $G_{M}$ and $F'$. It gives the following bound on the optimal cost of $F$.
\begin{align*}
    \Phi^{*}(F) &\geq \Phi^{*}(G_{M}) + \Phi^{*}(F') \numberthis \label{eqn:VC_3}
\end{align*}

\noindent We already know the bounds on the optimal costs of $G_{M}$ and $F'$. That is,

\begin{itemize}
    \item 
$
\Phi^{*}(G_{M}) \sinq{\textrm{Corollary~\ref{corollary:cost_non_star}}}{\geq} \sqrt{2} \cdot \st[|M|]
\sinq{\textrm{(Lemma~\ref{lemma:basic_bound}, $|M| \geq 3$)}}{\geq} \sqrt{2} \cdot \left( |M| - (3 - \sqrt{6}) \right)
\geq \sqrt{2} \cdot |M| - 0.78.
$

\item 
$
\Phi^{*}(F') \sinq{\textrm{Lemma~\ref{lemma:decompose_lemma}}}{\geq} |F'| - 0.342.
$
\end{itemize}

\noindent We substitute the above values in Equation~(\ref{eqn:VC_3}). It gives the following inequality:
\begin{align*}
    \Phi^{*}(F) &\geq |F'| + |M| + \left( \sqrt{2}-1 \right) |M| -1.122 \\
    &= |F| + \left(\sqrt{2} - 1\right) |M| -1.122  &&\\
    &> \sqrt{|F|(|F|-1)} + 0.5 + \left(\sqrt{2} - 1\right) |M| -1.122 && \textrm{(using Lemma~\ref{lemma:basic_bound})} \\
    &> \sqrt{|F|(|F|-1)} + \left(\sqrt{2} - 1\right) |M| -0.63 &&
\end{align*}
This completes the proof.
\end{proof}

\begin{corollary}
If $F'$ is a bridge graph $L_{p,q}$ for some $p,q \geq 1$, then $F$ has a vertex cover of size at most $1.53 + \del$.
\end{corollary}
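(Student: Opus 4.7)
The plan is to follow the template used in the two preceding corollaries of this subsection (the $|L|=0$ and $|L|=1$ cases), where the two key lemmas of the corresponding case are combined to eliminate $|M|$ from the picture. Both ingredients are already in place: Lemma~\ref{lemma:case3_1} provides the vertex-cover bound $|VC(F)| \leq |M|$, and Lemma~\ref{lemma:case3_2} provides the extra-cost bound $\delta(F) \geq (\sqrt{2}-1)\,|M| - 0.28$.

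I would then proceed purely by algebraic manipulation. Multiplying the extra-cost bound through by $(\sqrt{2}+1)$ and using the identity $(\sqrt{2}+1)(\sqrt{2}-1)=1$ gives
\[
(\sqrt{2}+1)\,\delta(F) \;\geq\; |M| \;-\; 0.28\,(\sqrt{2}+1).
\]
Adding $1.53$ to both sides, and observing that $0.28\,(\sqrt{2}+1) \approx 0.676 < 1.53$, we obtain
\[
1.53 + (\sqrt{2}+1)\,\delta(F) \;\geq\; |M| + \bigl(1.53 - 0.28\,(\sqrt{2}+1)\bigr) \;\geq\; |M| \;\geq\; |VC(F)|,
\]
where the last inequality is exactly Lemma~\ref{lemma:case3_1}. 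This gives the desired bound.

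There is no genuine obstacle here: the real work was done in proving the two ingredient lemmas, namely the structural argument (picking the vertex $u$ shared by the bridge edge and a matching edge, and invoking Lemma~\ref{lemma:general_vertex_cover} on the residual graph) that yields $|VC(F)| \leq |M|$, and the cost decomposition of $F$ into $G_M$ and $F'$ that yields the lower bound on $\delta(F)$. The only care required in the corollary itself is numeric: the additive constant $1.53$ must exceed $0.28\,(\sqrt{2}+1)$ for the arithmetic to go through, which it does with a comfortable slack; this constant is also tight enough to feed cleanly into the aggregate soundness computation of Section~\ref{section:soundness}.
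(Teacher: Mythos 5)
Your proof is correct and follows exactly the paper's route: combine the vertex-cover bound $|VC(F)|\leq |M|$ from Lemma~\ref{lemma:case3_1} with the extra-cost lower bound from Lemma~\ref{lemma:case3_2}, multiply the latter by $(\sqrt{2}+1)$ so that $(\sqrt{2}+1)(\sqrt{2}-1)=1$ recovers $|M|$, and check the additive constant. One caveat on the numerics: the constant $0.28$ you take from the \emph{statement} of Lemma~\ref{lemma:case3_2} is inconsistent with that lemma's own proof, which only establishes $\delta(F)\geq(\sqrt{2}-1)|M|-0.63$ (the value the paper's verification of this corollary actually uses), so the true margin is $1.53-(\sqrt{2}+1)(0.63)\approx 0.009$ rather than the comfortable slack of roughly $0.85$ that you describe --- the conclusion still holds, but only barely.
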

\begin{proof}
The proof follows from the following sequence of inequalities:\\
$
     1.53 + \del \sinq{\textrm{Lemma~\ref{lemma:case3_2}}}{\geq} |M| + 1.53 - \left(\sqrt{2}+1 \right) (0.63) > |M| \sinq{\textrm{Lemma~\ref{lemma:case3_1}}}{=} |VC(F)|.
$
\end{proof}

\subsubsection{Case: $|L| = 2$ and $\mathbf{F'}$ is Non-Bridge Graph}

\begin{lemma}[Vertex Cover] \label{lemma:case4_1}
If $|L|=2$ and $F'$ is a non-bridge graph, then $F$ has a vertex cover of size at most $|M|+1$.
\end{lemma}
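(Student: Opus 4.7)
My plan is to deduce this lemma as an immediate specialization of Lemma~\ref{lemma:general_vertex_cover}. That lemma already shows that any non-star graph $F$ admits a vertex cover of size at most $|M| + |L| - 1$, where $M$ is a maximum matching of $F$ and $L$ is the second maximum matching (a maximum matching of $F \setminus M$). Substituting the hypothesis $|L| = 2$ yields the bound $|M| + 1$ directly. The non-star assumption required to invoke Lemma~\ref{lemma:general_vertex_cover} is automatic here, since we are in Section~\ref{section:VC_matching_3} under the standing assumption $|M| \geq 3$, which precludes $F$ from being a star.

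The only thing worth commenting on is why we do not try to prove a stronger bound (i.e., $|M|$ rather than $|M|+1$), as was done in the bridge-graph case (Lemma~\ref{lemma:case3_1}). In that earlier case, the bridge edge $b = (u,v)$ of the subgraph $F'$ provided crucial extra structure: picking the endpoint of $b$ that also touches a matching edge $e \in M$ and deleting the covered edges reduces the residual problem to one whose second maximum matching has size exactly one, saving a vertex. In the present setting, where $F'$ is explicitly assumed to be a non-bridge graph, no analogous structural handle exists, so the generic bound from Lemma~\ref{lemma:general_vertex_cover} is the best we can extract by this route.

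Accordingly, I do not foresee any genuine obstacle in the proof itself. The non-bridge hypothesis on $F'$ is not used to bound the vertex cover here; rather, it will be exploited in the companion extra-cost lemma that follows (parallel to how Lemma~\ref{lemma:case3_2} complements Lemma~\ref{lemma:case3_1}), where having $F'$ non-bridge lets one decompose and invoke Lemma~\ref{lemma:ultra_decompose} to obtain a sharper lower bound on $\Phi^{*}(F')$. The proof of the present lemma therefore reduces to a one-line citation of Lemma~\ref{lemma:general_vertex_cover} with $|L|$ set to $2$.
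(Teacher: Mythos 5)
Your proof is correct and matches the paper's: the paper likewise derives this lemma as a one-line consequence of Lemma~\ref{lemma:general_vertex_cover} with $|L|=2$. Your additional remarks on why the non-bridge hypothesis is only needed for the companion extra-cost bound are accurate but not part of the paper's (very short) argument.
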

\begin{proof}
The proof simply follows from Lemma~\ref{lemma:general_vertex_cover} for $|L| = 2$.
\end{proof}

\begin{lemma}[Extra Cost] \label{lemma:case4_2}
If $|L|=2$ and $F'$ is a non-bridge graph, the extra cost of $F$ is at least $ \left(\sqrt{2}-1 \right) \cdot |M| + 0.5$
\end{lemma}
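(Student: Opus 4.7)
My plan is to follow the same decomposition-based template that the paper has already used in each of the earlier sub-cases of this subsection. First, I would decompose the edge set of $F$ into $E(G_M)$ and $E(F')$ and apply Lemma~\ref{lemma:decomposition} to obtain $\Phi^{*}(F)\geq \Phi^{*}(G_M)+\Phi^{*}(F')$.

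To handle the matching piece, I would invoke Corollary~\ref{corollary:cost_non_star} to write $\Phi^{*}(G_M)=\sqrt{2}\,\sqrt{|M|(|M|-1)}$, and then apply Lemma~\ref{lemma:basic_bound} with $t=3$ (allowed since $|M|\geq 3$) to get $\Phi^{*}(G_M)\geq \sqrt{2}\bigl(|M|-(3-\sqrt{6})\bigr)\geq \sqrt{2}\,|M|-0.78$. For the non-matching piece I would invoke Lemma~\ref{lemma:ultra_decompose} to obtain $\Phi^{*}(F')\geq |F'|$. This is the one place where the hypotheses of the present sub-case enter: the condition $|L|=2$ forces $F'$ to be non-star, and the hypothesis of the lemma says $F'$ is non-bridge, so $F'$ is exactly the kind of graph that Lemma~\ref{lemma:ultra_decompose} applies to (this is precisely what distinguishes this sub-case from the sister sub-case treated in Lemma~\ref{lemma:case3_2}).

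Combining the two bounds and using $|F|=|M|+|F'|$ yields $\Phi^{*}(F)\geq |F|+(\sqrt{2}-1)|M|-0.78$. Applying Lemma~\ref{lemma:basic_bound} one more time (with $t=|F|$) to pass from $|F|$ to $\sqrt{|F|(|F|-1)}+\tfrac{1}{2}$ gives a bound of the form $\Phi^{*}(F)\geq \sqrt{|F|(|F|-1)}+(\sqrt{2}-1)|M|+C$, which rearranges to the lower bound on $\delta(F)$ claimed in the lemma (subject to calibrating the explicit constant $C$ from the inequalities above).

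The only genuinely non-trivial point in this outline is checking that $F'$ really meets the hypotheses of Lemma~\ref{lemma:ultra_decompose}, and this is immediate: matching size two rules out $F'$ being a star, and the non-bridge assumption is given. Past that, everything is a mechanical cascade of the tools already established earlier in Section~\ref{section:vertex_cover}, so I do not foresee any further obstacle.
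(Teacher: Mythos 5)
Your proposal is correct and follows exactly the same route as the paper's own proof: decompose $F$ into $G_M$ and $F'$, bound $\Phi^{*}(G_M)\geq\sqrt{2}\,|M|-0.78$ via Corollary~\ref{corollary:cost_non_star} and Lemma~\ref{lemma:basic_bound}, bound $\Phi^{*}(F')\geq|F'|$ via Lemma~\ref{lemma:ultra_decompose} (justified, as you note, because $|L|=2$ makes $F'$ non-star and the hypothesis makes it non-bridge), and finish with one more application of Lemma~\ref{lemma:basic_bound}. When you calibrate your constant $C$ you will get $\delta(F)\geq(\sqrt{2}-1)|M|-0.28$, which is what the paper's proof and the subsequent corollary actually use; the ``$+0.5$'' in the lemma statement appears to be a typo.
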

\begin{proof}
We decompose $F$ into two subgraphs: $G_{M}$ and $F'$. It gives the following bound on the optimal cost of $F$.
\begin{align*}
    \Phi^{*}(F) &\geq \Phi^{*}(G_{M}) + \Phi^{*}(F') \numberthis \label{eqn:VC_4}
\end{align*}

\noindent We already know the bounds on the optimal costs of $G_{M}$ and $F'$. That is,

\begin{itemize}
    \item 
$
\Phi^{*}(G_{M}) \sinq{\textrm{ Corollary~\ref{corollary:cost_non_star}}}{\geq} \sqrt{2} \cdot \st[|M|] \sinq{\textrm{(Lemma~\ref{lemma:basic_bound}, $|M| \geq 3$)}}{\geq} \sqrt{2} \cdot \left( |M| - (3 - \sqrt{6}) \right) \geq \sqrt{2} \cdot |M| - 0.78.
$

\item 
$
\Phi^{*}(F') \sinq{\textrm{ Lemma~\ref{lemma:ultra_decompose}}}{\geq} |F'|.
$
\end{itemize}

\noindent We substitute the above values in Equation~(\ref{eqn:VC_4}). It gives the following inequality:
\begin{align*}
    \Phi^{*}(F) &\geq |F'| + |M| + \left( \sqrt{2}-1 \right) |M| - 0.78 \\
    &= |F| + \left(\sqrt{2} - 1\right) |M| - 0.78 && \left( \because |F| = |F'| + |M| \right)\\
    &> \sqrt{|F|(|F|-1)} + 0.5 + \left(\sqrt{2} - 1\right) |M| -0.78 && \textrm{(using Lemma~\ref{lemma:basic_bound})} \\
    &= \sqrt{|F|(|F|-1)} + \left(\sqrt{2} - 1\right) |M| -0.28 &&
\end{align*}
This completes the proof.
\end{proof}

\begin{corollary}
If $F'$ is a non-star non-bridge graph, then $F$ has a vertex cover of size at most $1.68 + \del$.
\end{corollary}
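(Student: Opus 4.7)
The plan is to mirror verbatim the structure of the three preceding corollaries in this subsection. Each of them closes by stacking two already-proved ingredients: an upper bound on $|VC(F)|$ linear in $|M|$ (and possibly $|L|$), and a lower bound on $\delta(F)$ linear in $|M|$ (possibly $|L|$); then multiplying the latter by $(\sqrt{2}+1)$ to convert it into a bound on $(\sqrt{2}+1)\delta(F)$, and finally verifying that the additive constant $1.68$ is large enough to absorb the slack.

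Concretely, I would invoke Lemma~\ref{lemma:case4_1} to get $|VC(F)| \leq |M| + 1$, and Lemma~\ref{lemma:case4_2} to get $\delta(F) \geq (\sqrt{2}-1)|M| - 0.28$ (this is the bound actually derived in the proof of that lemma). Multiplying through by $(\sqrt{2}+1)$ gives $(\sqrt{2}+1)\delta(F) \geq |M| - (\sqrt{2}+1)\cdot 0.28$, and since $(\sqrt{2}+1)\cdot 0.28 < 0.68$, we obtain
\[
1.68 + (\sqrt{2}+1)\delta(F) \;>\; 1.68 + |M| - 0.68 \;=\; |M| + 1 \;\geq\; |VC(F)|,
\]
which is exactly what is claimed. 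The whole argument is one displayed chain of inequalities, written in the same \texttt{sinq}-style as the preceding corollaries.

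There is no real obstacle here: all the combinatorial work has been done in Lemmas~\ref{lemma:case4_1} and~\ref{lemma:case4_2}, and the only thing to check is the numerical compatibility of the constant $1.68$ with the slack constant $0.28$ from the extra-cost lemma. If anything, the mildly delicate point is simply being careful that the lemma is applied in the correct regime ($|M| \geq 3$, $|L| = 2$, $F'$ non-bridge), so that both prerequisites are legitimately available; once that is noted, the corollary is immediate.
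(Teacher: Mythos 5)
Your proposal is correct and is essentially identical to the paper's own proof: the paper also combines Lemma~\ref{lemma:case4_1} ($|VC(F)|\leq |M|+1$) with the bound $\delta(F)\geq(\sqrt{2}-1)|M|-0.28$ actually derived in the proof of Lemma~\ref{lemma:case4_2}, and checks $(\sqrt{2}+1)(0.28)<0.68$. You were also right to flag that the relevant quantity is the $-0.28$ constant from that lemma's derivation rather than the constant appearing in its statement.
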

\begin{proof}
The proof follows from the following sequence of inequalities:\\
$
     1.68 + \del \sinq{\textrm{ Lemma~\ref{lemma:case4_2}}}{\geq} |M| + 1.68 - \left(\sqrt{2}+1 \right) (0.28)
     > |M| + 1
     \sinq{\textrm{ Lemma~\ref{lemma:case4_1}}}{=} |VC(F)|.
$
\end{proof}


\subsubsection{Case: $\mathbf{|L|} \geq 3$ and $\mathbf{F''}$ is Bridge Graph}

Since $|L| \geq 3$, Lemma~\ref{lemma:general_vertex_cover} gives a vertex cover of size at most $|M| + |L| -1$, which is at least $|M| + 2$. 
However, we can obtain a stronger bound than this if $F''$ is a bridge graph as shown in the following lemma.

\begin{lemma}\label{lemma:case5_1}
If $|L| \geq 3$ and $F''$ is a bridge graph $L_{p,q}$ for some $p,q \geq 1$, then $F$ has a vertex cover of size at most $|M| + 1$.
\end{lemma}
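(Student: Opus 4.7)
My plan is to directly construct a vertex cover of $F$ of size at most $|M|+1$ by placing both endpoints of the bridge edge of $F''$ into the cover and then invoking König's theorem on what remains. Let $b \equiv (u,v)$ denote the bridge edge of $F'' = L_{p,q}$. I put both $u$ and $v$ into the vertex cover and let $G'' := F - \{u,v\}$ denote the subgraph of $F$ obtained after deleting every edge incident to $u$ or $v$.

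The key structural observation is that every edge of $F''$ is incident to $u$ or to $v$: the $p$ edges of the star $S_p$ all touch its center $u$, the $q$ edges of the star $S_q$ all touch its center $v$, and the bridge itself is $(u,v)$. Hence $G''$ contains no edges from $F''$, so $G'' \subseteq G_{ML}$, where $G_{ML}$ is the subgraph of $F$ spanned by $M \cup L$. Since $G_{ML}$ is a union of two matchings, every vertex in $G_{ML}$ has degree at most $2$; in particular any cycle of $G_{ML}$ must alternate between $M$- and $L$-edges and hence has even length. Therefore $G_{ML}$, and hence its subgraph $G''$, is bipartite.

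The crux of the argument is to bound the maximum matching of $G''$ by $|M|-1$. Suppose, for contradiction, that $G''$ admits a matching $\hat M$ of size at least $|M|$. Since $\hat M \subseteq G'' = F - \{u,v\}$, no edge of $\hat M$ touches $u$ or $v$, and so $\hat M \cup \{b\}$ is a matching in $F$ of size at least $|M|+1$. This contradicts the maximality of $M$ in $F$. Thus the maximum matching of $G''$ has size at most $|M|-1$, and by König's theorem applied to the bipartite graph $G''$ the minimum vertex cover of $G''$ also has size at most $|M|-1$. Combining with the two vertices $u,v$ already placed in the cover yields a vertex cover of $F$ of size at most $|M|+1$, as claimed.

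The main difficulty (and the reason this case must be singled out rather than handled through the general Lemma~\ref{lemma:case0_2}) is the third step: it is precisely the existence of the bridge edge $b$ as an $F''$-edge spanning $\{u,v\}$ and absent from both $M$ and $L$ that furnishes the contradiction to maximality of $M$ once a large matching in $G''$ is assumed. Note also that the proof does not actually require $|L| \geq 3$; that hypothesis is only present so that this case is disjoint from the bridge-$F'$ case treated in Lemma~\ref{lemma:case3_1}, where a stronger bound of $|M|$ (rather than $|M|+1$) is achievable by picking only one of the two bridge endpoints.
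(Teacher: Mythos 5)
Your proof is correct and follows essentially the same route as the paper's: place both endpoints of the bridge edge in the cover, observe that the residual graph is a union of (parts of) the two matchings $M$ and $L$ and hence bipartite, bound its maximum matching by $|M|-1$ via the contradiction with $b$, and finish with K\H{o}nig's theorem. Your side remarks (that the bipartiteness comes from $G''\subseteq G_{ML}$, and that $|L|\geq 3$ is only used to separate this case from Lemma~\ref{lemma:case3_1}) are accurate and consistent with the paper.
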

\begin{proof}
We will incrementally construct a vertex cover $S$ of size $|M| + 1$. 
Initially, $S$ is empty, i.e., $S = \emptyset$. 
Let $b \equiv (u,v)$ be the bridge edge of $L_{p,q}$. 
We will add both vertices $u$ and $v$ to the set $S$, so that it covers all edges in $F''$.
Now, we remove all the edges in the graph that are covered by $u$ and $v$. Let $M'$ and $L'$ be the remaining sets corresponding to $M$ and $L$, respectively. Let $G'$ be the graph spanned by the edge set $M' \cup L'$. 
Now, observe that $G'$ does not contain any odd cycles; otherwise there would be two adjacent edges in the cycle that would belong to the same set $M'$ or $L'$. 
Moreover, $G'$ has a maximum matching of size at most $|M|-1$. 
This is because, the edge $b$ is vertex-disjoint from every edge of $G'$ and if $G'$ has a matching of size at least $|M|$, then this matching together with $b$ form a matching of size $|M|+1$. 
This contradicts the fact that $F$ has the maximum matching of size $|M|$.
Since $G'$ is bipartite and has a matching of size at most $|M|-1$, it admits a vertex cover of size $|M|-1$ (using the \emph{Kőnig's Theorem}~\cite{fermat:book_graph_theory_applications_bondy}). Thus, the vertex cover (including the vertices $u$ and $v$) of the entire graph has a size at most $|M|+1$.
This completes the proof of the lemma.
\end{proof}

\begin{lemma}\label{lemma:case5_2}
If $|L| \geq 3$ and $F''$ is a bridge graph $L_{p,q}$ for some $p,q \geq 1$, then the extra cost of $F$ is at least $(\sqrt{2}-1) \cdot \left( |M| + |L|\right) - 1.41$.
\end{lemma}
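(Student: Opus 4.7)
The plan is to follow exactly the template of Lemma~\ref{lemma:case0_2}, decomposing $F$ into the three edge-disjoint subgraphs $G_M$, $G_L$, and $F''$ and applying the Decomposition Lemma (Lemma~\ref{lemma:decomposition}):
\[
\Phi^{*}(F) \;\geq\; \Phi^{*}(G_M) + \Phi^{*}(G_L) + \Phi^{*}(F'').
\]
Since $M$ and $L$ are matchings, $G_M$ and $G_L$ consist of pairwise vertex-disjoint edges, so by Corollary~\ref{corollary:cost_non_star} combined with Lemma~\ref{lemma:basic_bound} (applied with $t=3$, using $|M|,|L|\geq 3$), we get $\Phi^{*}(G_M) \geq \sqrt{2}\,|M| - 0.78$ and $\Phi^{*}(G_L) \geq \sqrt{2}\,|L| - 0.78$, exactly as in the proof of Lemma~\ref{lemma:case0_2}.

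The only difference from Lemma~\ref{lemma:case0_2} is the bound used on $F''$. There, $F''$ was assumed to be non-star and non-bridge, so Lemma~\ref{lemma:ultra_decompose} gave the sharper bound $\Phi^{*}(F'') \geq |F''|$. Here $F''$ is a bridge graph $L_{p,q}$ with $p,q\geq 1$, which is still a non-star graph, so we can only invoke the weaker Lemma~\ref{lemma:decompose_lemma} to obtain $\Phi^{*}(F'') \geq |F''| - 0.342$. Substituting and using $|F| = |M| + |L| + |F''|$ gives
\[
\Phi^{*}(F) \;\geq\; |F| + (\sqrt{2}-1)\bigl(|M|+|L|\bigr) - 1.902.
\]
Finally, applying Lemma~\ref{lemma:basic_bound} in the form $|F| > \sqrt{|F|(|F|-1)} + 1/2$ converts this into a bound on the extra cost:
\[
\delta(F) \;=\; \Phi^{*}(F) - \sqrt{|F|(|F|-1)} \;>\; (\sqrt{2}-1)\bigl(|M|+|L|\bigr) - 1.402,
\]
which is at least $(\sqrt{2}-1)(|M|+|L|) - 1.41$, as desired.

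There is essentially no obstacle in this proof: the structure is an exact parallel of Lemma~\ref{lemma:case0_2}, and the only conceptual point to notice is that the bridge assumption on $F''$ forces us to drop from Lemma~\ref{lemma:ultra_decompose} to Lemma~\ref{lemma:decompose_lemma}, costing an additional $0.342$ in the additive slack (accounting for the change from $-1.06$ in Lemma~\ref{lemma:case0_2} to $-1.41$ here). All the arithmetic is routine, and the hypotheses $|M|,|L|\geq 3$ (needed to apply Lemma~\ref{lemma:basic_bound} with $t=3$ on both matchings) are exactly what is given.
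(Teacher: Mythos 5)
Your proof is correct and follows exactly the same route as the paper: the same decomposition of $F$ into $G_M$, $G_L$, and $F''$, the same bounds $\Phi^{*}(G_M) \geq \sqrt{2}|M| - 0.78$ and $\Phi^{*}(G_L) \geq \sqrt{2}|L| - 0.78$ via Corollary~\ref{corollary:cost_non_star} and Lemma~\ref{lemma:basic_bound}, and the same use of Lemma~\ref{lemma:decompose_lemma} (rather than Lemma~\ref{lemma:ultra_decompose}) for the bridge graph $F''$, yielding the identical constant $-1.402$. Your observation that the only change from Lemma~\ref{lemma:case0_2} is the extra $0.342$ slack from $F''$ being a bridge graph matches the paper's computation precisely.
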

\begin{proof}
We decompose $F$ into three subgraphs: $G_{M}$, $G_{L}$, and $F''$. Then, it gives the following bound on the optimal cost of $F$.
\begin{align*}
    \Phi^{*}(F) &\geq \Phi^{*}(G_{M}) + \Phi^{*}(G_{L}) + \Phi^{*}(F'')\numberthis \label{eqn:VC_5}
\end{align*}

\noindent We already know the bounds on the optimal costs of $G_{M}$, $G_{L}$ and $F''$. That is,

\begin{itemize}
    \item 
$
\Phi^{*}(G_{M}) \sinq{\textrm{ Corollary~\ref{corollary:cost_non_star}}}{\geq} \sqrt{2} \cdot \st[|M|]
\sinq{\textrm{(Lemma~\ref{lemma:basic_bound}, $|M| \geq 3$)}}{\geq} \sqrt{2} \cdot \left( |M| - (3 - \sqrt{6}) \right)
\geq \sqrt{2} \cdot |M| -0.78.
$

\item 
$
\Phi^{*}(G_{L}) \sinq{\textrm{(using Corollary~\ref{corollary:cost_non_star})}}{=} \sqrt{2} \cdot \st[|L|]
\sinq{\textrm{(Lemma~\ref{lemma:basic_bound}, $|L| \geq 3$)}}{\geq} \sqrt{2} \cdot \left( |L| - (3 - \sqrt{6}) \right)
\geq \sqrt{2} \cdot |L| -0.78.
$

\item 
$
\Phi^{*}(F'') \sinq{\textrm{ Lemma~\ref{lemma:decompose_lemma}}}{\geq} |F''| - 0.342.
$
\end{itemize}

\noindent We substitute the above values in Equation~(\ref{eqn:VC_5}). It gives the following inequality:
\begin{align*}
    \Phi^{*}(F) &\geq |M| + |L| + |F''| + (\sqrt{2}-1) \cdot \left( |M| + |L|\right) - 1.902&&\\
    &= |F| + (\sqrt{2}-1) \cdot \left( |M| + |L|\right) - 1.902 && \\
    &> \sqrt{|F|(|F|-1)} + 0.5 + (\sqrt{2}-1) \cdot \left( |M| + |L|\right) - 1.902 && \textrm{(using Lemma~\ref{lemma:basic_bound})}\\
    &> \sqrt{|F|(|F|-1)} + (\sqrt{2}-1) \cdot \left( |M| + |L|\right) - 1.402 &&
\end{align*}
This completes the proof.
\end{proof}

\begin{corollary}
If $|L| \geq 3$ and $F''$ is a bridge graph $L_{p,q}$ for some $p,q \geq 1$, then $F$ has a vertex cover of size at most $1.4 + \del$.
\end{corollary}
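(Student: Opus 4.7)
The plan is to mirror the four preceding corollaries in Section~\ref{section:VC_matching_3}: combine the vertex cover upper bound from Lemma~\ref{lemma:case5_1} with the extra-cost lower bound from Lemma~\ref{lemma:case5_2} so that the matching-size terms cancel. Since $(\sqrt{2}-1)(\sqrt{2}+1)=1$, multiplying the inequality of Lemma~\ref{lemma:case5_2} through by $(\sqrt{2}+1)$ converts the coefficient on $|M|+|L|$ into exactly $1$, after which I only need to absorb the leftover additive constant into the target $1.4$ and use the hypothesis $|L|\geq 3$ to dominate the $|M|+1$ upper bound from Lemma~\ref{lemma:case5_1}.

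Concretely, the one-line computation I have in mind is
\[
1.4 + \del \ \stackrel{\text{Lem.~\ref{lemma:case5_2}}}{\geq}\ |M|+|L|+1.4-(\sqrt{2}+1)(1.41)\ >\ |M|+1\ \stackrel{\text{Lem.~\ref{lemma:case5_1}}}{\geq}\ |VC(F)|,
\]
where the middle strict inequality uses $|L|\geq 3$ together with the numerical check $3+1.4-(\sqrt{2}+1)(1.41) > 1$. This is exactly the template used in all four earlier corollaries of this section, only with the constants appropriate to the present case.

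The only subtlety worth a second look is the numerical slack: with $|L|=3$ on the nose, the inequality $3+1.4\geq 1+(\sqrt{2}+1)(1.41)$ amounts to roughly $4.4$ versus $4.404$, which would be too tight if one used the rounded constant $1.41$ from the statement of Lemma~\ref{lemma:case5_2}. Fortunately the proof of that lemma actually establishes the sharper bound $\delta(F)\geq(\sqrt{2}-1)(|M|+|L|)-1.402$ (the $1.41$ being only a cosmetic rounding), and $(\sqrt{2}+1)\cdot 1.402 < 3.386$, which leaves comfortable room. There is no structural obstacle here — once the two preceding lemmas are in hand, the corollary collapses to a single line of algebra, and the ``main obstacle'' is merely verifying that the numerical constants line up, which they do.
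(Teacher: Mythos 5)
Your proof is correct and is essentially identical to the paper's own argument: the paper chains $1.4 + \del \geq |M|+|L|+1.4-(\sqrt{2}+1)(1.402) > |M|+1 \geq |VC(F)|$ using Lemmas~\ref{lemma:case5_2} and~\ref{lemma:case5_1} exactly as you do, and it too invokes the sharper constant $1.402$ established inside the proof of Lemma~\ref{lemma:case5_2} rather than the rounded $1.41$ in its statement. Your observation that the rounded constant would make the inequality fail at $|L|=3$ is a correct and worthwhile sanity check.
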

\begin{proof}
The proof follows from the following sequence of inequalities:\\
$
     1.4 + \del \sinq{\textrm{ Lemma~\ref{lemma:case5_2}}}{\geq} |M| + |L| + 1.4 - \left(\sqrt{2}+1 \right) (1.402)
     \sinq{\textrm{($|L| \geq 3$)}}{>} |M| + 1
     \sinq{\textrm{ Lemma~\ref{lemma:case5_1}}}{=} |VC(F)|. 
$
\end{proof}
\noindent
This completes the analysis for all graph instances.

\section{Bi-criteria Hardness of Approximation}
In the previous section, we showed that the $k$-median problem cannot be approximated to any factor smaller than $(1+\veps)$, where $\veps$ is some positive constant. 
The next step in the {\em beyond worst-case} discussion is to discuss bi-criteria approximation algorithms.
That is, suppose we allow the algorithm to choose more than $k$ centers.
Then does it produce a solution that is close to the optimal solution with respect to $k$ centers?
Since the algorithm is allowed to output more number of centers we can hope to get a better approximate solution.
An interesting question in this regard would be: \emph{Is there a PTAS (polynomial time approximation scheme) for the $k$-median/$k$-means problem when the algorithm is allowed to choose $\beta k$ centers for some constant $\beta>1$?} In other words, is there an $(1+\veps, \beta)$-approximation algorithm? Note that here we compare the cost of $\beta k$ centers with  the optimal cost with respect to $k$ centers. See Section~\ref{section:introduction} for the definition of $(\alpha,\beta)$ bi-criteria approximation algorithms.

In this section, we show that even with $\beta k$ centers,
the $k$-means/$k$-median problems cannot be approximated within any factor smaller than $(1+\veps')$, for some constant $\veps' > 0$.
The following theorem state this result formally.

\begin{theorem}[$k$-median]\label{theorem:bicriteria_kmedian}
For any constant $1 < \beta<1.015$, there exists a constant $\veps > 0$ such that  there is no $(1+\veps,\beta)$-approximation algorithm for the $k$-median problem assuming the Unique Games Conjecture. 
\end{theorem}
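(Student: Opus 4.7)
The approach is to reuse the same reduction from Section~\ref{section:kmedian_inapproximability} (vertex cover on bounded-degree triangle-free graphs to Euclidean $k$-median) and to show that the YES/NO gap persists when the clustering is allowed $\beta k$ centers, provided $\beta$ is sufficiently close to $1$. Completeness is immediate: if $G$ has a vertex cover of size $k$, the clustering constructed in Section~\ref{section:completeness} already uses exactly $k$ centers and achieves cost at most $m - k/2$, so this is also a valid $\beta k$-median solution.

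For soundness, suppose for contradiction that the optimal $\beta k$-median cost is at most $(1+\veps)(m - k/2)$. Since hard VC instances satisfy $k \geq m/(2\Delta)$ (as shown in the proof of Corollary~\ref{corollary:reduction_kmedian}), we can rewrite this bound as $m - k/2 + \tau k$ for some $\tau = O(\veps \Delta)$. Let $\mathcal{C}$ be an optimal $\beta k$-median clustering and classify its clusters into the same four categories as in Section~\ref{section:soundness}: $P_i$ (single edge), $S_i$ (star with $\geq 2$ edges), $W_i$ (non-star, max matching exactly $2$), and $Y_i$ (non-star, max matching $\geq 3$), now with $t_1 + t_2 + t_3 + t_4 = \beta k$.

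The vertex-cover computation from Section~\ref{section:soundness} then transfers essentially verbatim. Lemmas~\ref{lemma:non_star_vertex_cover_1} and~\ref{lemma:non_star_vertex_cover_2} depend only on local cluster structure. Lemma~\ref{lemma:single_star_vertex_cover} uses only $t_1' \leq t_1$, the extra-cost parameter, and the bound $|M_G| \leq k$ on the graph's maximum matching (not on the number of clusters), so it still yields the clean dichotomy: either the single-edge clusters are coverable with $\tfrac{2 t_1'}{3} + 8\tau k$ vertices, or $G$ already admits a vertex cover of size at most $(2 - 2\tau)k$, which directly contradicts Theorem~\ref{theorem:VC_UGC}. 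Running the same telescoping argument with $t_1 + t_2 + t_3 + t_4 = \beta k$ gives
\[
|VC(G)| \leq (1.88)\,t_1 + (1.25)\,t_2 + (1.87)\,t_3 + (1.95)\,t_4 + (\sqrt{2}+9)\,\tau k \leq 1.95\,\beta k + (\sqrt{2}+9)\,\tau k.
\]

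Since $1.95 \cdot 1.015 < 1.98$, for any constant $\beta < 1.015$ we may pick $\veps$ (and hence $\tau$) and $\veps_0 > 0$ small enough that $|VC(G)| < (2 - \veps_0)k$, contradicting Theorem~\ref{theorem:VC_UGC}. The main obstacle is bookkeeping rather than a new idea: one must verify that each cluster-level lemma transfers unchanged to the regime where the number of clusters is $\beta k$ instead of $k$, and in particular that the thresholds in Lemma~\ref{lemma:single_star_vertex_cover} (which involve quantities like $\delta k$ measuring the matching size of $G$, not the number of clusters) continue to give the dichotomy we need. The tightest constraint comes from the coefficient $1.95$ on $t_4$, which fixes the natural ceiling $\beta < 2/1.95 \approx 1.025$ and explains why the stated threshold of $1.015$ leaves a small amount of slack for $\tau$ and $\veps_0$.
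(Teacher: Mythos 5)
Your overall strategy is the same as the paper's: reuse the reduction, observe that completeness is unchanged, and rerun the soundness telescoping with $t_1+t_2+t_3+t_4 = \beta k$. Your structural claims are right --- Lemmas~\ref{lemma:non_star_vertex_cover_1}, \ref{lemma:non_star_vertex_cover_2} and \ref{lemma:single_star_vertex_cover} do transfer unchanged, for exactly the reasons you give. But there is a concrete error in the final calculation, and it sits precisely where the bicriteria setting differs from the plain one. The telescoping produces the term $(\sqrt{2}+1)\cdot\bigl(-k/2 + t_1 + (2-\sqrt{2})(t_2+t_3)+(3-\sqrt{6})t_4 + \tau k\bigr)$, where the $-k/2$ comes from the assumed cost bound $m-k/2+\tau k$ and hence involves $k$, not $\beta k$. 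In the original proof one absorbs $-k/2 = -\tfrac{1}{2}\sum_i t_i$ into the per-cluster coefficients; here $\sum_i t_i = \beta k$, so writing $-k/2 = \tfrac{(\beta-1)k}{2} - \tfrac{\beta k}{2}$ and absorbing only the $-\beta k/2$ part leaves a residual $(\sqrt{2}+1)\tfrac{(\beta-1)k}{2}$ that your displayed inequality omits. The correct conclusion of the telescoping is
\[
|VC(G)| \;\leq\; 1.95\,\beta k \;+\; \bigl(\sqrt{2}+1\bigr)\frac{(\beta-1)k}{2} \;+\; \bigl(\sqrt{2}+9\bigr)\tau k \;<\; (3.16)\,\beta k \;-\; (1.21)\,k \;+\; \bigl(\sqrt{2}+9\bigr)\tau k,
\]
not $1.95\,\beta k + (\sqrt{2}+9)\tau k$.

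Consequently your diagnosis of the threshold is wrong: the binding constraint is $3.157\,\beta - 1.207 < 2$, i.e.\ $\beta < 1.0158$, not a ``natural ceiling'' of $2/1.95 \approx 1.025$ with slack left over. The missing residual term is the entire reason the stated bound is $1.015$ rather than $1.025$, and $1.015$ is essentially tight for this argument. The good news is that the theorem as stated survives: for $\beta < 1.015$ one still has $3.16\,\beta - 1.21 < 2$, so once the residual term is restored the proof closes exactly as in the paper. But as written, the step ``running the same telescoping argument gives $|VC(G)|\leq 1.95\,\beta k + (\sqrt{2}+9)\tau k$'' does not follow, and anyone trying to push $\beta$ toward $1.025$ on the strength of your bound would fail.
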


\begin{theorem}[$k$-means]\label{theorem:bicriteria_kmeans}
For any constant $1 < \beta<1.28$, there exists a constant $\veps > 0$ such that  there is no $(1+\veps,\beta)$-approximation algorithm for the $k$-means problem assuming the Unique Games Conjecture. Moreover, the same result holds for any $1 < \beta< 1.1$ under the assumption that $\mathsf{P} \neq \mathsf{NP}$. 
\end{theorem}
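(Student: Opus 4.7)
The plan is to follow the gap-preserving reduction from Vertex Cover used by Awasthi et al.~\cite{hardness:acks15} (and reproduced in Section~\ref{subsection:kmeans_comparison}) and adapt the soundness analysis of Section~\ref{section:soundness} to the bi-criteria $k$-means setting. From a triangle-free bounded-degree graph $G$ on $m$ edges, construct the $k$-means instance $\mathcal{I}=(\X,k)$ exactly as before. For completeness, any vertex cover of $G$ of size $k$ yields $k$ star clusters whose $k$-means costs sum to $m-k$ (since a star with $r$ edges maps to a regular simplex of side $\sqrt{2}$ and has $k$-means cost $r-1$ by the simplex formula), and allowing $(\beta-1)k$ additional centers can only decrease the cost; hence the optimal $\beta k$-means cost in the Yes case is at most $m-k$.

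For soundness, I would assume for contradiction that an $(1+\veps,\beta)$-approximation algorithm exists and $G$ is a No instance, i.e., $G$ has no vertex cover of size smaller than $(2-\veps)k$ under UGC (Theorem~\ref{theorem:VC_UGC}) or $1.36\,k$ under $\mathsf{P}\neq \mathsf{NP}$ (Theorem~\ref{theorem:VC_PNP}). The algorithm then outputs a $\beta k$-means clustering $\mathcal{C}$ of $\X$ of cost at most $(1+\veps)(m-k)$. Classify each cluster of $\mathcal{C}$ by the subgraph it induces, mirroring Section~\ref{section:soundness}: single-edge clusters $P_i$, larger star clusters $S_i$, non-star clusters with maximum matching two, and non-star clusters with maximum matching at least three. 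For each non-star cluster $F$ define the $k$-means extra cost $\delta_2(F)$ as its optimal $k$-means cost minus the star value $|F|-1$; this quantity is nonnegative and controls the vertex-cover bound in analogy with $\delta(F)$ in the $k$-median argument.

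The core technical work is to prove $k$-means analogs of Lemmas~\ref{lemma:non_star_vertex_cover_1} and~\ref{lemma:non_star_vertex_cover_2}, bounding the vertex cover of any non-star cluster by a constant plus a constant multiple of $\delta_2(F)$. Following the template of Section~\ref{section:vertex_cover}, each non-star cluster decomposes into fundamental non-star subgraphs ($3$-$P_2$, $A_n$, $L_n$) whose exact $k$-means cost can be computed from the simplex formula, and the decomposition lemma (Lemma~\ref{lemma:decomposition}) lower-bounds the cost of the composite graph. The $k$-means penalty per non-star configuration is strictly larger than its $k$-median counterpart (for instance, $2$-$P_2$ has $k$-means extra cost $1$ versus $k$-median extra cost $\sqrt{2}-1$), so the resulting per-cluster constants come out more favorably, which is ultimately what permits $\beta$ as large as $1.28$ rather than the $1.015$ permitted in the $k$-median bi-criteria theorem.

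Combining these per-cluster vertex-cover bounds with the cost upper bound $(1+\veps)(m-k)$ and the matching-based inequality $k\geq m/(2\Delta)$ used in the proof of Corollary~\ref{corollary:reduction_kmedian}, a telescoping argument analogous to the one at the end of Section~\ref{section:soundness}, together with an incremental reshuffling argument in the spirit of Lemma~\ref{lemma:single_star_vertex_cover} for the single-edge clusters, produces a vertex cover of $G$ of size strictly less than $(2-\eta)k$ (respectively $1.36\,k$) whenever $\beta<1.28$ (respectively $\beta<1.1$) and $\veps$ is small enough, contradicting the hardness assumption. The main obstacle I anticipate is obtaining tight enough $k$-means extra-cost lower bounds on the more awkward fundamental configurations---$L_n$, $5$-cycles, and bridge graphs---where a closed-form $k$-means cost is tricky and an analog of the ``ultra-safe pair'' decomposition (Lemma~\ref{lemma:ultra_decompose}) must be re-derived with $k$-means-specific constants; weaker estimates would still yield hardness for some $\beta>1$ but would not push the threshold all the way up to $1.28$.
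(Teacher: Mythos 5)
Your proposal takes a genuinely different --- and much heavier --- route than the paper, and the difference matters. You propose to port the entire $k$-median machinery of Section~\ref{section:vertex_cover} to the $k$-means setting: re-deriving fundamental non-star graph costs, ultra-safe-pair decompositions, the single-edge-cluster reshuffling of Lemma~\ref{lemma:single_star_vertex_cover}, and a four-way cluster classification. None of that is needed. The paper's proof of Theorem~\ref{theorem:bicriteria_kmeans} only splits clusters into star versus non-star and then invokes Lemma 4.8 of Awasthi \emph{et al.}~\cite{hardness:acks15} directly: every non-star cluster $F$ with $r$ edges has optimal $1$-means cost at least $r-1+\delta(F)$ with $\delta(F)\geq 2/3$, \emph{and} contains an edge $(u,v)$ with $d_F(u)+d_F(v)\geq r+1-\delta(F)$. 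The second part immediately gives a vertex cover of $F$ of size at most $2+\delta(F)\leq 1+\tfrac{5}{2}\delta(F)$, uniformly over all non-star clusters, with no case analysis on matching size, bridges, or $C_5$'s. Summing, $|VC(G)|\leq \beta k+\tfrac{5}{2}\sum_i\delta(F_i)\leq \beta k+\tfrac{5}{2}(\beta-1)k+\tfrac{5}{2}\delta k$, which is below $(\lambda-\veps)k$ exactly when $\beta<\tfrac{2}{7}(\lambda+\tfrac{5}{2})$; plugging $\lambda=2$ (UGC) gives $\beta<9/7\approx 1.285$ and $\lambda=1.36$ ($\mathsf{P}\neq\mathsf{NP}$) gives $\beta<1.103$. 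The only refinement over the published Lemma 4.8 is using the bound $\delta(F)\geq 2/3$ that appears in its proof rather than the weaker $1/2$ in its statement.

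The concrete gap in your version is that you never actually arrive at the constant $1.28$: you assert that the $k$-means extra costs of the fundamental configurations "come out more favorably" and then concede that the awkward cases ($L_n$, $C_5$, bridge graphs) might only yield hardness for some unspecified $\beta>1$. Since the theorem's content \emph{is} the threshold $1.28$ (respectively $1.1$), a proof that does not pin down these constants does not establish the statement. The lesson from the paper is that the hard combinatorial work you anticipate is avoidable for $k$-means precisely because the $1$-means centroid formula already yields, via the existing Claim 4.3 and Lemma 4.8 of~\cite{hardness:acks15}, both a uniform extra-cost lower bound and a two-vertex near-cover of every non-star cluster; the elaborate decomposition apparatus is only needed for $k$-median, where no closed-form $1$-median cost exists.
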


\noindent
First, let us prove the bi-criteria inapproximability result corresponding to the $k$-median objective. 

\subsection{Bi-criteria Inapproximability: $k$-Median}\label{section:bicriteria_kmedian}

In this subsection, we give a proof of Theorem~\ref{theorem:bicriteria_kmedian}. 
Let us define a few notations. 
Suppose $\mathcal{I} = (\X,k)$ is some $k$-median instance. 
Then, $OPT(\X,k)$ denote the optimal $k$-median cost of $\X$. 
Similarly, $OPT(\X,\beta k)$ denote the optimal $\beta k$-median cost of $\X$ (or the optimal cost of $\X$ with $\beta k$ centers).
We use the same reduction as we used in the previous section for showing the hardness of approximation of the $k$-median problem. 
Based on the reduction, we establish the following theorem.

\begin{theorem}\label{theorem:bicriteria_reduction_kmedian}
There is an efficient reduction from \vc on bounded degree triangle-free graphs $G$ (with $m$ edges) to Euclidean $k$-median instances $\mathcal{I} = (\X,k)$ that satisfies the following properties:
\begin{enumerate}
    \item If $G$ has a vertex cover of size $k$, then $OPT(\X,k) \leq m-k/2$
    \item For any constant $1 < \beta < 1.015$, there exists constants $\veps, \delta > 0$ such that  if $G$ has no vertex cover of size $\leq (2 - \veps)\cdot k$, then  $OPT(\X,\beta k) \geq m-k/2+\delta k$.
\end{enumerate}
\end{theorem}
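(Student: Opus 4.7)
The plan is to mimic the proof of Theorem~\ref{theorem:reduction_kmedian} and carefully track the dependence on the number of clusters when it grows from $k$ to $\beta k$. For the completeness direction (Property~1), no new work is needed: the star-clustering construction of Section~\ref{section:completeness} produces a $k$-median solution of cost at most $m-k/2$ from a vertex cover of size $k$, and using $\beta k$ centers can only decrease the cost, so $OPT(\X,\beta k) \le OPT(\X,k) \le m-k/2$.

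For Property~2 I would argue the contrapositive: assuming that the optimal $\beta k$-median clustering $\mathcal{C}$ has cost at most $m - k/2 + \delta k$, I deduce that $G$ has a vertex cover of size at most $(2-\veps)k$. Classify the $\beta k$ clusters into the four categories used in Section~\ref{section:soundness} -- single-edge clusters $P_1,\dots,P_{t_1}$, multi-edge star clusters $S_1,\dots,S_{t_2}$, non-star clusters with maximum matching two $W_1,\dots,W_{t_3}$, and non-star clusters with maximum matching at least three $Y_1,\dots,Y_{t_4}$. The essential change is that $t_1+t_2+t_3+t_4 = \beta k$ rather than $k$. Lemmas~\ref{lemma:non_star_vertex_cover_1} and~\ref{lemma:non_star_vertex_cover_2} are cluster-local and carry over verbatim; Lemma~\ref{lemma:single_star_vertex_cover} applies exactly as before, since its ``fallback'' conclusion yields a vertex cover of $G$ of size at most $2k-2\delta k = (2-2\delta)k$, which is already $\le(2-\veps)k$ provided $\delta \ge \veps/2$. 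Otherwise, the single-edge clusters are covered with at most $\tfrac{2t_1'}{3} + 8\delta k$ vertices, and lower-bounding $\sum_j \sqrt{m_j(m_j-1)}$ via Lemma~\ref{lemma:basic_bound} exactly as in Section~\ref{section:soundness} leads to
\[
|VC(G)| \;\le\; \tfrac{2}{3}t_1 + t_2 + 1.62\,t_3 + 1.8\,t_4 + (\sqrt{2}+1)\!\left[\, t_1 + (2-\sqrt{2})(t_2+t_3) + (3-\sqrt{6})\,t_4 - \tfrac{k}{2} + \delta k\,\right] + 8\delta k.
\]

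The remainder is pure arithmetic. Each coefficient of $t_i$ on the right-hand side is a positive constant, and the largest is $a \coloneqq 1.8 + (\sqrt{2}+1)(3-\sqrt{6}) \approx 3.130$, attained by $t_4$. Substituting $t_1+t_2+t_3+t_4 = \beta k$ then gives
\[
|VC(G)| \;\le\; a\,\beta k \;-\; \tfrac{\sqrt{2}+1}{2}\,k \;+\; (\sqrt{2}+9)\,\delta k.
\]
The inequality $a\beta - (\sqrt{2}+1)/2 < 2$ holds precisely when $\beta < (2+(\sqrt{2}+1)/2)/a \approx 1.024$, so for any $\beta < 1.015$ there is a strictly positive slack $\eta(\beta) > 0$. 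Choosing $\delta > 0$ so small that $(\sqrt{2}+9)\delta < \eta(\beta)/2$ and simultaneously $\delta \ge \veps/2$ (both compatible for $\veps$ small enough) then yields $|VC(G)| \le (2-\veps)k$ for an appropriate positive constant $\veps$, finishing the contrapositive.

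The principal difficulty is not conceptual but one of careful bookkeeping: Sections~\ref{section:soundness} and~\ref{section:vertex_cover} were written with $\sum_i t_i = k$ in mind, so every place where $k$ is substituted must be re-examined to distinguish ``$k$ as the VC parameter'' from ``$k$ as the cluster count''. The most delicate point is Lemma~\ref{lemma:single_star_vertex_cover}, whose fallback conclusion uses $k$ in the role of the VC parameter; this fortuitously matches our $(2-\veps)k$ target after picking $\delta \ge \veps/2$. The precise window $\beta < 1.015$ is driven by the dominant coefficient $1.8$ of $t_4$ coming from Lemma~\ref{lemma:non_star_vertex_cover_2} together with the discount afforded by $(\sqrt{2}+1)/2$, so sharpening the vertex-cover bound for non-star clusters of large matching would be the natural route to enlarging the admissible range of $\beta$.
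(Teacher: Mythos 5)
Your proposal is correct and takes essentially the same route as the paper's proof: the identical completeness argument, the same contrapositive soundness argument reusing Lemmas~\ref{lemma:non_star_vertex_cover_1}, \ref{lemma:non_star_vertex_cover_2}, and~\ref{lemma:single_star_vertex_cover} with $t_1+t_2+t_3+t_4=\beta k$, and the same final accounting (the paper merely organizes the arithmetic by writing $-k/2=\tfrac{(\beta-1)k}{2}-\tfrac{\beta k}{2}$ and distributing $-\tfrac{\beta k}{2}$ over the $t_i$, whereas you bound every $t_i$-coefficient by the maximum $a\approx 3.13$ directly; both yield a threshold slightly above $1.015$). No gaps.
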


\begin{proof}
Since the reduction is the same as we discussed in Section~\ref{subsection:kmeans_comparison} and~\ref{section:kmedian_inapproximability}, we keep all notations the same as before. Also, note that Property 1 in this theorem is the same as Property 1 of Theorem~\ref{theorem:reduction_kmedian}. 
Therefore, the proof is also the same as we did in Section~\ref{section:completeness}. 
Now, we directly move to the proof of Property 2.

The proof is almost the same as we gave in Section~\ref{section:soundness}. However, it has some minor differences since we consider the optimal cost with respect to $\beta k$ centers instead of $k$ centers. 
Now, we prove the following contrapositive statement: ``For any constants $1 < \beta < 1.015$ and $\veps>0$, there exists constants $\veps, \delta > 0$ such that  if $OPT(\X,\beta k) < (m-k/2+\delta k)$ then $G$ has a vertex cover of size at most $(2-\veps)k$''. 
Let $\mathcal{C}$ denote an optimal clustering of $\X$ with $\beta k$ centers. 
We classify its optimal clusters into two categories: (1) \textit{star} and (2) \textit{non-star}. Further, we sub-classify the star clusters into the following two sub-categories:
\begin{enumerate}
    \item[(a)] Clusters composed of exactly one edge. Let these clusters be: $P_{1},P_{2},\dotsc,P_{t_{1}}$. 
    \item[(b)] Clusters composed of at least two edges. Let these clusters be: $S_{1},S_{2},\dotsc,S_{t_{2}}$.
\end{enumerate}
    
\noindent Similarly, we sub-classify the non-star clusters into the following two sub-categories:
\begin{enumerate}
    \item[(i)] Clusters with a maximum matching of size two. Let these clusters be: $W_{1}, W_{2}, \dotsc,W_{t_{3}}$ 
    \item[(ii)] Clusters with a maximum matching of size at least three. Let these clusters be: $Y_{1}, Y_{2}, \dotsc,Y_{t_{4}}$  
\end{enumerate}

\noindent Note that $t_{1} + t_{2} + t_{3} + t_{4}$ equals $\beta k$. Suppose, we first compute a vertex cover of all the clusters except the single edge clusters: $P_{1},\dotsc,P_{t_{1}}$. Let that vertex cover be $VC'$.
Now, some vertices in $VC'$ might also cover the edges in $P_{1},\dotsc,P_{t_{1}}$. Suppose there are $t_{1}'$ single edge clusters that remain uncovered by $VC'$. Without loss of generality, we assume that these clusters are $P_{1},\dotsc,P_{t_{1}'}$. By Lemma~\ref{lemma:single_star_vertex_cover}, we can cover these cluster with $(\frac{2t_{1}'}{3} + 8 \delta k) \leq (\frac{2t_{1}}{3} + 8 \delta k)$ vertices; otherwise the graph would have a vertex cover of size at most $(2k - \delta k)$, and the proof of Property 2 would be complete. 
Now, we bound the vertex cover of the entire graph in the following manner.
\begin{align*}
|VC(G)| &\leq  \sum_{i = 1}^{t_{1}}|VC(P_{i})| + \sum_{i = 1}^{t_{2}}|VC(S_{i})| + \sum_{i = 1}^{t_{3}}|VC(W_{i})| + \sum_{i = 1}^{t_{4}}|VC(Y_{i})|\\
&\leq \left( \frac{2t_{1}}{3} + 8\delta k \right) + t_{2} + \sum_{i = 1}^{t_{3}} \left( \del[W_{i}] + 1.62  \right) + \sum_{i = 1}^{t_{4}} \left( \del[Y_{i}] + 1.8  \right), \\
& \hspace{9cm} \textrm{(using Lemmas~\ref{lemma:non_star_vertex_cover_1},~\ref{lemma:non_star_vertex_cover_2}, and~\ref{lemma:single_star_vertex_cover})} \\
&= (0.67)t_{1} + 8\delta k + t_{2} + (1.62)t_{3} + (1.8) t_{4}  + \left( \sqrt{2}+1 \right) \left( \sum_{i = 1}^{t_{3}} \delta(W_{i}) + \sum_{i = 1}^{t_{4}} \delta(Y_{i})
\right)
\end{align*}

\noindent Since the optimal cost $OPT (\X,\beta k) = \mathlarger{\sum}_{j = 1}^{\beta k} \sqrt{m_{j}(m_{j}-1)} + \mathlarger{\sum}_{i=1}^{t_{3}}  \delta(W_{i}) + \mathlarger{\sum}_{i=1}^{t_{4}}  \delta(Y_{i}) \leq m-k/2+\delta k$, we get $\mathlarger{\sum}_{i=1}^{t_{3}}  \delta(W_{i}) + \mathlarger{\sum}_{i=1}^{t_{4}}  \delta(Y_{i}) \leq m-k/2+\delta k - \mathlarger{\sum}_{j = 1}^{\beta k} \sqrt{m_{j}(m_{j}-1)}$. We substitute this value in the previous equation, and get the following inequality:

\begin{align*}
|VC(G)| &\leq  (0.67)t_{1} + 8\delta k + t_{2} + (1.62) t_{3} +(1.8) t_{4} + \left( \sqrt{2}+1 \right) \cdot \left( m-k/2 -\sum_{j=1}^{\beta k} \sqrt{m_{j}(m_{j}-1)} + \delta k \right)
\end{align*}

\noindent Using Lemma~\ref{lemma:basic_bound}, we obtain the following inequalities:
\begin{enumerate}
    \item For $P_{j}$, $\st[m(P_{j})] \geq m(P_{j}) - 1$ since $m(P_{j}) = 1$
    \item For $S_{j}$, $\st[m(S_{j})] \geq m(S_{j}) - (2 - \sqrt{2})$ since $m(S_{j}) \geq 2$
    \item For $W_{j}$, $\st[m(W_{j})] \geq m(W_{j}) - (2 - \sqrt{2})$ since $m(W_{j}) \geq 2$
    \item For $Y_{j}$, $\st[m(Y_{j})] \geq m(Y_{j}) - (3 - \sqrt{6})$ since $m(Y_{j}) \geq 3$
\end{enumerate} 
We substitute these values in the previous equation, and get the following inequality:

\begin{align*}
|VC(G)| &\leq  (0.67)t_{1} + 8\delta k + t_{2} + (1.62) t_{3} +(1.8) t_{4}  + \left( \sqrt{2}+1 \right) \cdot \left( m-k/2  - \sum_{j=1}^{t_{1}} \left( m(P_{j}) - 1\right) + \right.\\
&  \quad \left.  - \sum_{j=1}^{t_{2}} \left( m(S_{j}) - (2-\sqrt{2})\right) -  \sum_{j=1}^{t_{3}} \left( m(W_{j}) - (2-\sqrt{2})\right) - \sum_{j=1}^{t_{4}} \left( m(Y_{j}) - (3-\sqrt{6})\right)  + \delta k \right)
\end{align*}

\noindent Since $m = \mathlarger{\sum}_{j = 1}^{t_{1}} \, \, m(P_{j}) + \mathlarger{\sum}_{j = 1}^{t_{2}} \, \, m(S_{j}) + \mathlarger{\sum}_{j = 1}^{t_{3}} \, \, m(W_{j})  + \mathlarger{\sum}_{j = 1}^{t_{4}} \, \, m(Y_{j}) $, we get the following inequality:

\begin{align*}
|VC(G)| &\leq  (0.67)t_{1} + 8\delta k + t_{2} + (1.62) t_{3} +(1.8) t_{4}  + \left( \sqrt{2}+1 \right) \cdot \Bigg( -k/2 + t_{1} + t_{2} \cdot \left( 2-\sqrt{2} \right) + \Bigg. \\
& \hspace{5cm} \Bigg. + \,\, t_{3} \cdot \left( 2-\sqrt{2} \right) + t_{4} \cdot \left( 3-\sqrt{6} \right) + \delta k \Bigg) \\
&=  (0.67)t_{1} + 8\delta k + t_{2} + (1.62) t_{3} +(1.8) t_{4}  + \left( \sqrt{2}+1 \right) \cdot \Bigg( \frac{(\beta-1)k}{2} - \frac{\beta k}{2} + t_{1} + t_{2} \cdot \left( 2-\sqrt{2} \right) + \Bigg. \\
& \hspace{5cm} \Bigg. + \,\, t_{3} \cdot \left( 2-\sqrt{2} \right) + t_{4} \cdot \left( 3-\sqrt{6} \right) + \delta k \Bigg)
\end{align*}
\noindent Now, we substitute $\beta k = t_{1} + t_{2} + t_{3} + t_{4}$, and obtain the following inequality:
\begin{align*}
|VC(G)| &\leq (0.67)t_{1} + 8\delta k + t_{2} + (1.62) t_{3} +(1.8) t_{4}  + \left( \sqrt{2}+1 \right) \cdot \left(  \frac{(\beta-1)k}{2} + \frac{t_{1}}{2} + \frac{t_{2}}{10} + \frac{t_{3}}{10} + \frac{3t_{4}}{50}  +  \delta k \right) \\ 
&=  (1.88)t_{1} + (1.25)t_{2} + (1.87) t_{3} + (1.95) t_{4} + \left(\sqrt{2}+1\right) \cdot \frac{(\beta-1)k}{2} + \left( \sqrt{2}+9 \right) \delta k \\
&< (1.95) \beta k + \left(\sqrt{2}+1\right) \cdot \frac{(\beta-1)k}{2} + \left( \sqrt{2}+9 \right) \delta k \hspace{3cm} \textrm{(using $t_{3} + t_{4} + t_{1} + t_{2} = \beta k $)}\\
&< (3.16) \beta k - (1.21) k + \left( \sqrt{2}+9 \right) \delta k\\
&\leq (2-\veps)k, \quad \textrm{for $\beta < 1.015$ and appropriately small constants $\veps,\delta>0$}
\end{align*}

\noindent This proves Property 2 and it completes the proof of Theorem~\ref{theorem:bicriteria_reduction_kmedian}.
\end{proof} The following corollary states the main bi-criteria inapproximability result for the $k$-median problem.

\begin{corollary}
There exists a constant $\veps' > 0$ such that for any constant $1 < \beta<1.015$, there is no $(1+\veps',\beta)$-approximation algorithm for the $k$-median problem assuming the Unique Games Conjecture.
\end{corollary}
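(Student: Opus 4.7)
The plan is to derive the bi-criteria inapproximability corollary from Theorem~\ref{theorem:bicriteria_reduction_kmedian} in essentially the same way that Corollary~\ref{corollary:reduction_kmedian} was derived from Theorem~\ref{theorem:reduction_kmedian}. The two ingredients are (i) Theorem~\ref{theorem:VC_UGC}, which under UGC rules out an approximation factor smaller than $2-\veps$ for vertex cover on bounded degree triangle-free graphs, and (ii) a lower bound on $k$ in terms of $m$ that holds for the hard instances, which turns the additive gap of $\delta k$ into a multiplicative gap.

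First I would argue, exactly as in the proof of Corollary~\ref{corollary:reduction_kmedian}, that we may assume $k \ge \frac{m}{2\Delta}$ on the hard \vc instances. This is because a greedy maximal matching in a graph of maximum degree $\Delta$ has size at least $\lceil m/(2\Delta)\rceil$, and any graph whose matching number exceeds $k$ trivially admits no vertex cover of size $k$; such instances can be classified as ``No'' instances in polynomial time and hence discarded from the hard family.

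Next I would invoke Theorem~\ref{theorem:bicriteria_reduction_kmedian}: for any constant $1<\beta<1.015$, there exist constants $\veps,\delta>0$ such that ``Yes'' \vc instances satisfy $OPT(\X,k)\le m-k/2$, whereas ``No'' \vc instances (those with no vertex cover of size $\le (2-\veps)k$) satisfy $OPT(\X,\beta k)\ge m-k/2+\delta k$. Any $(1+\veps',\beta)$-approximation algorithm would output a center set of size $\beta k$ whose cost is at most $(1+\veps')\cdot OPT(\X,k)$. Applying this to a ``Yes'' instance and comparing with the ``No'' lower bound yields a distinguishing procedure whenever $(1+\veps')(m-k/2) < m-k/2+\delta k$, i.e.\ whenever $\veps' < \frac{\delta k}{m-k/2}$.

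Finally, using $k\ge m/(2\Delta)$, I get $\frac{\delta k}{m-k/2}\ge \frac{\delta}{2\Delta-1}=\Omega(\delta/\Delta)=\Omega(\veps)$, which means that choosing $\veps'$ to be a small enough positive constant (independent of the instance but depending on $\beta$ through $\veps,\delta$) produces a gap-preserving separation. Such an algorithm would therefore distinguish ``Yes'' from ``No'' \vc instances in polynomial time, contradicting Theorem~\ref{theorem:VC_UGC}. The main, and essentially only, delicate point is tracking that the constants $\veps,\delta$ coming from Theorem~\ref{theorem:bicriteria_reduction_kmedian} depend on $\beta$ (and blow up as $\beta\to 1.015^-$) but remain strictly positive for each fixed $\beta<1.015$, so that a single $\veps'>0$ can be chosen after $\beta$ is fixed; the rest is the same routine ``additive-to-multiplicative'' conversion used in Corollary~\ref{corollary:reduction_kmedian}.
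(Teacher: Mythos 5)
Your proposal is correct and follows essentially the same route as the paper: reuse the bound $k \ge m/(2\Delta)$ established in the proof of Corollary~\ref{corollary:reduction_kmedian} to convert the additive gap $\delta k$ from Theorem~\ref{theorem:bicriteria_reduction_kmedian} into a multiplicative gap of $1+\Omega(\delta/\Delta)$, and then appeal to Theorem~\ref{theorem:VC_UGC}. (Minor nit: from $k\ge m/(2\Delta)$ one gets $\frac{\delta k}{m-k/2}\ge \frac{\delta}{2\Delta-1/2}$ rather than $\frac{\delta}{2\Delta-1}$, but this does not affect the $\Omega(\delta/\Delta)$ conclusion.)
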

\begin{proof}
In the proof of Corollary~\ref{corollary:reduction_kmedian}, we showed that $k \geq \frac{m}{2\Delta}$ for all the hard \vc instances. 
Therefore, the second property of Theorem~\ref{theorem:bicriteria_reduction_kmedian}, implies that $OPT(\X,\beta k) \geq (m - \frac{k}{2}) + \delta k \geq (1+ \frac{\delta}{2\Delta})\cdot (m - \frac{k}{2})$. 
Thus, the $k$-median problem can not be approximated within any factor smaller than $1 + \frac{\delta}{2\Delta} = 1 + \Omega(\veps)$, with $\beta k$ centers for any $\beta < 1.015$.
\end{proof}

\noindent Now, we prove the bi-criteria inapproximability result corresponding to the $k$-means objective.

\subsection{Bi-criteria Inapproximability: $k$-means}\label{section:bicriteria_kmeans}

Here, we again use the same reduction that we used earlier for the $k$-median problem in Sections~\ref{subsection:kmeans_comparison},~\ref{section:kmedian_inapproximability}, and~\ref{section:bicriteria_kmedian}.
Using this, we establish the following theorem.

\begin{theorem}\label{theorem:bicriteria_reduction_kmeans}
There is an efficient reduction from \vc on bounded degree triangle-free graphs $G$ (with $m$ edges) to Euclidean $k$-means instances $\mathcal{I} = (\X,k)$ that satisfies the following properties:
\begin{enumerate}
    \item If $G$ has a vertex cover of size $k$, then $OPT(\X,k) \leq m-k$
    \item For any $1 < \lambda \leq 2$ and $\beta < \fdfrac{2}{7} \cdot \left( \lambda + \fdfrac{5}{2} \right)$, there exists constants $\veps, \delta > 0$ such that if $G$ has no vertex cover of size $\leq (\lambda - \veps)\cdot k$, then  $OPT(\X,\beta k) \geq m-k+\delta k$. 
\end{enumerate}
\end{theorem}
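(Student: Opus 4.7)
The plan is to mirror the proof of Theorem~\ref{theorem:bicriteria_reduction_kmedian} with the $k$-means cost in place of the $k$-median cost, relying on the vertex-cover bounds established in Awasthi \etal~\cite{hardness:acks15} in place of Lemmas~\ref{lemma:non_star_vertex_cover_1} and~\ref{lemma:non_star_vertex_cover_2}. For Property 1, given a vertex cover $\{v_1,\ldots,v_k\}$ of $G$, I would assign each edge to a cluster centered at one of its covering endpoints, producing $k$ star clusters $S_1,\ldots,S_k$ of sizes $m_1,\ldots,m_k$. A star cluster with $r$ edges corresponds to the vertex set of a regular simplex of side length $\sqrt{2}$ under the reduction, and the argument of Lemma~\ref{lemma:simplex_cost} adapted to squared distances shows that the optimal $1$-means cost of such a simplex, attained at the centroid, is exactly $r-1$. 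Summing, the $k$-means cost of this clustering is $\sum_{i=1}^{k}(m_i-1)=m-k$, which proves Property 1.

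For Property 2, I would argue by contrapositive: if $OPT(\X,\beta k)<m-k+\delta k$, then $G$ admits a vertex cover of size less than $(\lambda-\veps)k$. Let $\mathcal{C}$ be an optimal $\beta k$-clustering. As in Section~\ref{section:bicriteria_kmedian}, I would partition its clusters into four groups: single-edge stars $P_1,\ldots,P_{t_1}$, larger stars $S_1,\ldots,S_{t_2}$, non-star clusters with maximum matching of size two $W_1,\ldots,W_{t_3}$, and non-star clusters with maximum matching of size at least three $Y_1,\ldots,Y_{t_4}$, so that $t_1+t_2+t_3+t_4=\beta k$. Defining the $k$-means extra cost $\delta(F)\equiv\Phi^*(F)-(|F|-1)$ for each non-star cluster, I would invoke the $k$-means analogues of Lemmas~\ref{lemma:non_star_vertex_cover_1},~\ref{lemma:non_star_vertex_cover_2}, and~\ref{lemma:single_star_vertex_cover} from~\cite{hardness:acks15}, which yield per-cluster bounds of the form $|VC(W_i)|\le a_W+b_W\,\delta(W_i)$ and $|VC(Y_i)|\le a_Y+b_Y\,\delta(Y_i)$ for explicit constants.

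Because a star cluster of size $r$ has $k$-means cost exactly $r-1$, the hypothesis $OPT(\X,\beta k)\le m-k+\delta k$ rearranges to
\[
\sum_{i=1}^{t_3}\delta(W_i)+\sum_{i=1}^{t_4}\delta(Y_i)\;\le\;(\beta-1)k+\delta k.
\]
Plugging this into $|VC(G)|\le\sum|VC(P_i)|+\sum|VC(S_i)|+\sum|VC(W_i)|+\sum|VC(Y_i)|$, substituting the per-cluster bounds together with the $k$-means analogue of Lemma~\ref{lemma:single_star_vertex_cover} for the single-edge clusters, and using $t_1+t_2+t_3+t_4=\beta k$ to eliminate one variable, I would arrive at an inequality of the form $|VC(G)|\le\alpha\,\beta k+\alpha'\,k+O(\delta k)$ for explicit constants $\alpha,\alpha'$. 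Forcing this bound to stay below $(\lambda-\veps)k$ yields the admissible tradeoff, which by design is exactly $\beta<\frac{2}{7}\bigl(\lambda+\frac{5}{2}\bigr)$; note that $\lambda=2$ recovers the $\beta<9/7\approx 1.28$ bound stated in Theorem~\ref{theorem:bicriteria_kmeans}, and $\lambda=1.36$ (Theorem~\ref{theorem:VC_PNP}) recovers the $\beta<1.1$ bound.

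The main obstacle will be the careful bookkeeping of the coefficients so that the final linear combination in $t_1,t_2,t_3,t_4$, subject to $t_1+t_2+t_3+t_4=\beta k$ and to the total extra-cost budget $(\beta-1)k+\delta k$, yields the claimed $\lambda$-$\beta$ tradeoff rather than a weaker one. A secondary difficulty is translating the $k$-median-specific vertex-cover bounds of Section~\ref{section:vertex_cover} into their $k$-means analogues: this requires redoing the decomposition of Section~\ref{section:vertex_cover} under squared distances, for which the fundamental non-star graphs $3$-$P_2$, $A_n$, and $L_n$ still admit closed-form or decomposable $1$-means costs. Much of this work is already available in~\cite{hardness:acks15,hardness:2017_lee_schimdt}, so the novelty lies in combining the per-cluster bounds with the bi-criteria slack $(\beta-1)k$ to obtain the explicit threshold $\beta<\frac{2}{7}\bigl(\lambda+\frac{5}{2}\bigr)$.
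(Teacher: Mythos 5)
Your completeness argument and the overall skeleton of the soundness argument (contrapositive, rearranging $OPT(\X,\beta k)\le m-k+\delta k$ into the extra-cost budget $\sum_i\delta(F_i)\le(\beta-1)k+\delta k$, then summing per-cluster vertex-cover bounds) match the paper. But the soundness proof as proposed has a genuine gap: you never identify the per-cluster constants, and you assert that the final arithmetic yields $\beta<\frac{2}{7}(\lambda+\frac{5}{2})$ ``by design'' without deriving it. The threshold comes from one specific bound that your proposal is missing. Lemma 4.8 of \cite{hardness:acks15} gives, for every non-star cluster $F$, an edge $(u,v)$ with $d_F(u)+d_F(v)\ge |F|+1-\delta(F)$ together with $\delta(F)\ge 2/3$ (the paper notes the published statement only claims $1/2$, but the proof gives $2/3$, and this improvement is what makes the constant work). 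Picking $u,v$ and one endpoint of each of the at most $\delta(F)$ uncovered edges gives $|VC(F)|\le 2+\delta(F)\le 1+\frac{5}{2}\delta(F)$, where the last step uses $\delta(F)\ge 2/3$. With star clusters contributing one vertex each, $|VC(G)|\le \beta k+\frac{5}{2}\sum_i\delta(F_i)\le \beta k+\frac{5}{2}(\beta-1)k+\frac{5}{2}\delta k$, and requiring $\frac{7}{2}\beta-\frac{5}{2}<\lambda$ is exactly $\beta<\frac{2}{7}\bigl(\lambda+\frac{5}{2}\bigr)$. Without the $1+\frac{5}{2}\delta(F)$ bound (and the $2/3$ lower bound on $\delta(F)$ that justifies it), the computation does not close.

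A secondary point: the route you propose is considerably more complicated than what is needed and is not the paper's route. The paper's $k$-means soundness proof classifies clusters only into star and non-star; it does not use the four-way classification into $P_i,S_i,W_i,Y_i$, does not need any analogue of Lemma~\ref{lemma:single_star_vertex_cover}, and does not redo the decomposition machinery of Section~\ref{section:vertex_cover} for squared distances. All of that apparatus exists in the $k$-median case only because the $1$-median cost of a star is $\sqrt{r(r-1)}$ rather than exactly $r-1$, forcing a delicate accounting of the $-k/2$ term across cluster types. For $k$-means the star cost is exactly $r-1$, so the budget identity is clean and a single uniform non-star bound suffices. If you insisted on carrying out your four-way plan, you would have to re-derive all the constants for squared distances and there is no reason to expect them to reproduce $\frac{2}{7}(\lambda+\frac{5}{2})$ rather than a weaker threshold.
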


\noindent This theorem is simply an extension of the result of Awasthi {\it et al.}~\cite{hardness:acks15} to the bi-criteria setting. Now, let us prove this theorem.

\subsubsection{Completeness}
Note that the proof of completeness is already given in~\cite{hardness:acks15}.
Therefore, we just describe the main components of the proof for the sake of clarity. 
To understand the proof, let us define some notations used in~\cite{hardness:acks15}. 
Suppose $F$ is a subgraph of $G$. 
For a vertex $v \in V(F)$, let $d_{F}(v)$ denote the number of edges in $F$ that are incident on $v$. 
Note that, the optimal center for $1$-means problem is simply the centroid of the point set. 
Therefore, we can compute the optimal $1$-means cost of $F$.
The following lemma states the optimal $1$-means cost of $F$.

\begin{lemma}[Claim 4.3~\cite{hardness:acks15}]
Let $F$ be a subgraph of $G$ with $r$ edges. Then, the optimal $1$-means cost of $F$ is $\sum_{v} d_{F}(v) \left( 1- \frac{d_{F}(v)}{r} \right)$
\end{lemma}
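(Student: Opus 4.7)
The plan is to exploit the well-known fact that for the $1$-means problem the optimal center is the centroid of the point set. So I would first compute the centroid of $\X(F) = \{x_e : e \in E(F)\}$ explicitly. Because $x_e = x_i + x_j$ for $e = (i,j)$, and each $x_i$ is a standard basis vector, the $v$-th coordinate of the centroid $c^*$ is the fraction of edges of $F$ that are incident on $v$, i.e., $c^*_v = d_F(v)/r$ for $v \in V(F)$, and $0$ elsewhere.

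Next I would compute $\|x_e - c^*\|^2$ coordinate by coordinate, for each edge $e = (i,j) \in E(F)$. The coordinate $v$ of $x_e - c^*$ equals $1 - d_F(v)/r$ if $v \in \{i,j\}$, and $-d_F(v)/r$ otherwise. Squaring and summing coordinates gives
\[
\|x_e - c^*\|^2 = \bigl(1 - d_F(i)/r\bigr)^2 + \bigl(1 - d_F(j)/r\bigr)^2 + \sum_{v \in V(F)\setminus\{i,j\}} \bigl(d_F(v)/r\bigr)^2.
\]

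Then I would sum this expression over all $e \in E(F)$ and reorganize by vertex. For a fixed vertex $v$, the term $(1-d_F(v)/r)^2$ is contributed once per edge incident on $v$ (so $d_F(v)$ times), while the term $(d_F(v)/r)^2$ is contributed once per edge not incident on $v$ (so $r - d_F(v)$ times). Thus the total $1$-means cost equals
\[
\sum_{v \in V(F)} \Bigl[ d_F(v)\bigl(1 - d_F(v)/r\bigr)^2 + \bigl(r - d_F(v)\bigr)\bigl(d_F(v)/r\bigr)^2 \Bigr],
\]
which telescopes by elementary algebra to $\sum_{v} d_F(v)\bigl(1 - d_F(v)/r\bigr)$, as claimed.

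No step of this argument is genuinely difficult: the only subtlety worth being careful about is the double-counting in the re-indexing step, where one must correctly split the sum over $e \in E(F)$ into "edges incident on $v$" versus "edges not incident on $v$". If this bookkeeping is set up cleanly, the closing identity $d(1-d/r)^2 + (r-d)(d/r)^2 = d(1-d/r)$ is immediate by expanding the square. So the main obstacle is essentially notational; the substance of the lemma is just that the centroid is the optimal $1$-means center together with a direct computation.
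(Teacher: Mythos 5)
Your proof is correct, and it is exactly the standard argument: the paper itself does not reprove this lemma (it cites Claim 4.3 of Awasthi et al.\ directly), and the intended derivation is precisely what you give --- the optimal $1$-means center is the centroid, whose $v$-th coordinate is $d_F(v)/r$, followed by a per-vertex regrouping of $\sum_e \|x_e - c^*\|^2$ and the identity $d(1-d/r)^2 + (r-d)(d/r)^2 = d(1-d/r)$. Your bookkeeping of the $d_F(v)$ incident versus $r - d_F(v)$ non-incident edges is right, so there is nothing to add.
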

 
\noindent The following corollary bounds the optimal $1$-means cost of a star cluster. This corollary is implicitly stated in the proof of Claim 4.4 of \cite{hardness:acks15}.
\begin{corollary}\label{corollary:star_cost_kmedian}
The optimal $1$-means cost of a star cluster with $r$ edges is $r-1$.
\end{corollary}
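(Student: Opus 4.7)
The plan is to derive this corollary as an immediate consequence of the preceding lemma by substituting the specific degree sequence of a star graph into the formula $\sum_{v} d_{F}(v)\bigl(1 - \tfrac{d_F(v)}{r}\bigr)$.

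First, I would recall the structure of a star cluster with $r$ edges: there is a distinguished center vertex $c$ that is an endpoint of all $r$ edges, and there are $r$ distinct leaf vertices $u_1, \ldots, u_r$, each of which is incident to exactly one edge. Hence $d_F(c) = r$ and $d_F(u_i) = 1$ for each leaf $u_i$, and these are the only vertices contributing to the sum.

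Next, I would plug these values into the formula from the preceding lemma:
\[
\sum_{v} d_F(v)\left(1 - \frac{d_F(v)}{r}\right) = r\left(1 - \frac{r}{r}\right) + \sum_{i=1}^{r} 1 \cdot \left(1 - \frac{1}{r}\right) = 0 + r\cdot\frac{r-1}{r} = r - 1.
\]
This gives the claimed optimal $1$-means cost of $r-1$.

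There is essentially no obstacle here; the corollary is a one-line specialization of the general formula for the $1$-means cost of an arbitrary subgraph $F$. The only thing worth remarking on is that the center vertex contributes zero to the sum (its term vanishes because $d_F(c)/r = 1$), so the entire cost is carried by the $r$ leaves, each contributing $1 - 1/r$.
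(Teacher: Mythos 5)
Your computation is correct and is exactly the intended one-line specialization of the preceding lemma's formula $\sum_{v} d_{F}(v)\bigl(1 - \tfrac{d_F(v)}{r}\bigr)$ to the star's degree sequence (one vertex of degree $r$, whose term vanishes, and $r$ leaves of degree $1$); the paper itself omits the calculation and merely cites it as implicit in the earlier work, so there is nothing to add.
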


\noindent Using the above corollary, we give the proof of completeness.
Let $V = \{v_{1},\dotsc,v_{k}\}$ be a vertex cover of $G$.  Let $S_{i}$ denote the set of edges covered by $v_{i}$. If an edge is covered by two vertices $i$ and $j$, then we arbitrarily keep the edge either in $S_{i}$ or $S_{j}$. Let $m_{i}$ denote the number of edges in $S_{i}$. We define $\{\X(S_{1}),\dotsc,\X(S_{k})\}$ as a clustering of the point set $\X$. Now, we show that the cost of this clustering is at most $m-k$. 
Note that each $S_{i}$ forms a star graph with its edges sharing the common vertex $v_{i}$. The following sequence of inequalities bound the optimal $k$-means cost of $\X$.
\[
OPT(\X,k) \leq \sum_{i = 1}^{k} \Phi^{*}(S_{i})  \stackrel{\tiny{(Corollary~\ref{corollary:star_cost_kmedian})}}{=} \sum_{i = 1}^{k} \left( m(S_{i}) -1 \right) = m - k.
\]

\subsubsection{Soundness}
For the proof of soundness, we prove the following contrapositive statement: ``For any constant $1 < \lambda \leq 2$ and $\beta < \frac{2}{7} \cdot \left( \lambda + \frac{5}{2} \right)$, there exists constants $\veps, \delta > 0$ such that if $OPT(\beta k) \leq (m-k+\delta k)$ then $G$ has a vertex cover of size at most $(\lambda-\veps)k$, for $\veps = \Omega(\delta)$.''
Let $\mathcal{C}$ denote an optimal clustering of $\X$ with $\beta k$ centers. 
We classify its optimal clusters into two categories: (1) \textit{star} and (2) \textit{non-star}. 
Suppose there are $t_{1}$ star clusters: $S_{1},\dotsc,S_{t_{1}}$, and $t_{2}$ non-star clusters: $F_{1},F_{2},\dotsc,F_{t_{2}}$. Note that $t_{1} + t_{2}$ equals $\beta k$. The following lemma bounds the optimal $1$-means cost of a non-star cluster.

\begin{lemma}[Lemma 4.8~\cite{hardness:acks15}]\label{lemma:cost_non_star_kmeans}
The optimal $1$-means cost of any non-star cluster $F$ with $m$ edges is at least $m-1+\delta(F)$, where $\delta(F) \geq \frac{2}{3}$. Furthermore, there is an edge $(u,v) \in E(F)$ such that $d_{F}(u) + d_{F}(v) \geq m + 1 -\delta(F)$.
\end{lemma}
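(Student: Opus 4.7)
The plan is to build on the closed-form expression from Claim 4.3: since the optimal $1$-means center is the centroid, $\Phi^{*}(F) = \sum_{v} d_{F}(v)\bigl(1 - d_{F}(v)/m\bigr) = 2m - \frac{1}{m}\sum_{v} d_{F}(v)^{2}$, where I used the handshaking identity $\sum_{v} d_{F}(v) = 2m$. Using the double-counting identity $\sum_{v} d_{F}(v)^{2} = \sum_{(u,v) \in E(F)}\bigl(d_{F}(u) + d_{F}(v)\bigr)$, this rewrites cleanly as $\delta(F) = \Phi^{*}(F) - (m-1) = m + 1 - \frac{1}{m}\sum_{(u,v) \in E(F)}\bigl(d_{F}(u) + d_{F}(v)\bigr)$. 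The ``furthermore'' part then drops out for free, since the maximum of $d_{F}(u) + d_{F}(v)$ over edges is at least the average, which equals $m + 1 - \delta(F)$.

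To bound $\delta(F)$ from below, I would first observe that for every edge $(u,v) \in E(F)$, the edge $(u,v)$ itself contributes $2$ to $d_{F}(u) + d_{F}(v)$ while each of the remaining $m-1$ edges contributes at most $1$, so $d_{F}(u) + d_{F}(v) \leq m + 1$, with equality iff every edge of $F$ is incident on $u$ or $v$; in that case $F$ is a ``double-star'' $L_{p,q}$ (with central edge $(u,v)$ and $p, q \geq 0$ pendant leaves on the two sides). I would then split on this dichotomy. If $F$ is not a double-star (and not a star by hypothesis), then $d_{F}(u) + d_{F}(v) \leq m$ for every edge, whence the average is at most $m$ and $\delta(F) \geq 1$. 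If $F$ is a double-star $L_{p,q}$ with $p, q \geq 1$ and $m = p + q + 1$, a direct computation gives $\sum_{v} d_{F}(v)^{2} = (p+1)^{2} + (q+1)^{2} + (p + q) = m^{2} + m - 2pq$, hence $\delta(F) = 2pq/m$. Since $p + q = m - 1$ with $p, q \geq 1$ forces $pq \geq m - 2$, we get $\delta(F) \geq 2(m-2)/m$; this equals $2/3$ exactly at $m = 3$ (the tight instance $L_{1,1} \cong P_{4}$) and is at least $1$ for $m \geq 4$.

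The main obstacle is pinning down the structural characterization of the equality case $d_{F}(u) + d_{F}(v) = m+1$, which isolates the double-star family $L_{p,q}$ as the only regime where the naive bound is attained and therefore forces the separate exact computation of $\delta(L_{p,q}) = 2pq/m$. Once the extremal family is identified, the remaining bookkeeping reduces to the elementary edge-average identity and the arithmetic minimization of $pq$ subject to $p + q = m - 1$ and $p, q \geq 1$.
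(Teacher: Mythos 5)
Your proof is correct. Note first that the paper does not actually prove this lemma --- it is imported verbatim from Lemma 4.8 (and Claim 4.3) of Awasthi et al., with only a remark about the constant $2/3$ versus $1/2$ --- so what you have written is a self-contained reconstruction of the cited result rather than an alternative to an argument in this paper. Your key identity $\delta(F)=m+1-\frac{1}{m}\sum_{(u,v)\in E(F)}\bigl(d_F(u)+d_F(v)\bigr)$ is right, it yields the ``furthermore'' clause immediately by max-versus-average, and the dichotomy is sound: the bound $d_F(u)+d_F(v)\le m+1$ holds with equality iff $\{u,v\}$ covers $F$, and triangle-freeness (which holds for all clusters in this reduction, and which you use tacitly) forces the equality case to be exactly $L_{p,q}$ with all leaves of degree one. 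The arithmetic checks out: $\delta(L_{p,q})=2pq/m$ with $pq\ge m-2$ gives $\delta\ge 2(m-2)/m$, tight at $P_4=L_{1,1}$ where $\delta=2/3$ exactly, and $\delta\ge 1$ in the non-double-star case (e.g.\ $2$-$P_2$). One small observation: your tight example $P_4$ shows the inequality $\delta(F)\ge 2/3$ cannot be strict in general, which is consistent with the lemma as stated here but mildly contradicts the paper's parenthetical claim that the original proof establishes $\delta(F)>2/3$.
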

In the actual statement of the lemma in \cite{hardness:acks15}, the authors mentioned a weak bound of $\delta(F) > 1/2$. 
However, in the proof of their lemma they have shown $\delta(F) > 2/3 > 1/2$. 
This difference does not matter when we consider inapproximability of the $k$-means problem. 
However, this difference improves the $\beta$ value in bi-criteria inapproximability of the $k$-means problem.

\begin{corollary}[\cite{hardness:acks15}]\label{corollary:vertex_cover_non_star_kmeans}
Any non-star cluster $F$ has a vertex cover of size at most $1 +\fdfrac{5}{2} \cdot \delta(F)$.
\end{corollary}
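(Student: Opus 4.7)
The plan is to invoke Lemma~\ref{lemma:cost_non_star_kmeans} directly and build a vertex cover greedily. Specifically, the second part of that lemma guarantees an edge $(u,v) \in E(F)$ with $d_F(u) + d_F(v) \geq m + 1 - \delta(F)$. I would begin by placing both endpoints $u$ and $v$ into the vertex cover, and then bound the number of edges that remain uncovered.

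Since the edge $(u,v)$ is counted once in $d_F(u)$ and once in $d_F(v)$, the number of distinct edges of $F$ incident on $\{u,v\}$ equals $d_F(u) + d_F(v) - 1 \geq m - \delta(F)$. Consequently, after picking $u$ and $v$, at most $\delta(F)$ edges of $F$ remain uncovered. For each such remaining edge I would pick an arbitrary endpoint to add to the cover, using at most $\delta(F)$ additional vertices. This yields a vertex cover of $F$ of size at most $2 + \delta(F)$.

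It remains to check $2 + \delta(F) \leq 1 + \tfrac{5}{2}\delta(F)$, which is equivalent to $\delta(F) \geq \tfrac{2}{3}$. This is precisely the lower bound on $\delta(F)$ supplied by Lemma~\ref{lemma:cost_non_star_kmeans}, so the inequality holds for every non-star cluster $F$. Strictly speaking the count of leftover edges is a nonnegative integer rather than $\delta(F)$ itself, but since it is bounded above by $\delta(F)$ the same vertex cover bound goes through.

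There is really no hard part here; the corollary is essentially a packaging of Lemma~\ref{lemma:cost_non_star_kmeans}. The only thing to watch is that the slack factor $\tfrac{5}{2}$ in front of $\delta(F)$ is chosen exactly to absorb the constant offset $2$ using the guaranteed lower bound $\delta(F) \geq \tfrac{2}{3}$; this is why the improvement of the lower bound on $\delta(F)$ from $\tfrac{1}{2}$ (as originally stated in \cite{hardness:acks15}) to $\tfrac{2}{3}$ matters, and it is what gives the improved bi-criteria parameter for $k$-means later in the argument.
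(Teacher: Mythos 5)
Your proposal is correct and matches the paper's own proof essentially line for line: pick both endpoints of the edge guaranteed by Lemma~\ref{lemma:cost_non_star_kmeans}, cover the at most $\delta(F)$ leftover edges with one vertex each to get a cover of size $2+\delta(F)$, and use $\delta(F)\geq 2/3$ to conclude $2+\delta(F)\leq 1+\tfrac{5}{2}\delta(F)$. Your added remarks about the double-counting of $(u,v)$ and the integrality of the leftover edge count are harmless refinements of the same argument.
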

\begin{proof}
Suppose $(u,v)$ be an edge in $F$ that satisfies the property: $d_{F}(u) + d_{F}(v) \geq m + 1 -\delta(F)$, by Lemma~\ref{lemma:cost_non_star_kmeans}. This means that $u$ and $v$ covers at least $m(F) - \delta(F)$ edges of $F$. We pick $u$ and $v$ in the vertex cover, and for the remaining $\delta(F)$ edges we pick one vertex per edge. Therefore, $F$ has a vertex cover of size at most $2 + \delta(F)$. Since $\delta(F) \geq \fdfrac{2}{3}$, by Lemma~\ref{lemma:cost_non_star_kmeans}, we get $2 + \delta(F) \leq 1 + \frac{5}{2} \cdot \delta(F)$. Hence, $F$ has a vertex cover of size at most $1 + \frac{5}{2} \cdot \delta(F)$. This proves the corollary.
\end{proof}

\noindent Now, the following sequence of inequalities bound the vertex cover size of the enire graph $G$.
\begin{align*}
    |VC(G)| &\leq \sum_{i = 1}^{t_{1}} |VC(S_{i})| + \sum_{i = 1}^{t_{2}} |VC(F_{i})| \\
    &\leq t_{1} + \sum_{i = 1}^{t_{2}} \left( 1 + \frac{5}{2} \cdot \delta(F_{i})\right) \quad \textrm{(using Corollary~\ref{corollary:vertex_cover_non_star_kmeans})}\\
    &= t_{1} + t_{2} + \frac{5}{2} \cdot \sum_{i = 1}^{t_{2}} \delta(F_{i})
\end{align*}

\noindent Since the optimal $k$-means cost $OPT(\X,\beta k) = \mathlarger{\sum}_{i = 1}^{t_{1}} \left( m(S_{i}) - 1 \right) + \mathlarger{\sum}_{i = 1}^{t_{2}} \left( m(F_{i}) - 1 + \delta (F_{i})\right) \leq m - k + \delta k$, and $t_{1}+t_{2} = \beta k$. Therefore, $\mathlarger{\sum}_{i = 1}^{t_{2}}\, \delta(F_{i}) \leq (\beta-1)k + \delta k$. On substituting this value in the previous equation, we get the following inequality:

\begin{align*}
    |VC(G)| &\leq t_{1} + t_{2} + \frac{5}{2} \cdot (\beta - 1)k + \frac{5}{2} \cdot \delta k \\
    &= \beta k + \frac{5}{2} \cdot (\beta - 1)k + \frac{5}{2} \cdot \delta k, \quad \textrm{($\because t_{1} + t_{2} = \beta k$)}\\
    &\leq (\lambda-\veps)k, \quad \textrm{for $\beta < \fdfrac{2}{7} \cdot \left( \lambda + \fdfrac{5}{2} \right)$ and appropriately small constants $\veps, \delta > 0$}
\end{align*}
This proves the soundness condition and it completes the proof of Theorem~\ref{theorem:bicriteria_reduction_kmeans}. Based on this theorem, the following corollary states the main bi-criteria inapproximability result for the $k$-means problem.

\begin{corollary}
For any constant $1 < \beta<1.28$, there exists a constant $\veps' > 0$ such that there is no $(1+\veps',\beta)$-approximation algorithm for the $k$-means problem assuming the Unique Games Conjecture. Moreover, the same result holds for any $1 < \beta < 1.1$ under the assumption that $\mathsf{P} \neq \mathsf{NP}$.
\end{corollary}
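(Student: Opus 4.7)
The plan is to derive this corollary directly from Theorem~\ref{theorem:bicriteria_reduction_kmeans} by combining it with the appropriate Vertex Cover hardness result and then converting the additive gap $\delta k$ into a multiplicative $(1+\veps')$ gap using the bounded degree of the hard \vc instances. The structure of the argument parallels the proof of Corollary~\ref{corollary:reduction_kmedian}, except that the extra freedom coming from $\beta$ centers translates to an admissible range of $\lambda$ via the condition $\beta < \frac{2}{7}(\lambda+\frac{5}{2})$ in Theorem~\ref{theorem:bicriteria_reduction_kmeans}.

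First I would handle the UGC part. By Theorem~\ref{theorem:VC_UGC}, for any $\eta>0$ it is UGC-hard to approximate \vc on bounded degree triangle-free graphs within a factor smaller than $2-\eta$, so we may instantiate $\lambda = 2-\eta$ in Theorem~\ref{theorem:bicriteria_reduction_kmeans}. Observe that $\frac{2}{7}(\lambda+\frac{5}{2}) = \frac{9-2\eta}{7}$, which approaches $\frac{9}{7} > 1.28$ as $\eta \to 0$. Hence for any fixed $\beta < 1.28$ one can pick $\eta>0$ small enough so that $\beta < \frac{2}{7}(\lambda+\frac{5}{2})$, and Property 2 of Theorem~\ref{theorem:bicriteria_reduction_kmeans} applies with some $\veps,\delta>0$. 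An identical calculation handles the $\mathsf{P}\neq \mathsf{NP}$ part: Theorem~\ref{theorem:VC_PNP} gives $\lambda = 1.36$, and $\frac{2}{7}(1.36+\frac{5}{2}) = \frac{7.72}{7} > 1.1$, so every $\beta<1.1$ is admissible.

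Next I would convert the additive gap into a multiplicative one, exactly as in Corollary~\ref{corollary:reduction_kmedian}. Since \vc is hard on bounded degree graphs (of degree $\Delta$), a simple greedy argument shows that any hard instance satisfies $k \geq \lceil \frac{m}{2\Delta}\rceil$: otherwise the matching-lower-bound on vertex cover would rule the graph out as a ``No'' instance in polynomial time. Consequently $m-k \leq (2\Delta-1)\,k$. Combining this with Property 2 of Theorem~\ref{theorem:bicriteria_reduction_kmeans}, any $(1+\veps',\beta)$-approximation algorithm would have to distinguish $OPT(\X,k)\leq m-k$ from $OPT(\X,\beta k) \geq (m-k)+\delta k \geq \left(1+\frac{\delta}{2\Delta-1}\right)(m-k)$, so it suffices to take $\veps' < \frac{\delta}{2\Delta-1} = \Omega(\veps)$ to contradict the appropriate vertex cover hardness.

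There is essentially no conceptual obstacle here; the work is purely bookkeeping. The only subtlety worth double-checking is that Property 2 of Theorem~\ref{theorem:bicriteria_reduction_kmeans} produces a strictly positive $\delta$ as long as the strict inequality $\beta < \frac{2}{7}(\lambda+\frac{5}{2})$ is preserved, which forces a slightly non-tight choice of $\eta$ (respectively, of $\beta$) so that the constants $\veps$ and $\delta$ can be fixed before invoking the contrapositive. With that choice in place, the remaining inequalities for both the UGC and $\mathsf{P}\neq \mathsf{NP}$ parts follow immediately, completing the corollary.
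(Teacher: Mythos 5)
Your proposal is correct and follows essentially the same route as the paper: invoke Property 2 of Theorem~\ref{theorem:bicriteria_reduction_kmeans} with $\lambda=2$ (UGC, via Theorem~\ref{theorem:VC_UGC}) or $\lambda=1.36$ ($\mathsf{P}\neq\mathsf{NP}$, via Theorem~\ref{theorem:VC_PNP}), check $\beta<\frac{2}{7}(\lambda+\frac{5}{2})$, and convert the additive gap $\delta k$ into a multiplicative $1+\Omega(\delta/\Delta)$ gap using $k\geq m/(2\Delta)$ for hard instances. Your slightly more careful instantiation $\lambda=2-\eta$ and the constant $2\Delta-1$ in place of $2\Delta$ are immaterial refinements of the same argument.
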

\begin{proof}
Suppose \vc can not be approximated to any factor smaller than $\lambda - \veps$, for some constant $\veps, \lambda >0$.
In the proof of Corollary~\ref{corollary:reduction_kmedian}, we showed that $k \geq \frac{m}{2\Delta}$ for all the hard \vc instances. 
In that case, the second property of Theorem~\ref{theorem:bicriteria_reduction_kmeans} implies that $OPT(\X,\beta k) \geq (m -k) + \delta k \geq (1+ \frac{\delta}{2\Delta})\cdot (m - k)$. 
Thus, the $k$-means problem can not be approximated within any factor smaller than $1 + \frac{\delta}{2\Delta} = 1 + \Omega(\veps)$, with $\beta k$ centers. Now, let us compute the value of $\beta$ based on the value of $\lambda$. We know that $\beta < \fdfrac{2}{7} \cdot \left( \lambda + \fdfrac{5}{2} \right)$. Consider the following two cases:
\begin{itemize}
    \item By Theorem~\ref{theorem:VC_UGC}, \vc is hard to approximate within any factor smaller than $2-\veps$ on bounded degree triangle-free graphs assuming the Unique Games Conjecture. Hence $\lambda = 2$ and thus $\beta  < 1.28$ assuming the Unique Game Conjecture.
    \item By Theorem~\ref{theorem:VC_PNP}, \vc is hard to approximate within any factor smaller than $1.36$ on bounded degree triangle-free graphs assuming $\mathsf{P} \neq \mathsf{NP}$. Hence $\lambda = 1.36$ and thus $\beta  < 1.1$ assuming $\mathsf{P} \neq \mathsf{NP}$.
\end{itemize}
This completes the proof of the corollary.
\end{proof}

\section{Conclusion}
\label{section:conclusion}

We showed that the Euclidean $k$-median problem cannot be approximated to any factor smaller than $(1+\veps)$ for some constant $\veps>0$ assuming UGC.
In addition, we gave the bi-criteria hardness of approximation results for the Euclidean $k$-median and $k$-means problems.
Besides trying to improve the inapproximability bounds, one interesting future direction is to check if the Euclidean $k$-means/$k$-median problems are hard to approximate in the bi-criteria setting with $2k$ or more centers. 
It would also be interesting to try designing a $(1+\veps,\beta)$-approximation algorithm for $k$-means and $k$-median, for arbitrary small constant $\veps>0$ and some constant $\beta>1$ that is independent of $\veps$.

\bibliographystyle{alpha}
\bibliography{references}
\end{document}